\newlength{\querylen}
\theoremstyle{plain}
\newtheorem{defi}{Definition}
\newtheorem{theorem}{Theorem}
\newtheorem{prop}{Proposition}
\newtheorem{ex}{Example}
\newtheorem{lemma}{Lemma}
\newtheorem{?}{Question}
\newcommand{\branch}[4]{
\left\{
	\begin{array}{ll}
		#1  & \mbox{if } #2 \\
		#3 & \mbox{if } #4
	\end{array}
\right.
}
\newcommand{\x}{\textbf{x}}
\newcommand{\y}{\textbf{y}}
\newcommand{\btheta}{\boldsymbol{\theta}}
\newcommand{\bal}[1]{b_{#1}}
\newcommand{\smooth}[1]{c_{#1}}
\newcommand{\Qgs}{Q_{g,\sigma}}
\newcommand{\Id}{\hbox{Id}}
\newcommand{\RR}{\textrm{R}}
\newcommand{\txy}{t_{xy}}
\newcommand{\tyx}{t_{yx}}
\newcommand{\Zgs}{Z_{g,\sigma}}
\newcommand{\tzero}{t_0}
\newcommand{\Ks}{K_{\sigma}}
\newcommand{\sX}{\mathcal{X}}
\newcommand{\sXn}{\mathcal{X}^{(n)}}
\newcommand{\R}{\mathbb{R}}
\newcommand{\N}{\mathbb{N}}
\newcommand{\E}{\mathbb{E}}
\newcommand{\I}{\mathbb{I}}
\newcommand{\1}{\mathbbm{1}}
\newcommand{\var}{\hbox{var}}
\newcommand{\bx}{\textbf{x}}
\newcommand{\X}{\textbf{X}}
\newcommand{\by}{\textbf{y}}
\newcommand{\e}{\textbf{e}}
\newcommand{\Sn}{\mathcal{S}_n}
\renewcommand\subsection{\@startsection{subsection}{2}{\z@}%
                                     {-3.25ex\@plus -1ex \@minus -.2ex}%
                                     {-1.5ex \@plus .2ex}%
                                     {\normalfont\normalsize\bfseries}}
\renewcommand\subsubsection{\@startsection{subsubsection}{3}{\z@}%
                                     {-3.25ex\@plus -1ex \@minus -.2ex}%
                                     {-1.5ex \@plus .2ex}%
                                     {\normalfont\normalsize\bfseries}}
\begin{document}
\title{Informed proposals for local MCMC in discrete spaces}
\author{Giacomo Zanella}
\maketitle
\begin{abstract}

There is a lack of methodological results to design efficient Markov chain Monte Carlo (MCMC) algorithms for statistical models with discrete-valued high-dimensional parameters.
Motivated by this consideration, we propose a simple framework for the design of informed MCMC proposals (i.e.\ Metropolis-Hastings proposal distributions that appropriately incorporate local information about the target) which is naturally applicable to both discrete and continuous spaces.
We explicitly characterize the class of optimal proposal distributions under this framework, which we refer to as \emph{locally-balanced} proposals, and prove their Peskun-optimality in high-dimensional regimes.
The resulting algorithms are straightforward to implement in discrete spaces and provide orders of magnitude improvements in efficiency compared to alternative MCMC schemes, including discrete versions of Hamiltonian Monte Carlo.
Simulations are performed with both simulated and real datasets, including a detailed application to Bayesian record linkage. 
A direct connection with gradient-based MCMC suggests that locally-balanced proposals may be seen as a natural way to extend the latter to discrete spaces.

\end{abstract}

\section{Introduction}
Markov chain Monte Carlo (MCMC) algorithms are one the most widely used methodologies 
to sample from complex and intractable probability distributions, especially in the context of Bayesian statistics \citep{Robert2004}.
Given a distribution of interest $\Pi(dx)$ defined on some measurable space $\sX$, 
MCMC methods simulate
a Markov chain $\{X_t\}_{t=1}^\infty$ having $\Pi$ as stationary distribution and then use the states visited by $X_t$ as Monte Carlo samples from $\Pi$. 
Under mild assumptions, the Ergodic Theorem guarantees that the resulting sample averages are consistent estimators for arbitrary expectations under $\Pi$.
Many MCMC schemes used in practice fall within the Metropolis-Hastings (MH) framework \citep{Metropolis1953,Hastings1970}.
Given a current state $x\in\sX$, the MH algorithm samples a proposed state $y$ according to some proposal distribution $Q(x,\cdot)$ and then accepts it with probability $a(x,y)=\min\left\{1,\frac{\Pi(dy)Q(y,dx)}{\Pi(dx)Q(x,dy)}\right\}$ or otherwise rejects it and stays at $x$. 
The resulting transition kernel
\begin{equation*}
P(x,dy)=Q(x,dy)+\delta_x(dy)\int_{\sX}(1-a(x,z))Q(x,dz)
\end{equation*}
is $\Pi$-reversible and can be used for MCMC purposes. 
Although the MH algorithm can be applied to virtually any target distribution, its efficiency depends drastically on the proposal distribution $Q$ and its interaction with the target $\Pi$.
Good choices of $Q$ will speed up the Markov chain's convergence while bad choices will slow it down in a potentially dramatic way. 

\subsection{Random walk versus informed schemes}
Random walk MH schemes use symmetric proposal distributions satisfying $Q(x,y)=Q(y,x)$, such as normal distributions centered at the current location $Q_\sigma(x,\cdot)=N(x,\sigma^2\I_n)$.
Although these schemes are easy to implement, the new state $y$ is proposed ``blindly'' (i.e.\ using no information about $\Pi$) and this can lead to bad mixing and slow convergence.
In continuous frameworks, such as $\sX=\R^n$ and $\Pi(dx)=\pi(x)dx$, various \emph{informed} MH proposal distributions have been designed to obtain better convergence.
For example the Metropolis-Adjusted Langevin Algorithm (MALA, e.g.\ \citealp{Roberts1998MALA}) exploits the gradient of the target to bias the proposal distribution towards high probability regions by setting $Q_\sigma(x,\cdot)=N(x+\frac{\sigma^2}{2}\nabla(\log \pi)(x),\sigma^2\I_n)$.
Such an algorithm is derived by discretizing the $\Pi$-reversible Langevin diffusion $X_t$ given by
$dX_t=\frac{\sigma^2}{2}\nabla(\log \pi)(x)dt+\sigma dB_t$, so that the proposal $Q_\sigma$ is approximately $\Pi$-reversible 
for small values of $\sigma$.
More elaborate gradient-based informed proposals have been devised, such as Hamiltonian Monte Carlo (HMC, e.g.\ \citet{Neal2011,Girolami2011}), and other schemes \citep{Welling2011,Titsias2016,Durmus2017}, resulting in substantial improvements of MCMC performances both in theory and in practice.
However, most of these proposal distributions are derived as discretization of continuous-time diffusion processes or measure-preserving flows, and are based on derivatives and Gaussian distributions.
Currently, it is not clear how to appropriately extend such methods to frameworks where $\sX$ is a discrete space.
As a consequence, practitioners using MCMC to target measures on discrete spaces often rely on symmetric/uninformed proposal distributions, which can induce slow convergence.

\subsection{Informed proposals in discrete spaces}
A simple way to circumvent the problem described above is to map discrete spaces to continuous ones and then apply informed schemes in the latter, typically using HMC \citep{Zhang2012,Pakman2013,Nishimura2017}.
Although useful in some scenarios, 
the main limitation of this approach is that the embedding of discrete spaces into continuous ones is not always feasible and can potentially destroy the natural topological structure of the discrete space under consideration (e.g.\ spaces of trees, partitions, permutations,\dots), thus resulting in highly multimodal and irregular target distributions that are hard to explore.
An alternative 
approach was recently proposed in \citep{Titsias2017}, where informed proposals are obtained by introducing auxiliary variables and performing Gibbs Sampling in the augmented space.
The resulting scheme, named the Hamming Ball sampler, requires no continuous space embedding and is directly applicable to generic discrete spaces, but the potentially strong correlation between the auxiliary variables and the chain state can severely slow down convergence.

In this work we formulate the problem of designing informed MH proposal distributions in an original way.
Our formulation has the merit of being simple and unifying continuous and discrete frameworks.
The theoretical results hint to a simple and practical class of informed MH proposal distributions that are well designed for high-dimensional discrete problems, which we refer to as \emph{locally-balanced proposals}. Experiments on both simulated and real data show orders of magnitude improvements in efficiency compared to both random walk MH and the alternative informed schemes described above.

\subsection{Paper structure}
In Section \ref{sec:bal_prop_definition} we define the class of informed proposals distribution considered (obtained as a product of some ``base'' uninformed kernel and a biasing multiplicative term) and we characterize the class of biasing terms that 
are asymptotically-exact 
in the local limit regime (i.e.\ stepsize of the proposal going to 0).
In Section \ref{sec:peskun_result} we show that, under regularity assumptions on the target, the same class of locally-balanced proposals is also optimal in terms of Peskun ordering as the dimensionality of the state space increases.
In Section \ref{sec:choice_of_g} we consider a simple binary target distribution in order to compare different locally-balanced proposals and identify the one leading to the smallest mixing time, which turns out to be related to the Barker's algorithm \citep{Barker1965}.
Section \ref{sec:MALA} discusses the connection with classical gradient-based MCMC and MALA in particular. 
In Section \ref{sec:simulations} we perform simulation studies on classic discrete models (permutation spaces and Ising model), while in Section \ref{sec:record_linkage} we consider a more detailed application to Bayesian Record Linkage problems.
Finally, in Section \ref{sec:discussion} we discuss possible extensions and future works.
Supplementary material includes proofs and additional details on the simulations studies.

\section{Locally-balanced proposals}\label{sec:bal_prop_definition}
Let $\Pi$ be a target probability distribution on some topological space $\sX$.
We assume that $\Pi$ admits bounded density $\pi$ with respect to some reference measure $dx$, meaning $\Pi(dx)=\pi(x)dx$. Typically $dx$ would be the counting measure if $\sX$ is discrete or the Lebesgue measure if $\sX=\R^n$ for some $n\geq1$.

Let $K_\sigma(x,dy)$ be the uninformed symmetric  kernel that we would use to generate proposals in a random walk MH scheme, such as a Gaussian distribution for continuous spaces or the uniform distribution over neighbouring states for discrete spaces.
Here $\sigma$ is a scale parameter and we assume that $K_\sigma(x,dy)$ converges weakly to the delta measure in $x$ as $\sigma\downarrow 0$ while it converges to the base measure $dy$ as $\sigma\uparrow \infty$.

\subsection{Heuristics: local moves versus global moves}\label{sec:heuristics}
Suppose that we want to modify $K_\sigma(x,dy)$ and incorporate information about $\pi$ in order to bias the proposal towards high-probability states.
The first, somehow naive choice could be to consider the following localized version of $\pi$
\begin{gather}
Q_\pi(x,dy)
=\frac{\pi(y)K_\sigma(x,dy)}{Z_\sigma(x)}\label{eq:Qpi}\,,
\end{gather}
where $Z_\sigma(x)$ is a normalizing constant.
Assuming we could sample from it, we ask whether $Q_\pi$ would be a good choice of proposal distribution.
Equation \eqref{eq:Qpi} and the symmetry of $K_\sigma$, $K_\sigma(x,dy)dx=K_\sigma(y,dx)dy$, implies that 
$$
\frac{Q_\pi(x,dy)dx}{\pi(y)Z_\sigma(y)}=\frac{Q_\pi(y,dx)dy}{\pi(x)Z_\sigma(x)}\,,
$$
which means that $Q_\pi$ is reversible with respect to $\pi(x)Z_\sigma(x)dx$.
Note that the normalizing constant $Z_\sigma(x)$ is given by the convolution between $\pi$ and $K_\sigma$ which we denote by $Z_\sigma(x)=(\pi*K_\sigma)(x)=\int_\sX \pi(y)K_{\sigma}(x,dy)$.
Therefore, from the assumptions on $K_\sigma$, we have that $Z_\sigma(x)$ converges to $1$ for $\sigma\uparrow\infty$, while it converges to $\pi(x)$ for $\sigma\downarrow 0$.
It follows that the invariant measure of $Q_\pi$ looks very different in the two opposite limiting regimes because
$$
\pi(x)Z_\sigma(x)\rightarrow\branch{\pi(x)}{\sigma\uparrow\infty\hbox{ (Global moves)}}{\pi(x)^2}{\sigma\downarrow0\;\;\hbox{ (Local moves)}}\,.
$$
Therefore, for big values of $\sigma$, $Q_\pi$ will be approximately $\Pi$-reversible and thus it would be a good proposal distribution for the Metropolis-Hastings algorithm. We thus refer to $Q_\pi$ as \emph{globally-balanced} proposal.
On the contrary, for small values of $\sigma$, $Q_\pi$ would \emph{not} be a good Metropolis-Hastings proposal because its invariant distribution converges to $\pi(x)^2dx$ which is potentially very dissimilar from the target $\pi(x)dx$.

Following the previous arguments it is easy to correct for this behavior and design a proposal which is approximately $\Pi$-reversible for small values of $\sigma$.
In particular one could consider replacing the biasing term $\pi(y)$ in \eqref{eq:Qpi} with some transformation $g(\pi(y))$.
A natural choice would be to consider $\sqrt{\pi}$, which leads to the proposal
$$
Q_{\sqrt{\pi}}(x,dy)
=\frac{\sqrt{\pi(y)}K_\sigma(x,dy)}{(\sqrt{\pi}*K_{\sigma})(x)}.
$$
Arguing as before it is trivial to see that $Q_{\sqrt{\pi}}$ is reversible with respect to $\sqrt{\pi(x)}(\sqrt{\pi}*K_{\sigma})(x)dx$, which converges to $\pi(x)dx$ as $\sigma\downarrow 0$.
We thus refer to $Q_{\sqrt{\pi}}$ as an example of \emph{locally-balanced proposal} with respect to $\pi$.

\subsection{Definition and characterization of locally-balanced proposals}
In this work we will consider a specific class of informed proposal distributions, which we refer to as \emph{pointwise informed} proposals.
These proposals have the following structure
\begin{equation}\label{eq:informed_proposal}
\Qgs(x,dy)
\quad=\quad
\frac{g\left(\frac{\pi(y)}{\pi(x)}\right)K_\sigma(x,dy)}{Z_g(x)}
\end{equation}
where $g$ is a continuous function from $[0,\infty)$ to itself  and $Z_g(x)$
is the normalizing constant
\begin{equation}\label{eq:Z}
Z_g(x)=\int_{\sX}g\left(\frac{\pi(z)}{\pi(x)}\right)K_\sigma(x,dz)\,.
\end{equation}
The latter is finite by the continuity of $g$ and boundedness of $\pi$.
Throughout the paper, we assume $g$ to be bounded by some linear function (i.e. $g(t)\leq a+b\,t$ for some $a$ and $b$) to avoid integrability issues (see Appendix \ref{appendix:loc_bal_iff}).

The distribution $\Qgs$ in \eqref{eq:informed_proposal} inherits the topological structure of $K_\sigma$ and incorporates information regarding $\Pi$ through the multiplicative term $g\left(\frac{\pi(y)}{\pi(x)}\right)$.
Although the scheme in \eqref{eq:informed_proposal} is not the only way to design informed MH proposals, it is an interesting framework to consider. 
In particular it includes the uninformed choice $Q(x,y)=K_\sigma(x,y)$ when $g(t)= 1$, and the ``naively informed'' choice $Q(x,y)\propto K_\sigma(x,y)\pi(y)$ when $g(t)=t$.
Given \eqref{eq:informed_proposal}, the question of interest is how to choose the function $g$.
In order to guide us in the choice of $g$ we introduce the notion of locally-balanced kernels.

%

\begin{defi}\label{defi:loc_bal}(Locally-balanced kernels)
A family of Markov transition kernels $\{Q_\sigma\}_{\sigma>0}$ is \emph{locally-balanced} with respect to a distribution $\Pi$ if each $Q_\sigma$ is reversible with respect to some distribution $\Pi_\sigma$ such that $\Pi_\sigma$ converges weakly to $\Pi$ as $\sigma\downarrow 0$.
\end{defi}

The idea behind using a locally-balanced kernel $Q_\sigma$ as a MH proposal targeting $\Pi$ is that, in a local move regime (i.e. for small $\sigma$), $Q_\sigma$ will be almost $\Pi$-reversible and therefore the Metropolis-Hastings correction will have less job to do (namely correcting for the difference between $\Pi_\sigma$ and $\Pi$).
This would allow for more moves to be accepted and longer moves to be performed, thus improving the algorithm's efficiency.
The reason to consider the local-move regime is that, as the dimensionality of the state space increases the MH moves typically become smaller and smaller with respect to the size of $\sX$.
These heuristic ideas will be confirmed by theoretical results and simulations studies in the following sections.
The following theorem explicitly characterizes which pointwise informed proposal are locally-balanced.
\begin{theorem}\label{thm:loc_bal_iff}
A pointwise informed proposal $\{\Qgs\}_{\sigma>0}$ is locally-balanced with respect to a general $\Pi$ if and only if
\begin{equation}\label{eq:balancing_equation}
\qquad\qquad
g(t)
\;=\;
t\,g(1/t)
\qquad\qquad \forall\, t>0\;.
\end{equation}
\end{theorem}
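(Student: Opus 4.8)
The plan is to prove both implications by exhibiting the reversing measure of $\Qgs$ explicitly and then analysing the local limit $\sigma\downarrow0$. For the ``if'' direction, I would assume the balancing equation \eqref{eq:balancing_equation} and propose the candidate measure $\Pi_\sigma(dx)\propto \pi(x)Z_g(x)\,dx$. Using the symmetry $K_\sigma(x,dy)\,dx=K_\sigma(y,dx)\,dy$, the product $\Pi_\sigma(dx)\,\Qgs(x,dy)$ simplifies to $\pi(x)\,g(\pi(y)/\pi(x))\,K_\sigma(x,dy)\,dx$, the factor $Z_g(x)$ cancelling. Reversibility then amounts to the identity $\pi(x)\,g(\pi(y)/\pi(x))=\pi(y)\,g(\pi(x)/\pi(y))$, which is exactly \eqref{eq:balancing_equation} after dividing by $\pi(x)$ and setting $t=\pi(y)/\pi(x)$. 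To close this direction I would verify $\Pi_\sigma$ converges weakly to $\Pi$: since $K_\sigma(x,\cdot)$ converges weakly to $\delta_x$ and $g$ is continuous, $Z_g(x)\to g(1)$, so the normalised density $\pi Z_g/\!\int\!\pi Z_g$ tends to $\pi$ by dominated convergence, using the linear-growth bound on $g$ to supply the dominating function.

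For the converse, I would suppose $\{\Qgs\}_{\sigma>0}$ is locally-balanced for a general target, so that for each $\sigma$ there is $\Pi_\sigma(dx)=\pi_\sigma(x)\,dx$ with respect to which $\Qgs$ is reversible and $\Pi_\sigma$ converges weakly to $\Pi$. Writing $K_\sigma(x,dy)=k_\sigma(x,y)\,dy$ with $k_\sigma$ symmetric and cancelling $k_\sigma$ wherever it is positive, detailed balance becomes $h_\sigma(x)\,g(\pi(y)/\pi(x))=h_\sigma(y)\,g(\pi(x)/\pi(y))$, where $h_\sigma(x)=\pi_\sigma(x)/Z_g(x)$. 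Equivalently $h_\sigma(x)/h_\sigma(y)=\phi(\pi(x)/\pi(y))$, where $\phi(t):=g(t)/g(1/t)$.

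Now I would fix a $\sigma$ for which $k_\sigma$ is everywhere positive (e.g.\ a Gaussian base kernel), so that all pairs $x,y$ are connected. Consistency of $h_\sigma(x)/h_\sigma(y)=\phi(\pi(x)/\pi(y))$ across triples $x,y,z$ forces $\phi(a/b)\,\phi(b/c)=\phi(a/c)$ with $a=\pi(x)$, $b=\pi(y)$, $c=\pi(z)$; since the target is arbitrary the ratios range over all of $(0,\infty)$, so writing $s=a/b$ and $u=b/c$ this reads $\phi(s)\phi(u)=\phi(su)$, i.e.\ $\phi$ is multiplicative on $(0,\infty)$. Continuity of $g$ (hence of $\phi$) then gives $\phi(t)=t^\alpha$ for some $\alpha\in\R$, that is, $g(t)=t^\alpha\,g(1/t)$.

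It remains to show $\alpha=1$. From $\phi(t)=t^\alpha$ we obtain $h_\sigma(x)\propto\pi(x)^\alpha$, hence $\pi_\sigma(x)\propto\pi(x)^\alpha Z_g(x)$; letting $\sigma\downarrow0$ and using $Z_g\to g(1)$, the weak limit of $\Pi_\sigma$ is proportional to $\pi(x)^\alpha\,dx$. Since this limit must equal $\Pi(dx)=\pi(x)\,dx$ for a general non-constant $\pi$, we conclude $\alpha=1$ and recover \eqref{eq:balancing_equation}. I expect the converse to be the main obstacle: one must argue that detailed balance pins down the reversing measure up to the connectivity of $k_\sigma$ (so that it genuinely constrains $g$), solve the multiplicative Cauchy equation with only continuity, and make the $\sigma\downarrow0$ step a legitimate weak limit rather than a pointwise one. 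In discrete spaces weak convergence is pointwise and this last point is immediate, whereas in the continuous case the linear-growth bound on $g$ is what controls the densities.
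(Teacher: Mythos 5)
Your ``if'' direction is the same as the paper's: the same candidate reversing measure $\pi(x)Z_g(x)\,dx$, the same use of the symmetry of $K_\sigma$ to reduce detailed balance to \eqref{eq:balancing_equation}, and the same pointwise limit $Z_g(x)\to g(1)$. The gap is in your last step. You claim that the normalizing constant $\int_{\sX}\pi(x)\Zgs(x)\,dx$ converges to $1$ ``by dominated convergence, using the linear-growth bound on $g$ to supply the dominating function.'' No such dominating function exists in general: the linear bound only gives $\pi(x)\Zgs(x)\le a\pi(x)+b(\pi*K_\sigma)(x)$, and on $\R^n$ with a Gaussian base kernel $\sup_{\sigma}(\pi*K_\sigma)(x)$ is comparable to the Hardy--Littlewood maximal function of $\pi$, which is integrable only when $\pi\equiv 0$ (already for $g(t)=t$ and $\pi$ Gaussian one has $\sup_\sigma \pi(x)\Zgs(x)\gtrsim |x|^{-1}$ for large $|x|$). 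This is exactly why the paper proves Lemma~\ref{lemma:convergence_integral}: Fatou's lemma gives $\liminf \ge 1$, and the matching $\limsup\le 1$ is obtained by splitting $\sX$ into a finite-measure set $A$ and its complement, using bounded convergence on $A$ and the symmetry of $K_\sigma$ with respect to $dx$ \emph{in integrated form} on $A^c$. Your step can be repaired (e.g.\ by the generalized dominated convergence/Pratt lemma, since the dominating functions $a\pi+b(\pi*K_\sigma)$ converge pointwise with constant integrals $a+b$), but plain dominated convergence as you invoke it fails in the continuous case, which is precisely the case your closing remarks flag as the delicate one.

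Your ``only if'' direction takes a genuinely different, and much heavier, route than the paper. Since ``locally-balanced with respect to a general $\Pi$'' requires the property for \emph{every} target, the paper simply produces one counterexample: $\sX=\{0,1\}$ with $\pi(1)/\pi(0)=t_0$, where $g(t_0)\neq t_0\,g(1/t_0)$. There the reversing measure of the irreducible kernel $\Qgs$ is unique and explicitly computable, and its $\sigma\downarrow 0$ limit is proportional to $\big(\pi(0)/g(t_0),\,\pi(1)/(t_0 g(1/t_0))\big)\neq\pi$. Your argument---deriving the cocycle identity $\phi(s)\phi(u)=\phi(su)$ for $\phi(t)=g(t)/g(1/t)$, solving the multiplicative Cauchy equation under continuity to get $\phi(t)=t^\alpha$, and then forcing $\alpha=1$ in the local limit---is workable and more informative (it describes what reversing measures of non-balanced proposals look like whenever they exist), but it needs care the paper's argument sidesteps: $g$ is only assumed continuous from $[0,\infty)$ to itself, so zeros of $g$ break the definition of $\phi$; detailed balance only holds almost everywhere, so evaluating it along triples requires a measurability argument; and the final identification of the weak limit of $\pi^\alpha Z_g\,dx$ needs another Lemma~\ref{lemma:convergence_integral}-type justification. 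All of this overhead is avoidable because necessity needs only a single counterexample, which the paper takes as small as possible.
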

Motivated by Theorem \ref{thm:loc_bal_iff} we refer to functions $g$ satisfying \eqref{eq:balancing_equation} as balancing functions.
In the next section we provide some results showing that locally-balanced proposals produce asymptotically more efficient MH algorithms compared to other pointwise informed proposals.

\section{Peskun optimality of locally-balanced proposals}\label{sec:peskun_result}
In this section we use Peskun ordering to compare the efficiency of Metropolis-Hastings (MH) schemes generated by pointwise informed proposals defined in \eqref{eq:informed_proposal}.
Unlike Section \ref{sec:bal_prop_definition}, where we considered the local limit $\sigma\downarrow 0$, we now consider a ``fixed $\sigma$'' scenario, dropping the $\sigma$ subscript and denoting the base kernel and corresponding pointwise informed proposals by $K$ and $Q_g$ respectively.
We focus on discrete spaces, where Peskun-type results are more natural and broadly applicable. 
Thus we assume $\sX$ to be a finite space 
and $dx$ to be the counting measure, meaning that $\pi$ is the probability mass function of $\Pi$ and $K(x,y)$ is a symmetric transition matrix.

\subsection{Background on Peskun ordering}
Peskun ordering provides a comparison result for Markov chains convergence properties. 
It measure the efficiency of MCMC algorithms in terms of \emph{asymptotic variance} and \emph{spectral gap}. 
The notion of asymptotic variance describes how the correlation among MCMC samples affects the variance of the empirical averages estimators. 
Given a $\pi$-stationary transition kernel $P$ and a function $h:\Omega\rightarrow\R$, the asymptotic variance $\var_\pi(h,P)$ is defined as
\begin{equation*}
\var_\pi(h,P)
\;=\;
\lim_{T\rightarrow\infty}T\,\var\left(\frac{\sum_{t=1}^T h(X_t)}{T}\right)
\;=\;
\lim_{T\rightarrow\infty}T^{-1}\,\var\left(\sum_{t=1}^Th(X_t)\right)
\;,
\end{equation*}
where $X_1$, $X_2$, \dots is a Markov chain with transition kernel $P$ started in stationarity (i.e.\ with $X_1\sim\pi$) .
The smaller $\var_\pi(h,P)$ the more efficient the corresponding MCMC algorithm is in estimating $\E_\pi[h]$.
The spectral gap of a Markov transition kernel $P$ is defined as $Gap(P)=1-\lambda_2$, where $\lambda_2$ is the second largest eigenvalue of $P$, and always satisfy $Gap(P)\geq 0$.
The value of $Gap(P)$ is closely related to the convergence properties of $P$, with values
close to 0 corresponding to slow convergence and values distant from 0 corresponding to fast convergence (see, e.g., \cite[Ch.12-13]{Levin2009} for a review of spectral theory for discrete Markov chains).

\begin{theorem}\label{thm:peskun_constant}
Let $P_1$ and $P_2$ be $\pi$-reversible Markov transition kernels on $\sX$ such that 
$P_1(x,y)\geq c\,P_2(x,y)$ for all $x\neq y$ and a fixed $c>0$.
Then it holds
\begin{align}
(a)&\quad
\var_\pi(h,P_1)
\,\leq\,
\frac{\var_\pi(h,P_2)}{c}+\frac{1-c}{c}\var_\pi(h)
\qquad\forall h:\sX\to \R
\;,\nonumber\\
(b)&\quad
\quad\hbox{Gap}(P_1)\,\geq\,c\,\hbox{Gap}(P_2)\,.
\nonumber
\end{align}
\end{theorem}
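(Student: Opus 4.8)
The plan is to exploit the standard variational characterizations of asymptotic variance and spectral gap in terms of the associated Dirichlet forms, which interact cleanly with the off-diagonal domination hypothesis $P_1(x,y)\ge c\,P_2(x,y)$. The key observation is that both quantities in (a) and (b) can be written using the Dirichlet form $\mathcal{E}_P(h,h)=\frac{1}{2}\sum_{x\neq y}\pi(x)P(x,y)(h(y)-h(x))^2$, whose only dependence on $P$ is through the \emph{off-diagonal} entries; the diagonal of $P$ never enters. This is exactly what the hypothesis controls, so the domination transfers directly.

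For part (b), the plan is to use the Rayleigh-quotient formula for the spectral gap of a reversible kernel,
\begin{equation*}
\mathrm{Gap}(P)=\min_{h:\,\var_\pi(h)>0}\frac{\mathcal{E}_P(h,h)}{\var_\pi(h)}\,,
\end{equation*}
valid because $\pi$-reversibility makes $P$ self-adjoint on $L^2(\pi)$ with real spectrum. From $P_1(x,y)\ge c\,P_2(x,y)$ for all $x\neq y$ and the non-negativity of each summand $(h(y)-h(x))^2$, I would deduce $\mathcal{E}_{P_1}(h,h)\ge c\,\mathcal{E}_{P_2}(h,h)$ termwise. Taking the minimizing $h$ for $P_1$ and comparing through the Rayleigh quotient then yields $\mathrm{Gap}(P_1)\ge c\,\mathrm{Gap}(P_2)$ immediately.

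For part (a), the plan is to use the spectral (or Dirichlet-form) representation of the asymptotic variance. Writing $\var_\pi(h,P)=\langle h-\E_\pi h,\,(\Id+P)(\Id-P)^{-1}(h-\E_\pi h)\rangle_\pi$ (on the orthocomplement of constants), the natural route is to note that the map $P\mapsto \var_\pi(h,P)$ is monotone with respect to the ordering defined by the quadratic forms of the $P_i$, which is precisely the classical Peskun/Tierney ordering. Since $P_1(x,y)\ge c\,P_2(x,y)$ off the diagonal is not quite $P_1\succeq P_2$ in the strict Peskun sense (the constant $c$ obstructs this), the clean trick is to introduce the auxiliary kernel $\tilde P_2=c\,P_2+(1-c)\,\Pi$, where $\Pi(x,y)=\pi(y)$ is the independence kernel that samples directly from $\pi$. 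One checks that $\tilde P_2$ is $\pi$-reversible, that $P_1(x,y)\ge \tilde P_2(x,y)$ off-diagonal (hence $P_1\succeq \tilde P_2$ in the genuine Peskun ordering), and therefore $\var_\pi(h,P_1)\le\var_\pi(h,\tilde P_2)$. It then remains to evaluate $\var_\pi(h,\tilde P_2)$ and show it equals $\frac{1}{c}\var_\pi(h,P_2)+\frac{1-c}{c}\var_\pi(h)$; since mixing in the i.i.d.\ kernel $\Pi$ contributes the variance term $\var_\pi(h)$ with no autocorrelation, this identity follows from how the asymptotic variance behaves under the convex combination $c\,P_2+(1-c)\Pi$.

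The main obstacle I anticipate is part (a): unlike the spectral-gap bound, which is a one-line Rayleigh-quotient comparison, the asymptotic-variance inequality genuinely requires the classical Peskun result (monotonicity of $\var_\pi(h,\cdot)$ under off-diagonal domination), and the constant $c<1$ forces the introduction of the auxiliary kernel $\tilde P_2$. Verifying that $\tilde P_2$ is a legitimate $\pi$-reversible transition kernel and then computing its asymptotic variance exactly — tracking how the $(1-c)\,\Pi$ term decorrelates the chain and produces precisely the $\frac{1-c}{c}\var_\pi(h)$ correction — is the delicate bookkeeping step. I would either cite the Peskun-ordering theorem for reversible chains directly or, if a self-contained argument is preferred, derive the variance monotonicity from the spectral representation above, which is where the real work lies.
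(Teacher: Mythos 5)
Your part (b) is correct: the Rayleigh--quotient/Dirichlet-form comparison is the standard argument, and it is essentially what the paper leaves implicit when it says part (b) follows from the definition of the gap. The genuine gap is in part (a), in your choice of auxiliary kernel. You mix $P_2$ with the independence kernel $\Pi(x,y)=\pi(y)$, and this breaks both halves of your own argument. First, the Peskun domination fails: the hypothesis gives $P_1(x,y)\geq c\,P_2(x,y)$ off the diagonal, but you would need $P_1(x,y)\geq c\,P_2(x,y)+(1-c)\pi(y)$, and nothing in the hypothesis controls the extra strictly positive mass $(1-c)\pi(y)$. Second, the exact variance identity you claim for $c\,P_2+(1-c)\Pi$ is false: on the orthogonal complement of the constants $\Pi$ acts as the zero operator, so $c\,P_2+(1-c)\Pi$ has eigenvalues $c\lambda_i$ with the same eigenfunctions $f_i$ as $P_2$, giving $\var_\pi\left(h,\,cP_2+(1-c)\Pi\right)=\sum_{i\geq 2}\frac{1+c\lambda_i}{1-c\lambda_i}\E_\pi[h f_i]^2$, which does not equal $\frac{1}{c}\var_\pi(h,P_2)+\frac{1-c}{c}\var_\pi(h)$ (take $\lambda_i=0$: the first expression gives $1$, the second gives $(2-c)/c$).

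The fix, and what the paper actually does, is to mix with the \emph{identity} kernel instead: $\tilde P_2=c\,P_2+(1-c)\,\Id$. Since $\Id$ puts no mass off the diagonal, $\tilde P_2(x,y)=c\,P_2(x,y)\leq P_1(x,y)$ for all $x\neq y$, so the classical Peskun theorem applies directly; and the lazy mixture shifts the spectrum affinely, $\tilde\lambda_i=c\lambda_i+(1-c)$, which is exactly what yields the identity $\var_\pi(h,\tilde P_2)=\frac{1}{c}\var_\pi(h,P_2)+\frac{1-c}{c}\var_\pi(h)$ via $\frac{1+\tilde\lambda}{1-\tilde\lambda}=\frac{1}{c}\,\frac{1+\lambda}{1-\lambda}+\frac{1-c}{c}$ (this is the paper's Lemma on lazy chains). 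Note that your heuristic for the correction term is backwards: $\frac{1-c}{c}\var_\pi(h)$ arises from the \emph{laziness} (holding probability) of the auxiliary chain, which inflates autocorrelation, not from decorrelation by i.i.d.\ refreshment --- refreshing from $\pi$ would make the chain mix better, not worse, which is precisely why the independence kernel cannot produce this identity. Finally, your proposal only covers $c\leq 1$; the theorem allows any $c>0$, and for $c>1$ the claimed bound is strictly stronger than plain Peskun (since $\frac{1-c}{c}<0$), so it needs the symmetric trick of making $P_1$ lazy, $\tilde P_1=\frac{1}{c}P_1+\left(1-\frac{1}{c}\right)\Id$, as in the paper.
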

The case $c=1$ of Theorem \ref{thm:peskun_constant} is known as Peskun ordering \citep{Peskun1973,Tierney1998}.
Theorem \ref{thm:peskun_constant} implies that if 
$P_1(x,y)\geq c\,P_2(x,y)$ for all $x\neq y$, then $P_1$ is ``$c$ times more efficient'' than $P_2$ in terms of spectral gap and asymptotic variances (ignoring the $\var_\pi(h)$ term which is typically much smaller than $\var_\pi(h,P_2)$ in non-trivial applications).

\subsection{Peskun comparison between pointwise informed proposals}
In order to state Theorem \ref{thm:peskun_result} below, we define the following constant
\begin{equation}\label{eq:smoothness}
\smooth{g} 
\quad=\quad
\sup
_{(x,y)\in\RR}
\frac{Z_g(y)}{Z_g(x)}
\;,
\end{equation}
where $\RR=\{(x,y)\in\sX\times\sX \,:\,\pi(x)K(x,y)>0\}$ and $Z_g(x)$ is defined by \eqref{eq:Z}.
\begin{theorem}\label{thm:peskun_result}
Let $g:(0,\infty)\to(0,\infty)$.
Define $\tilde{g}(t)=\min\{g(t),t\,g(1/t)\}$ and let $P_g$ and $P_{\tilde{g}}$ be the MH kernels obtained from the pointwise informed proposals $Q_g$ and $Q_{\tilde{g}}$ defined as in \eqref{eq:informed_proposal}.
It holds
\begin{equation}\label{eq:peskun}
P_{\tilde{g}}(x,y)
\quad\geq\quad 
\frac{1}{\smooth{g} \smooth{\tilde{g}}}P_{g}(x,y)\,
\qquad
\forall x\neq y\,.
\end{equation}
\end{theorem}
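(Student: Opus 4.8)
The plan is to compute both Metropolis--Hastings kernels explicitly and reduce the claimed inequality to a comparison of normalizing constants along the edge $(x,y)$. Fix $x\neq y$; if $K(x,y)=0$ both sides of \eqref{eq:peskun} vanish, so assume $K(x,y)>0$ and write $t=\pi(y)/\pi(x)$. Since $K$ is symmetric the factors $K(x,y)$ and $K(y,x)$ cancel inside the acceptance ratio, and substituting \eqref{eq:informed_proposal} into the MH acceptance probability $a(x,y)=\min\{1,\frac{\pi(y)Q_g(y,x)}{\pi(x)Q_g(x,y)}\}$ gives the compact identity
\begin{equation*}
P_g(x,y)=Q_g(x,y)\,a(x,y)=K(x,y)\min\left\{\frac{g(t)}{Z_g(x)},\,\frac{t\,g(1/t)}{Z_g(y)}\right\}\,.
\end{equation*}
First I would record this form for a generic $g$, and in particular for $\tilde{g}$.

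Next I would exploit that $\tilde{g}(t)=\min\{g(t),t\,g(1/t)\}$ is itself a balancing function: a one-line check gives $t\,\tilde{g}(1/t)=\min\{t\,g(1/t),g(t)\}=\tilde{g}(t)$, so $\tilde{g}$ satisfies \eqref{eq:balancing_equation}. Consequently the two entries of the minimum defining $P_{\tilde{g}}$ share the same numerator $\tilde{g}(t)$, and the kernel collapses to
\begin{equation*}
P_{\tilde{g}}(x,y)=K(x,y)\,\tilde{g}(t)\,\min\left\{\frac{1}{Z_{\tilde{g}}(x)},\frac{1}{Z_{\tilde{g}}(y)}\right\}=\frac{K(x,y)\,\tilde{g}(t)}{\max\{Z_{\tilde{g}}(x),Z_{\tilde{g}}(y)\}}\,.
\end{equation*}
For the other kernel I would bound $P_g$ from above by the same profile $\tilde{g}(t)$: by definition either $g(t)=\tilde{g}(t)$ or $t\,g(1/t)=\tilde{g}(t)$, and selecting the corresponding entry of the minimum yields in both cases
\begin{equation*}
P_g(x,y)\leq\frac{K(x,y)\,\tilde{g}(t)}{\min\{Z_g(x),Z_g(y)\}}\,.
\end{equation*}

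Dividing the last two displays, the common factor $K(x,y)\tilde{g}(t)$ cancels and it remains to prove
\begin{equation*}
\frac{\min\{Z_g(x),Z_g(y)\}}{\max\{Z_{\tilde{g}}(x),Z_{\tilde{g}}(y)\}}\;\geq\;\frac{1}{\smooth{g}\,\smooth{\tilde{g}}}\,.
\end{equation*}
Here I would use that, by symmetry of $K$, both $(x,y)$ and $(y,x)$ lie in $\RR$, so the definition \eqref{eq:smoothness} gives $\min\{Z_g(x),Z_g(y)\}\geq Z_g(x)/\smooth{g}$ and $\max\{Z_{\tilde{g}}(x),Z_{\tilde{g}}(y)\}\leq \smooth{\tilde{g}}\,Z_{\tilde{g}}(x)$; finally $\tilde{g}\leq g$ pointwise forces $Z_{\tilde{g}}(x)\leq Z_g(x)$, so the ratio is at least $\frac{1}{\smooth{g}\,\smooth{\tilde{g}}}\cdot\frac{Z_g(x)}{Z_{\tilde{g}}(x)}\geq\frac{1}{\smooth{g}\,\smooth{\tilde{g}}}$, as required.

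I expect the difficulties to be bookkeeping rather than conceptual: correctly tracking which argument realizes the minimum (so that the upper bound on $P_g$ is valid in both cases), and applying the smoothness constants to the edge $(x,y)\in\RR$ in the right orientation, which is legitimate precisely because $K$ is symmetric and $\smooth{g},\smooth{\tilde{g}}\geq 1$. The genuinely useful observation is that $\tilde{g}$ is automatically balancing, which is what lets $P_{\tilde{g}}$ reduce to a single maximum of normalizers and thereby exposes the clean comparison above.
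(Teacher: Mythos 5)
Your proof is correct and follows essentially the same route as the paper's: both start from the identity $P_g(x,y)=K(x,y)\min\left\{g(t)/Z_g(x),\,t\,g(1/t)/Z_g(y)\right\}$, use that $\tilde{g}$ is automatically balancing so that both entries of the minimum share the numerator $\tilde{g}(t)$, invoke $Z_{\tilde{g}}\leq Z_g$, and control the normalizing constants across the edge $(x,y)$ via $\smooth{g}$ and $\smooth{\tilde{g}}$. The only difference is bookkeeping: the paper sandwiches the kernel densities around $\tilde{g}(t_{xy})/Z_g(x)$ and integrates over sets $A$ (en route to a two-sided bound valid on general state spaces), whereas you keep exact min/max expressions of the normalizers and compare them at the very end.
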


The function $\tilde{g}(t)=\min\{g(t),t\,g(1/t)\}$ satisfies $\tilde{g}(t)=t\,\tilde{g}(1/t)$ by definition.
Therefore Theorems \ref{thm:peskun_constant} and \ref{thm:peskun_result} imply that for any $g:\R_+\rightarrow\R_+$ there is a corresponding balancing function $\tilde{g}$ which leads to a more efficient MH algorithm modulo the factor $\frac{1}{\smooth{g}  \smooth{\tilde{g}}}$.
As we discuss in the next section, in many models of interest $\frac{1}{\smooth{g}  \smooth{\tilde{g}}}$ converges to $1$ as the dimensionality of $\sX$ increases.
When this is true we can deduce that locally-balanced proposals are asymptotically optimal in terms of Peskun ordering.

\subsection{High-dimensional regime}
Suppose now that the distribution of interest $\pi^{(n)}$ is indexed by a positive integer $n$ which represents the dimensionality of the underlying state space $\sXn$.
Similarly, also the base kernel $K^{(n)}$ and the constants $c_g^{(n)}$ defined by \eqref{eq:smoothness}
depend on $n$.
In many discrete contexts, as the dimensionality goes to infinity, the size of a single move of $K^{(n)}$ becomes smaller and smaller with respect to the size of $\sX^{(n)}$ and does not change significantly the landscape around the current location.
In those cases we would expect the following to hold
\begin{equation}\label{eq:asympt_smoothness}\tag{A}
c_g^{(n)}\to 1\qquad \hbox{as }n\to \infty
\end{equation}
for every well-behaved $g$ (e.g. bounded on compact subsets of $(0,\infty)$).
When \eqref{eq:asympt_smoothness} holds, the factor $\frac{1}{\smooth{g}  \smooth{\tilde{g}}}$ in the Peskun comparison of Theorem \ref{thm:peskun_result} converges to 1 and locally-balanced proposals are asymptotically optimal.
For example, we expect the sufficient condition \eqref{eq:asympt_smoothness} to hold when the conditional independence graph of the model under consideration has a bounded degree and $K^{(n)}$ updates a fixed number of variables at a time.
We now describe three models involving discrete parameters which will be used as illustrative examples in the following sections and prove \eqref{eq:asympt_smoothness} for all of them.
\begin{ex}[Independent binary components]\label{ex:binary}
Consider $\sX^{(n)}=\{0,1\}^n$ and, denoting the elements of $\sX^{(n)}$ as $\bx_{1:n}=(x_1,\dots,x_n)$, the target distribution is
$$
\pi^{(n)}(\bx_{1:n})
\;=\;
\prod_{i=1}^n p_i^{1-x_i}(1-p_i)^{x_i}\,,
$$
where each $p_i$ is a probability value in $(0,1)$.
The base kernel $K^{(n)}(\bx_{1:n},\cdot)$ is the uniform distribution on the neighborhood $N\big(\bx_{1:n}\big)$ defined as
\begin{equation*}\label{eq:neighborhood_hypercube}
N\big(\bx_{1:n}\big)
\;=\;
\Big\{
\by_{1:n}=(y_1,\dots,y_n)\,:\,\sum_{i=1}^n |x_i-y_i|=1
\Big\}\,.
\end{equation*}
\end{ex}

\begin{ex}[Weighted permutations]\label{ex:perfect_matchings}
Let
\begin{equation}\label{eq:matchings_target}
\pi^{(n)}(\rho)=\frac{1}{Z}\prod_{i=1}^{n}w_{i\rho(i)}
\qquad \rho\in\Sn\,,
\end{equation}
where $\{w_{ij}\}_{i=i,j}^n$ are positive weights, $Z$ is the normalizing constant $\sum_{\rho\in\Sn}\prod_{i=1}^{n}w_{i\rho(i)}$ and $\Sn$ is the space of permutations of $n$ elements (i.e.\ bijiections from $\{1,\dots,n\}$ to itself).
We consider local moves that pick two indices and switch them.
The induced neighboring structure is $\{N(\rho)\}_{\rho\in\Sn}$ with
\begin{equation}\label{eq:perfect_matchings_neighboring}
N(\rho)
\;=\;
\left\{\rho'\in\Sn\,:\,
\rho'=\rho\circ (i,j) \hbox{ for some }i,j\in\{1,\dots,n\}\hbox{ with } i\neq j
\right\}\,,
\end{equation}
where $\rho'=\rho\circ (i,j)$ is defined by $\rho'(i)=\rho(j)$, $\rho'(j)=\rho(i)$ and
$\rho'(l)=\rho(l)$ for $l\neq i$ and $l\neq j$.
\end{ex}
\begin{ex}[Ising model]\label{ex:ising}
Consider the state space $\sX^{(n)}=\{-1,1\}^{V_n}$, where $(V_n,E_n)$ is the $n\times n$ square lattice graph with, for example, periodic boundary conditions.
For each $\textbf{x}=(x_i)_{i\in V_n}$, the target distribution is defined as
\begin{equation}\label{eq:ising_model}
\pi^{(n)}(\textbf{x})=\frac{1}{Z}\exp\left(\sum_{i\in V_n}\alpha_ix_i + \lambda\sum_{(i,j)\in E_n}x_ix_j \right)\,,
\end{equation}
where $\alpha_i\in \R$ are biasing terms representing the propensity of $x_i$ to be positive, $\lambda>0$ is a global interaction term and $Z$ is a normalizing constant.
The neighbouring structure is the one given by single-bit flipping
\begin{equation}\label{eq:ising_neighboring}
N(\textbf{x})
\;=\;
\left\{\textbf{y}\in\sX^{(n)}\,:\,
\sum_{i\in V_n}|x_i-y_i|=2
\right\}\,.
\end{equation}
\end{ex}
Example \ref{ex:binary} is an illustrative toy example that we analyze explicitly in Section \ref{sec:hypercube}. 
Instead, the target measures in Examples \ref{ex:perfect_matchings} and \ref{ex:ising} are non-trivial distributions to sample from that occur in many applied scenarios (see e.g. Sections \ref{sec:simulations} and \ref{sec:record_linkage}), and MCMC schemes are among the most commonly used approaches to obtain approximate samples from those. 
Such examples will be used for illustrations in Sections \ref{sec:sampling_matchings} and \ref{sec:sampling_ising}.
The following proposition, combined with Theorem \ref{thm:peskun_result}, shows that for these examples locally-balanced proposal are asymptotically optimal within the class of pointwise informed proposals.
\begin{prop}\label{thm:norm_const}
The following conditions are sufficient for Examples \ref{ex:binary}, \ref{ex:perfect_matchings} and \ref{ex:ising} for \eqref{eq:asympt_smoothness} to hold for every locally bounded function $g:(0,\infty)\rightarrow(0,\infty)$:
\begin{enumerate}[topsep=0pt,itemsep=-1ex,noitemsep]
\item[]\emph{Example \ref{ex:binary}}: $\inf_{i\in\N}p_i>0$, $\sup_{i\in\N}p_i<1$;
\item[]\emph{Example \ref{ex:perfect_matchings}}: $\inf_{i,j\in\N}w_{ij}>0$ and $\sup_{i,j\in\N}w_{ij}<\infty$;
\item[]\emph{Example \ref{ex:ising}}: $\inf_{i\in\N}\alpha_{i}>-\infty$ and $\sup_{i\in\N}\alpha_{i}<\infty$.
\end{enumerate}
\end{prop}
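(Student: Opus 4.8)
The plan is to exploit the fact that $K^{(n)}$ is uniform over the neighbourhood $N(x)$, so that $Z_g(x)=|N(x)|^{-1}\sum_{z\in N(x)} g(\pi^{(n)}(z)/\pi^{(n)}(x))$ is an \emph{average} of the quantities $g(\pi^{(n)}(z)/\pi^{(n)}(x))$ over the available moves. In each of the three examples the moves are labelled by the same index set regardless of the current state (the coordinate to flip, the vertex to flip, or the unordered pair to transpose), so for a neighbouring pair $(x,y)\in\RR$ the two sums defining $Z_g(x)$ and $Z_g(y)$ can be compared term by term over a common index set, using also that $|N(x)|=|N(y)|$. The whole argument then rests on two quantitative ingredients: a uniform two-sided bound on the summands, and a bound on how many of them actually change when we pass from $x$ to an adjacent $y$.

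For the first ingredient I would check that, under the stated conditions, every local ratio $\pi^{(n)}(z)/\pi^{(n)}(x)$ with $z\in N(x)$ lies in a fixed compact interval $[m,M]\subset(0,\infty)$ that does \emph{not} depend on $n$. This is where the structure of each model enters: for Example~\ref{ex:binary} the product form gives $\pi^{(n)}(z)/\pi^{(n)}(x)\in\{(1-p_j)/p_j,\,p_j/(1-p_j)\}$, bounded by $\inf_i p_i>0$ and $\sup_i p_i<1$; for Example~\ref{ex:ising} a single flip of vertex $k$ gives ratio $\exp(-2x_k(\alpha_k+\lambda\sum_{j\sim k}x_j))$, whose exponent is controlled because the lattice has bounded degree (four) and $\alpha_i$ is bounded; for Example~\ref{ex:perfect_matchings} a transposition $(i,j)$ gives the two-factor ratio $w_{i\rho(j)}w_{j\rho(i)}/(w_{i\rho(i)}w_{j\rho(j)})$, bounded by $\inf w_{ij}>0$ and $\sup w_{ij}<\infty$. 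Since $g$ is continuous and strictly positive, it attains bounds $0<g_{\min}\le g\le g_{\max}<\infty$ on $[m,M]$, uniformly in $n$; in particular $Z_g(x)\ge g_{\min}$.

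The second ingredient is a combinatorial count of how many summands change. I would show that moving from $x$ to an adjacent $y$ alters only a small number $D_n$ of the terms $g(\pi^{(n)}(z)/\pi^{(n)}(x))$: in Example~\ref{ex:binary} only the flipped-coordinate term changes, so $D_n=1$ out of $|N(x)|=n$; in Example~\ref{ex:ising} only the term of the flipped vertex and of its (at most four) lattice neighbours change, so $D_n\le 5$ out of $|N(x)|=n^2$; in Example~\ref{ex:perfect_matchings}, applying the transposition $(a,b)$ changes only those pair-terms $\{i,j\}$ meeting $\{a,b\}$, i.e.\ $D_n=2n-3$ out of $|N(x)|=\binom{n}{2}$. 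Crucially, in every case $D_n/|N(x)|=O(1/n)\to 0$.

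Combining the two ingredients finishes the proof. Writing $Z_g(y)-Z_g(x)$ as a term-by-term difference over the common move index set, at most $D_n$ terms are nonzero and each is bounded in absolute value by $g_{\max}-g_{\min}$, so $|Z_g(y)-Z_g(x)|\le (D_n/|N(x)|)(g_{\max}-g_{\min})$. Dividing by $Z_g(x)\ge g_{\min}$ gives $|Z_g(y)/Z_g(x)-1|\le D_n(g_{\max}-g_{\min})/(|N(x)|\,g_{\min})\to 0$ uniformly over $(x,y)\in\RR$; taking the supremum yields $c_g^{(n)}\to 1$, which is \eqref{eq:asympt_smoothness}. The main obstacle I anticipate is the permutation case: unlike the binary and Ising models, where only a bounded number of summands change, a transposition perturbs linearly many ($2n-3$) of them, so the argument does not reduce to a ``bounded-degree'' count and instead relies on the neighbourhood being quadratically large ($\binom{n}{2}$) to dilute the linear perturbation. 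Verifying the exact count $2n-3$ and confirming it is $o(|N(x)|)$ is the delicate step.
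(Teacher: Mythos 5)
Your proof is correct and follows essentially the same route as the paper's: uniform two-sided bounds on $g$ over a fixed compact interval of probability ratios (guaranteed by the stated inf/sup conditions), combined with the observation that only $D_n = o(|N(x)|)$ summands of $Z_g$ change between adjacent states ($1$, at most $5$, and $2n-3$ terms respectively), so the ratio $Z_g(y)/Z_g(x)$ tends to $1$ uniformly. The paper writes this out only for the permutation case, splitting $Z_g^{(n)}(\rho)$ into the $O(n^2)$ unchanged terms and the $O(n)$ changed ones and declaring the other two examples analogous, which is exactly your dilution argument; your identification of the permutation case as the delicate one matches the paper's emphasis.
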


\section{Optimal choice of balancing function}\label{sec:choice_of_g}
In Section \ref{sec:peskun_result} we showed that, under the regularity assumption \eqref{eq:asympt_smoothness}, locally-balanced proposals are asymptotically optimal in terms of Peskun ordering.
It is thus natural to ask if there
is an optimal proposal among the locally-balanced ones or, equivalently, if there is an optimal balancing function $g$ among the ones satisfying $g(t)=tg(1/t)$ (see Table \ref{table:balancing_function}). 
Before answering this question, we first draw a connection between the choice of balancing function $g$ and the choice of acceptance probability function in the accept/reject step of the Metropolis-Hastings (MH) algorithm.
\begin{table}[h!]
\centering
\begin{tabular}{c|c|c|c|c}\hline
 & $g(t)=\sqrt{t}$  &$g(t)=\frac{t}{1+t}$  & $g(t)=1\wedge t$ & $g(t)=1\vee t$\\\hline
$Q_g(x,y)\propto$ & 
$\sqrt{\pi(y)}K(x,y)$ & 
$\frac{\pi(y)}{\pi(x)+\pi(y)}K(x,y)$ & 
$\left(1\wedge\frac{\pi(y)}{\pi(x)}\right)\hspace{-1mm}K(x,y)$ & 
$\left(1\vee\frac{\pi(y)}{\pi(x)}\right)\hspace{-1mm}K(x,y)$\\\hline
\end{tabular}\caption{Examples of locally-balanced proposals $Q_g$ obtained from different balancing functions $g$.}
\label{table:balancing_function}
\end{table}
\subsection{Connection between balancing functions and acceptance probability functions.}\label{sec:accept_bal_function}
The MH algorithm accepts each proposed state $y$ with some probability $a(x,y)$ which we refer to as acceptance probability function (APF).
Denoting the ratio $\frac{\pi(y)Q(y,x)}{\pi(x)Q(x,y)}$ by $t(x,y)$, the Metropolis-Hastings APF can be written as $a(x,y)=g(t(x,y))$ with $g(t)=1\wedge t$.
It is well known that this is not the only possible choice:
as long as
\begin{equation}\label{eq:accept_prob_balance}
\hspace{20mm}
a(x,y)=t(x,y)\,a(y,x)
\qquad
\forall x,y:\, Q(x,y)> 0\,,
\end{equation}
the resulting kernel $P$ is $\pi$-reversible and can be used for MCMC purposes.
If we write $a(x,y)$ as $g(t(x,y))$, condition \eqref{eq:accept_prob_balance} translates to
$$
g(t)=tg(1/t)\,.
$$
The latter coincides with the condition for $g$ to be a balancing function, see Theorem \ref{thm:loc_bal_iff}.
Therefore each APF $a(x,y)=g(t(x,y))$ corresponds to a balancing function $g$.
However, the family of balancing functions is broader than the family of APFs because $a(x,y)=g(t(x,y))$ are probabilities and thus need to be bounded by 1, while balancing functions don't.
For example, $g(t)=\sqrt{t}$ or $1\vee t$ 
are valid balancing functions but are not upper bounded by 1 and thus they are not a valid APF.

The connection with APFs is interesting because the latter are classical and well studied objects.
In particular it is well-known that, in the context of APFs, the Metropolis-Hastings choice $g(t)=1\wedge t$ is optimal and Peskun-dominates all other choices \citep{Peskun1973,Tierney1998}.
This fact, however, does not translate to the context of balancing functions.
In the latter case the comparison between different $g$'s is more subtle and we expect no choice of balancing function to Peskun-dominate the others in general, not even asymptotically.
In the next section we study a simple scenario and show that, in that case, the optimal balancing function is given by $g(t)=\frac{t}{1+t}$.
Interestingly, the latter choice leads to a natural balancing term $g(\frac{\pi(y)}{\pi(x)})=\frac{\pi(y)}{\pi(x)+\pi(y)}$, which has been previously considered in the context of APFs and is commonly referred to as Barker choice \citep{Barker1965}.

\subsection{The optimal proposal for independent binary variables}\label{sec:hypercube}
In this section we compare the efficiency of different locally-balanced proposals in the independent binary components case of Example \ref{ex:binary}.
It turns out that in this specific case the Barker balancing function $g(t)=\frac{t}{1+t}$ leads to the smallest mixing time. 

From Example \ref{ex:binary}, each move from $\bx_{1:n}$ to a neighbouring state $\by_{1:n}\in N(\bx_{1:n})$ is obtained by 
flipping one component of $\bx_{1:n}$, say the $i$-th bit, either from $0$ to $1$ or from $1$ to $0$. 
We denote the former by $\by_{1:n}=\bx_{1:n}+\e^{(i)}_{1:n}$ and the latter by $\by_{1:n}=\bx_{1:n}-\e^{(i)}_{1:n}$.
The pointwise informed proposal $Q^{(n)}_g$ defined in \eqref{eq:informed_proposal} can then be written as
\begin{equation}\label{eq:proposal0}
Q^{(n)}_g(\bx_{1:n},\by_{1:n})=\frac{1}{Z_g^{(n)}(\bx_{1:n})}\left\{
\begin{array}{ll}
g(\frac{p_i}{1-p_i})& \hbox{if }\by_{1:n}=\bx_{1:n}+\e^{(i)}_{1:n}\,,\\
g(\frac{1-p_i}{p_i})& \hbox{if }\by_{1:n}=\bx_{1:n}-\e^{(i)}_{1:n}\,,\\
0 & \hbox{if }\by_{1:n}\notin N\big(\bx_{1:n}\big)\,.
\end{array}
\right.
\end{equation}
In order to compare the efficiency of $Q^{(n)}_g$ for different choices of $g$ we proceed in two steps.
First we show that, after appropriate time-rescaling, the Metropolis-Hastings chain of interest converges to a tractable continuous time process as $n\rightarrow \infty$ (Theorem \ref{theorem:limit_hypercube}).
Secondly we find which choice of $g$ induces the fastest mixing on the limiting continuous-time process.
Similar asymptotic approaches are well-established in the literature to compare MCMC schemes (see e.g.\ \citealp{Roberts1997}).

To simplify the following discussion we first rewrite $Q^{(n)}_g$ as
\begin{equation}\label{eq:hypercube_proposal_general}
Q^{(n)}_g(\bx_{1:n},\by_{1:n})=\frac{1}{Z_g^{(n)}(\bx_{1:n})}\left\{
\begin{array}{ll}
v_i\,c_i\,(1-p_i)& \hbox{if }\by_{1:n}=\bx_{1:n}+\e^{(i)}_{1:n}\,,\\
v_i\, (1-c_i)\,p_i& \hbox{if }\by_{1:n}=\bx_{1:n}-\e^{(i)}_{1:n}\,,\\
0 & \hbox{if }\by_{1:n}\notin N\big(\bx_{1:n}\big)\,,
\end{array}
\right.
\end{equation}
where $c_i\in(0,1)$ and $v_i>0$ are the solution of $v_i c_i (1-p_i)=g(\frac{p_i}{1-p_i})$ and $v_i (1-c_i) p_i=g(\frac{p_i}{1-p_i})$.
Given \eqref{eq:hypercube_proposal_general}, finding the optimal $g$ corresponds to finding the optimal values for the two sequences $(v_1,v_2,\dots)$ and $(c_1,c_2,\dots)$.
In the following we assume $\inf_{i\in\N}p_i>0$, $\sup_{i\in\N}p_i<1$ and the existence of $\lim_{n\rightarrow\infty}\frac{\sum_{i=1}^nv_i\,p_i(1-p_i)}{n}>0$.
The latter is a mild assumption to avoid pathological behaviour of the sequence of $p_i$'s and guarantee the existence of a limiting process.

Let $\textbf{X}^{(n)}(t)$ be the MH Markov chain with proposal $Q^{(n)}_g$ and target $\pi^{(n)}$.
For any real time $t$ and positive integer $k\leq n$, we define
$$
S^{(n)}_{1:k}(t)=\left(X_1^{(n)}(\lfloor nt\rfloor),\dots,X_k^{(n)}(\lfloor nt\rfloor)\right)\,,
$$
with $\lfloor nt\rfloor$ being the largest integer smaller than $nt$.
Note that $S^{(n)}_{1:k}=(S^{(n)}_{1:k}(t))_{t\geq 1}$ is a continuous-time (non-Markov) stochastic process on $\{0,1\}^k$ describing the first $k$ components of $(\textbf{X}^{(n)}(t))_{t\geq 1}$.

\begin{theorem}\label{theorem:limit_hypercube}
Let $\textbf{X}^{(n)}(1)\sim \pi^{(n)}$ for every $n$. 
For any positive integer $k$, it holds
$$S^{(n)}_{1:k} \stackrel{n\rightarrow\infty}\Longrightarrow S_{1:k},$$
where $\Rightarrow$ denotes weak convergence and $S_{1:k}$ is a continuous-time Markov chain on $\{0,1\}^k$ with jumping rates given by 
\begin{equation}\label{eq:q_matrix}
\normalfont{
A\,(\bx_{1:k},\by_{1:k})=\left\{
\begin{array}{ll}
e_i(\textbf{v},c_i)\cdot (1-p_i)& \hbox{if }\by_{1:k}=\bx_{1:k}+\e^{(i)}_{1:k}\,,\\
e_i(\textbf{v},c_i)\cdot p_i& \hbox{if }\by_{1:k}=\bx_{1:k}-\e^{(i)}_{1:k}\,,\\
0 & \hbox{if }\by_{1:k}\notin N\big(\bx_{1:k}\big)\hbox{ and }\by_{1:k}\neq \bx_{1:k}\,,
\end{array}
\right.
}
\end{equation}
where
\begin{equation}\label{eq:flipping_rate}
e_i(\textbf{v},c_i)=\frac{1}{\bar{Z}(\textbf{v})}\,v_i\left( (1-c_i)\wedge c_i\right)
\end{equation}
with
$\bar{Z}(\textbf{v})=\lim_{n\rightarrow\infty}\frac{\sum_{i=1}^nv_i\,p_i(1-p_i)}{n}$.
\end{theorem}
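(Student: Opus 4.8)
The plan is to compute the one-step Metropolis--Hastings transition probabilities of single-bit flips among the first $k$ coordinates, show that after multiplication by $n$ they converge to the rates in \eqref{eq:q_matrix}, and then upgrade this pointwise statement to weak convergence of the whole trajectory. The main difficulty is that $S^{(n)}_{1:k}$ is \emph{not} Markov: the evolution of the first $k$ coordinates couples to the remaining $n-k$ through the proposal normalizing constant $Z_g^{(n)}(\bx)$. The observation that unlocks the proof is that, since the chain is stationary, $Z_g^{(n)}(\textbf{X}^{(n)}(m))/n$ concentrates around the constant $\bar{Z}(\textbf{v})$ at every time step, so the first $k$ coordinates asymptotically see a frozen normalizing constant and decouple from the rest.

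First I would carry out the explicit one-step computation. Fix $\bx$ with $x_i=0$ and consider the up-flip $\by=\bx+\e^{(i)}_{1:n}$. Combining the proposal \eqref{eq:hypercube_proposal_general} with the Metropolis--Hastings acceptance probability and using $\pi^{(n)}(\by)/\pi^{(n)}(\bx)=(1-p_i)/p_i$, a short calculation gives
\begin{equation*}
P^{(n)}(\bx,\by)
=\frac{v_i\,c_i\,(1-p_i)}{Z_g^{(n)}(\bx)}\,
\min\left\{1,\;\frac{1-c_i}{c_i}\cdot\frac{Z_g^{(n)}(\bx)}{Z_g^{(n)}(\by)}\right\}\,.
\end{equation*}
Since flipping bit $i$ changes only the $i$-th summand of $Z_g^{(n)}$, one has $Z_g^{(n)}(\by)=Z_g^{(n)}(\bx)+O(1)$, whence $Z_g^{(n)}(\bx)/Z_g^{(n)}(\by)\to1$ and the acceptance factor tends to $\min\{1,(1-c_i)/c_i\}=((1-c_i)\wedge c_i)/c_i$. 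Multiplying by $n$ and using $Z_g^{(n)}(\bx)/n\to\bar{Z}(\textbf{v})$ yields $n\,P^{(n)}(\bx,\by)\to v_i((1-c_i)\wedge c_i)(1-p_i)/\bar{Z}(\textbf{v})=e_i(\textbf{v},c_i)(1-p_i)$, exactly the up-rate in \eqref{eq:q_matrix}; the down-flip is identical with $p_i$ in place of $1-p_i$.

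Next I would prove the concentration of the normalizing constant. Under $\pi^{(n)}$ the coordinates are independent, so $Z_g^{(n)}(\bx)=\sum_{j=1}^n W_j$ with $W_j=v_j c_j(1-p_j)\1_{x_j=0}+v_j(1-c_j)p_j\1_{x_j=1}$ independent, bounded, and with means summing to $\sum_j v_j p_j(1-p_j)$. Because $\textbf{X}^{(n)}(1)\sim\pi^{(n)}$ and Metropolis--Hastings preserves $\pi^{(n)}$, we have $\textbf{X}^{(n)}(m)\sim\pi^{(n)}$ for every $m$; a Hoeffding bound for the bounded $W_j$ together with a union bound over the $O(nT)$ time steps then gives
\begin{equation*}
\max_{1\le m\le nT}\left|\frac{Z_g^{(n)}(\textbf{X}^{(n)}(m))}{n}-\bar{Z}(\textbf{v})\right|\;\xrightarrow{n\to\infty}\;0
\qquad\text{in probability.}
\end{equation*}
Finally I would transfer convergence by coupling. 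Introduce the idealized Markov chain $\tilde{S}^{(n)}_{1:k}$ on $\{0,1\}^k$ that at each step flips bit $i$ up (resp.\ down) with probability exactly $e_i(\textbf{v},c_i)(1-p_i)/n$ (resp.\ $e_i(\textbf{v},c_i)p_i/n$); run for $\lfloor nt\rfloor$ steps and started from the first-$k$ marginals of $\pi^{(n)}$, its rescaling converges, by the standard discrete-to-continuous time limit, to the product of two-state continuous-time chains $S_{1:k}$ with generator \eqref{eq:q_matrix} (one checks $\prod_{i\le k}p_i^{1-x_i}(1-p_i)^{x_i}$ is both its stationary and its initial law). On the high-probability event of the previous step, the one-step marginal laws of the first $k$ coordinates under $P^{(n)}$ and under $\tilde{S}^{(n)}_{1:k}$ differ in total variation by $o(1/n)$ uniformly, by the computation above; summing over the $\lfloor nT\rfloor$ steps shows the two processes can be coupled to coincide on $[1,T]$ with probability tending to $1$, so $S^{(n)}_{1:k}\Rightarrow S_{1:k}$. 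The step I expect to be the main obstacle is exactly this uniform-in-time control of $Z_g^{(n)}$: a direct generator-convergence argument fails because $Z_g^{(n)}(\bx)/n$ converges only for $\pi^{(n)}$-typical $\bx$, not for every $\bx$, so one must exploit stationarity and exponential concentration to beat the union bound over the $O(n)$ steps per unit of rescaled time and route the conclusion through the coupling rather than through a pointwise generator limit.
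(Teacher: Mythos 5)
Your proposal is correct, and its first two ingredients coincide with the paper's: the explicit single-flip computation (your $P^{(n)}(\bx,\by)$ is exactly $n^{-1}A^{(n)}(\bx_{1:k},\by_{1:k})$ in the paper's notation, including the observation that $Z_g^{(n)}(\by)/Z_g^{(n)}(\bx)\to 1$ uniformly because a flip changes one bounded summand in a sum of order $n$), and concentration of $Z_g^{(n)}/n$ under stationarity (the paper's Lemma \ref{lemma:core}). Where you genuinely diverge is in how pointwise rate convergence is upgraded to weak convergence of the process. The paper treats $S^{(n)}_{1:k}$ as a non-Markov process whose conditional jump intensities converge, uniformly over a high-probability set of hidden configurations $R^{(k)}_n$, to the rates of $S_{1:k}$, and then concludes by citing Corollary 8.7 of Chapter 4 of Ethier and Kurtz (1986); for that route, fixed-time control of $Z_g^{(n)}/n$ via Chebyshev's inequality (failure probability $O(n^{-1/2})$ per time point) suffices, because stationarity makes the same bound hold at every time and the martingale-problem machinery does the integration over time. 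Your route instead couples the true chain to an idealized Markov chain with frozen transition probabilities and invokes only the elementary discrete-to-continuous-time limit; this is more self-contained, but, as you correctly identify, it requires controlling $Z_g^{(n)}/n$ simultaneously at all $O(nT)$ steps, so the per-step failure probability must be $o(1/n)$ --- Chebyshev would not survive the union bound, and your appeal to Hoeffding's inequality (legitimate, since under $\pi^{(n)}$ the summands $W_j$ are independent and uniformly bounded) is exactly the extra strength needed. In short: the paper buys a shorter argument at the price of the Ethier--Kurtz machinery; you buy elementarity at the price of exponential concentration and a step-by-step maximal-coupling construction. Both are valid.
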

We can now use Theorem \ref{theorem:limit_hypercube} and the simple form of the limiting process $S_{1:k}$ to establish what is the asymptotically optimal proposal $Q^{(n)}_g$.
In fact \eqref{eq:q_matrix} implies that, in the limiting process $S_{1:k}$, each bit is flipping independently of the others, with flipping rate of the $i$-th bit being proportional to $e_i(\textbf{v},c_i)$.
Moreover, from \eqref{eq:flipping_rate} we see that the parameter $c_i$ influences only the behaviour of the $i$-th component.
Therefore, each $c_i$ can be independently optimized by maximizing $e_i(\textbf{v},c_i)$, which leads to $c_i=\frac{1}{2}$ for every $i$.
By definition of $c_i$, the condition $c_i=\frac{1}{2}$ corresponds to $g(\frac{p_i}{1-p_i})=\frac{p_i}{1-p_i}g(\frac{1-p_i}{p_i})$. Therefore requiring $c_i=\frac{1}{2}$ for all $i$ corresponds to using a balancing function $g$ satisfying $g(t)=tg(1/t)$.
This is in accordance with the results of Section \ref{sec:peskun_result} and with the intuition that locally-balanced proposal are asymptotically optimal in high-dimensions.

Let us now consider the parameters $(v_1,v_2,...)$ . Given $c_i=\frac{1}{2}$, different choices of $(v_1,v_2,...)$  correspond to different locally-balanced proposals.
From \eqref{eq:flipping_rate} we see that each $v_i$ affects the flipping rate of all components through the normalizing constant $\bar{Z}(\textbf{v})$, making the optimal choice of $v_i$ less trivial.
Intuitively, the parameter $v_i$ represents how much effort we put into updating the $i$-th component, and increasing $v_i$ reduces the effort put into updating other components.
In order to discriminate among various choices of $(v_1,v_2,...)$ we look for the choice that minimizes the mixing time of $S_{1:k}$ for $k$ going to infinity.
Although this is not the only possible criterion to use, it is a reasonable and natural one. 
As we discuss in Appendix \ref{appendix:product_chains}, the latter is achieved by minimizing the mixing time of the slowest bit, which corresponds to choosing $v_i$ constant over $i$. 
Intuitively, this means that we are sharing the sampling effort equally across components.
It follows that the asymptotically optimal proposal $Q^{(n)}_{g_{opt}}$ is
\begin{equation}\label{eq:hypercube_proposal_optimal}
\normalfont{
Q^{(n)}_{g_{opt}}(\bx_{1:n},\by_{1:n})
\;\propto\;\left\{
\begin{array}{ll}
(1-p_i)& \hbox{if }\by_{1:n}=\bx_{1:n}+\e^{(i)}_{1:n}\,,\\
p_i& \hbox{if }\by_{1:n}=\bx_{1:n}-\e^{(i)}_{1:n}\,,\\
0 & \hbox{if }\by_{1:n}\notin N\big(\bx_{1:n}\big)\,.
\end{array}
\right.
}
\end{equation}
The latter can be written as 
$$
Q_{g_{opt}}^{(n)}(\bx_{1:n},\by_{1:n})
\propto
\frac{\pi^{(n)}(\by_{1:n})}{\pi^{(n)}(\bx_{1:n})+\pi^{(n)}(\by_{1:n})}
\1_{N(\bx_{1:n})}(\by_{1:n})
$$
which means that the optimal balancing function is 
$$
g_{opt}(t)=\frac{t}{1+t}\,.
$$

\section{Connection to MALA and gradient-based MCMC}\label{sec:MALA}
In the context of continuous state spaces, such as $\sX=\R^n$, it is typically not feasible to sample efficiently from pointwise informed proposals as defined in \eqref{eq:informed_proposal}. 
A natural thing to do in this context is to replace the intractable term in $g(\frac{\pi(y)}{\pi(x)})$, i.e.\ the target $\pi(y)$, with 
 some local approximation around 
the current location $x$.
For example, using a first-order Taylor expansion
$e^{\log \pi (y)}\approx e^{\log \pi (x)+\nabla\log \pi(x)\cdot(y-x)}$ we obtain a family of first-order informed proposals of the form
\begin{equation}\label{eq:first_order}
\Qgs^{(1)}(x,dy)
\quad\propto\quad
g\left(
e^{\nabla\log \pi(x)\cdot(y-x)}
\right)K_\sigma(x,dy)\,,
\end{equation}
for $K_\sigma$ symmetric and $g$ satisfying \eqref{eq:balancing_equation}.
Interestingly, the well known MALA proposal (e.g.\ \citealp{Roberts1998MALA}) can be obtained from \eqref{eq:first_order} by choosing $g(t)=\sqrt{t}$ and a gaussian kernel $K_\sigma(x,\cdot)=N(x,\sigma^2)$.
Therefore we can think at MALA as a specific instance of locally-balanced proposal with first-order Taylor approximation.
This simple and natural connection between locally-balanced proposals and classical gradient based schemes hints to many possible extentions of the latter, such as modifying the balancing function $g$ or kernel $K_\sigma$ or considering a different approximation for $\pi(y)$.
The flexibility of the resulting framework could help to increase the robustness and efficiency of gradient-based methods.
Recently, \cite{Titsias2016} considered different but related approaches to generalize gradient-based MCMC schemes, achieving state of the art sampling algorithms for various applications compared to both MALA and HMC.
Given the focus of this paper on discrete spaces, we do not pursue this avenue here, leaving this research lines to future work (see Section \ref{sec:discussion}).

\section{Simulation studies}\label{sec:simulations}
In this section we perform simulation studies using the target distributions of Examples \ref{ex:perfect_matchings} and \ref{ex:ising}.
All computations are performed using the $R$ programming language with code 
freely available upon request.
The aim of the simulation study is two-folded: first comparing informed schemes with random walk ones, and secondly comparing different constructions of informed schemes among themselves.

\subsection{MCMC schemes under consideration}
We compare six schemes: random walk MH (RW), a globally-balanced proposal (GB), two locally-balanced proposals (LB1 and LB2), the Hamming Ball sampler (HB)  proposed in \citep{Titsias2017} and the discrete HMC algorithm (D-HMC) proposed in \citet{Pakman2013}.
The first four schemes (RW-GB-LB1-LB2) are Metropolis-Hastings algorithms with pointwise informed proposals of the form $Q_g(x,y)\propto g\left(\frac{\pi(y)}{\pi(x)}\right)K(x,y)$, with 
 $g(t)$ equal to $1$, $t$, $\sqrt{t}$ and $\frac{t}{1+t}$ respectively.
HB is a data augmentation scheme that, given the current location $x_t$, first samples an auxiliary variable $u\sim K(x_t,\cdot)$ and then samples the new state $x_{t+1}\sim Q_{\pi}(u,\cdot)$, where $Q_{\pi}(u,y)\propto \pi(y)K(u,y)$ is defined as in \eqref{eq:Qpi}. No acceptance-reject step is required as the chain is already $\pi$-reversible, being interpretable as a two stage Gibbs sampler on the extended state space $(x,u)$ with target $\pi(x)K(x,u)$.
To have a fair comparison, all these five schemes use the same base kernel $K$, defined as $K(x,\cdot)=Unif(N(x))$ with neighbouring structures $\{N(x)\}_{x\in\sX}$ defined in Examples \ref{ex:perfect_matchings} and \ref{ex:ising}.
Finally, D-HMC is a sampler specific to binary target distributions (thus applicable to Example \ref{ex:ising} but not to Examples \ref{ex:perfect_matchings}) constructed by first embedding the binary space in a continuous space and then applying HMC in the latter.
For its implementation we followed \citet{Pakman2013}, using a Gaussian distribution for the momentum variables and an integration time equal to $2.5\pi$. 
We will be talking about acceptance rates for all schemes, even if HB and D-HMC are not constructed as MH schemes. 
For HB, we define the acceptance rate as the proportion of times that the new state $x_{t+1}$ is different from the previous state $x_t$ (indeed the sampling procedure $u\sim K(x_t,\cdot)$ and $x_{t+1}\sim Q_{\pi}(u,\cdot)$ does often return $x_{t+1}=x_{t}$).
For D-HMC we define the acceptance rate as the proportion of times that a proposal to flip a binary component in the HMC flow is accepted (using the \citet{Pakman2013} terminology, the proportion of times that the particle crosses a potentiall wall rather than bouncing back).
Such definitions will facilitate comparison with MH schemes.

\subsection{Sampling permutations}\label{sec:sampling_matchings}
Consider the setting of Example \ref{ex:perfect_matchings}, with target density $\pi^{(n)}(\rho)\propto \prod_{i=1}^{n}w_{i\rho(i)}$ defined in \eqref{eq:matchings_target} and base kernel $K(\rho,\cdot)$ being the uniform distribution on the neighborhood $N(\rho)$ defined in \eqref{eq:perfect_matchings_neighboring}.
The distribution $\pi^{(n)}$ arises in many applied scenarios (e.g.\ \citealp{Dellaert2003,Oh2009,Zanella2015}) and sampling from the latter is a non-trivial task that is often accomplished with MCMC algorithms (see e.g.\ \citet{Jerrum1996} for related complexity results).

For our simulations we first consider the case of i.i.d.\ weights $\{w_{ij}\}_{i,j=1}^n$ with $\log (w_{ij})\sim N(0,\lambda^2)$.
Here $n$ and $\lambda$ provide control on the dimensionality and the smoothness of the target distribution respectively.
For example, when $\lambda=0$ the target distribution is uniform and the five schemes under consideration (D-HMC not applicable) collapse to the same transition kernel, which is $K(\rho,\cdot)$ itself (modulo HB performing two steps per iteration).
On the other hand, as $\lambda$ increases the difference between RW and informed schemes becomes more prominent (Figure \ref{fig:perfect_matchings_1}).
\begin{figure}[h!]
\includegraphics[width=\linewidth]{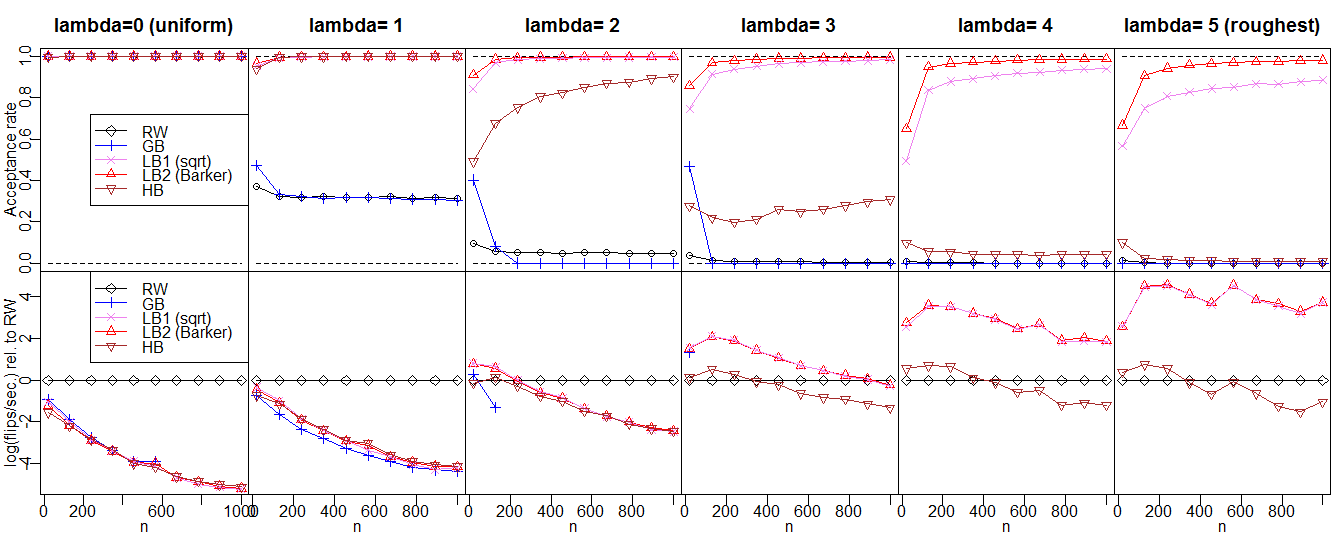}
\caption{Behaviour of the five MCMC schemes under consideration when targeting Example \ref{ex:perfect_matchings} with $\log (w_{ij})\stackrel{iid}\sim N(0,\lambda^2)$ for varying $n$ and $\lambda$.
First row: average acceptance rates. 
Second row: number of flips per unit of computation time relative to RW (on log scale).}\label{fig:perfect_matchings_1}
\end{figure}
In fact, for ``rough'' distributions, most states proposed by RW have small probability under the target and get rejected.
Despite being more robust than RW, also GB and HB suffer from high rejection rates as $\lambda$ increase.
On the contrary, the acceptance rates of LB1 and LB2, which are designed to be asymptotically $\pi$-reversible, approaches 1 as $n$ increases for all values of $\lambda$ considered (Figure \ref{fig:perfect_matchings_1}, first row).
In order to take into account the cost per iteration, which is substantially higher for informed schemes, we then compare the number of successful ``flips'' (i.e.\ switches of two indices) per unit of computation time.
The latter is a useful and easy to estimate diagnostic that provides a first gross indication of the relative efficiency of the various schemes.
Figure \ref{fig:perfect_matchings_1} suggests that for flatter targets (i.e.\ small values of $\lambda$) the computational overhead required to use informed proposal schemes is not worth the effort, as a plain RW proposal achieves the highest number of flips per unit time.
However, as the roughness increases and the target distribution becomes more challenging to explore, informed proposals (in particular LB1 and LB2) achieve a much higher number of flips per unit time.
Similar conclusions are obtained by using more advanced diagnostic tools, such as effective sample size per unit time.
For example, Figure \ref{fig:traceplots_iid} displays the results for $n=500$ and $\lambda=5$, suggesting that for rough targets locally-balanced schemes are are one to two orders of magnitude more efficient than the other schemes under consideration (see Supplement \ref{supp:matchings} for results for all values of $\lambda$). 
\begin{figure}[h!]
\includegraphics[width=\linewidth]{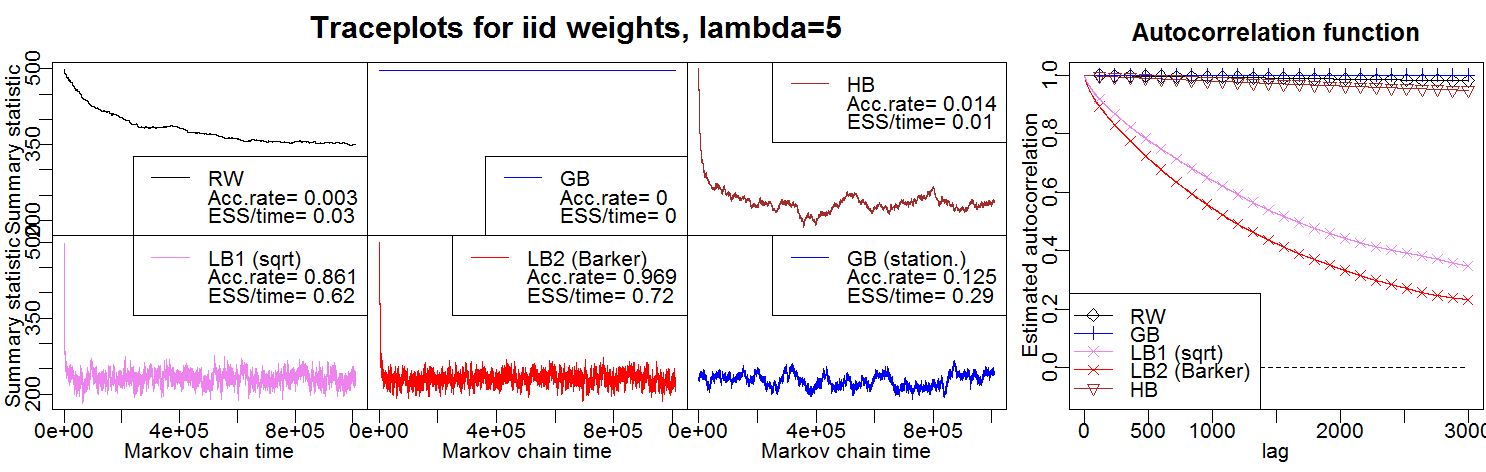}
\caption{Setting: $n=500$ and $\log (w_{ij})\stackrel{iid}\sim N(0,\lambda^2)$ with $\lambda=5$.
Left: traceplots of a summary statistic (Hamming distance from fixed permutation). 
Right: estimated autocorrelation functions.
}\label{fig:traceplots_iid}
\end{figure}
Note that GB is extremely sensitive to the starting state: if the latter is unlikely under the target (e.g.\ a uniformly at random permutation) the chain gets stuck and reject almost all moves, while if started in stationarity (i.e. from a permutation approx.\ drawn from the target) the chain has a more stable behaviour (see Figure \ref{fig:traceplots_iid}).

Finally, consider a more structured case where, rather than iid weights, we sample $w_{ij}\sim \exp(-\chi^2_{|i-j|})$.
\begin{figure}[h!]
\includegraphics[width=\linewidth]{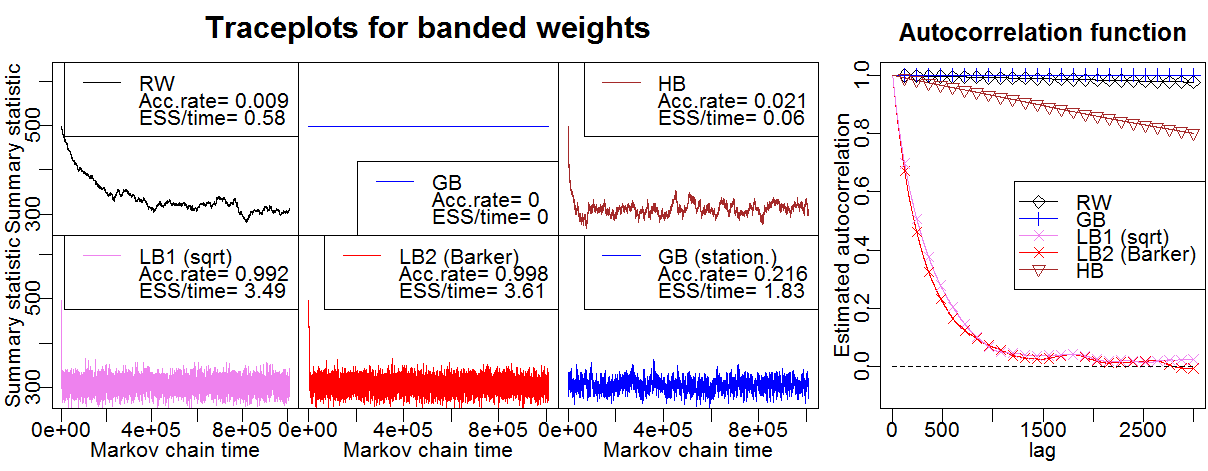}
\caption{Setting: $n=500$ and $w_{ij}\sim \exp(-\chi^2_{|i-j|})$.
Left: traceplots of a summary statistic (Hamming distance from fixed permutation). 
Right: estimated autocorrelation functions.
}\label{fig:banded}
\end{figure}
The resulting matrix $\{w_{ij}\}_{i,j=1}^n$ has a banded-like structure, with weights getting smaller with distance from the diagonal.
Figure \ref{fig:banded} shows the performances of the different schemes. 
The results are similar to the iid case, with an even more prominent difference in efficiency between HB and LB1-LB2.

\subsection{Ising model}\label{sec:sampling_ising}
Consider now the Ising model described in Example \ref{ex:ising}. 
The latter is a classic model used in many scientific areas, e.g.\ statistical mechanics and spatial statistics. 
In this simulation study we consider target distributions motivated by Bayesian image analysis, where one seeks to partition an image into objects and background.
In the simplest version of the problem, each pixel $i$ needs to be classified as object ($x_i=1$) or background ($x_i=-1$).
One approach to such task is to define a Bayesian model, using the Ising model (or the Potts model in more general multi-objects contexts) as a prior distribution to induce positive correlation among neighboring pixels.
The resulting posterior distribution is made of a prior term $\exp(\lambda\sum_{(i,j)\in E_n}x_ix_j)$
times a likelihood term $\exp(\sum_{i\in V_n}\alpha_i x_i)$, which combined produce a distribution of the form \eqref{eq:ising_model}.
See Supplement \ref{supp:Ising} for more details on the derivation of such distributions and \cite{Moores2015application} for recent applications to computed tomography.

\begin{figure}[h!]
\includegraphics[width=\linewidth]{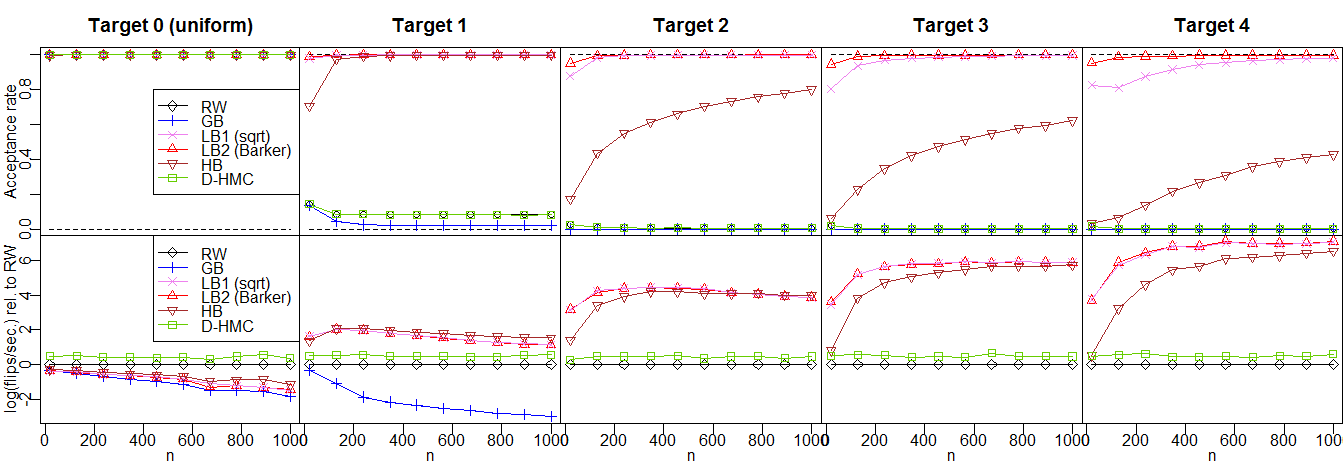}
\caption{
Behaviour of the six MCMC schemes under consideration when targeting Example \ref{ex:ising} for varying $n$ and level of concentration of the target.
First row: average acceptance rates. 
Second row: number of flips per unit of computing time relative to RW (on log scale).
}\label{fig:ising_acc_rates}
\end{figure}
Similarly to Section \ref{sec:sampling_matchings}, we considered an array of target distributions, varying the size of the $n\times n$ grid and the concentration of the target distribution (controlled by the strength of spatial correlation $\lambda$ and signal-to-noise ratio in the likelihood terms $\alpha_i$).
Figure \ref{fig:ising_acc_rates} reports the acceptance rates and number of flips per unit of computing time for the six MCMC schemes under consideration and for five levels of target concentration (see Supplement \ref{supp:Ising} for full details on the set up for $\lambda$ and the $\alpha_i$'s).
RW, GB and D-HMC have very low acceptance rates for all non-uniform distributions considered (see the Appendix of \cite{Pakman2013} for discussion on the similar acceptance rates between RW and D-HMC).
HB, LB1 and LB2 on the contrary do not suffer from high rejection rates and achieve 
a much higher number of 
flips per unit time.
However, despite having a good number of flips per second, HB suffers from poor mixing in most cases considered (see e.g.\ Figure \ref{fig:traceplots_Ising_4} and Supplement \ref{supp:Ising} for more examples).
In summary, the results are similar to Section \ref{sec:sampling_matchings}, 
with locally balanced schemes (LB1 and LB2) being orders of magnitude more efficient than alternative schemes especially when targeting highly non-uniform targets (see effective sample sizes per unit time in Figure \ref{fig:traceplots_Ising_4} and Supplement \ref{supp:Ising}).
\begin{figure}[h!]
\includegraphics[width=\linewidth]{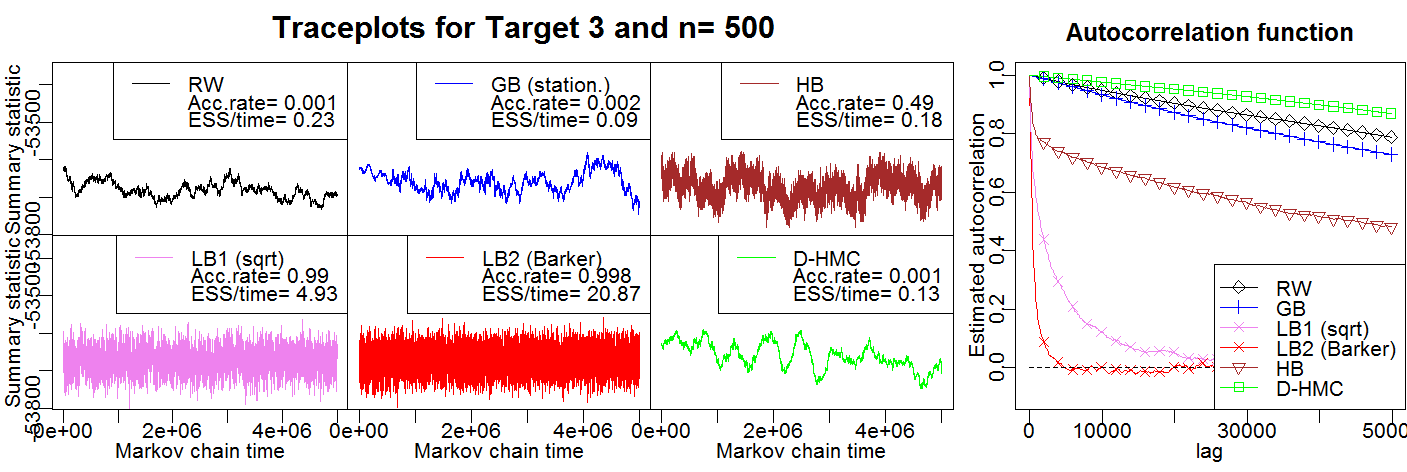}
\caption{Setting: Ising model with $n=500$, $\lambda=1$ and $\alpha_i$'s described in Supplement \ref{supp:Ising}.
Left: traceplots of a summary statistic (Hamming distance from fixed permutation). 
For D-HMC the plot displays the whole trajectory, including the path during the integration period.
Right: estimated autocorrelation functions.
}\label{fig:traceplots_Ising_4}
\end{figure}

Given the focus of this work, we do not considered specialized algorithms for the Ising model performing global updates (e.g.\ \cite{Swendsen1987}). 
The latter are somehow complementary to the single-site updating schemes considered here as they perform much better when the target is multimodal (e.g.\ in the absence of informative likelihood terms and with moderately strong interaction term $\lambda$), while they perform poorly in cases where the likelihood terms $\{\alpha_i\}_{i\in V_n}$ dominate, like the image analysis context considered here (see see e.g.\ \citet{Hurn1997} and \citet{Moores2015} for more discussion).

\subsection{Block-wise implementation}\label{sec:blocking}
In many scenarios, as the dimensionality of the state space increases, also the computational cost of sampling from the pointwise informed proposals as defined in \eqref{eq:informed_proposal} increases.
For example, in discrete space settings it is typical to have a base kernel $K(x,\cdot)$ which is a uniform distribution on some neighborhood $N(x)\subseteq\sX$ whose size grows with the dimensionality of $\sX$.
In these cases, when the size of $N(x)$ becomes too large, it may be inefficient to use an informed proposal on the whole neighborhood $N(x)$.
Rather, it may be more efficient to first select a sub-neighborhood $N'(x)\subseteq N(x)$ at random, and then apply locally-balanced proposals to the selected subspace. 
For example, if the state space under consideration has a Cartesian product structure, then 
one could update a subset of variables given the others in a \emph{block-wise} fashion, analogously to the Gibbs Sampling context.
By choosing appropriately the size of $N'(x)$ one can obtain an appropriate trade-off between computational cost (a small $N'(x)$ induces an informed proposal that is cheap to sample) and statistical efficiency (a large $N'(x)$ produces better infomed moves as it considers more candidate states at each step).
Such an approach is illustrated in Section \ref{sec:record_linkage} and Supplement \ref{appendix:blocking} on a record linkage application.
See also \cite{Titsias2017} for additional discussion on block-wise implementations and the resulting cost-vs-efficiency tradeoff in the context of the Hamming Ball sampler and Section \ref{sec:discussion} for related comments in future works discussion.

\section{Application to Bayesian record linkage}\label{sec:record_linkage}
Record linkage, also known as entity resolution, is the process of identifying which records refer to the same entity across two or more databases with potentially repeated entries.
Such operation is typically performed to remove duplicates when merging different databases.
If records are potentially noisy and unique identifiers are not available, statistical methods are needed to perform reliable record linkage operations.
While traditional record linkage methodologies are based on the early work \citet{Fellegi1969}, Bayesian approaches to record linkage are receiving increasing attention in recent years \citep{Tancredi2011,SteortsHallFienberg2016,Sadinle2017,Johndrow2017}. 
Such approaches are particularly interesting, for example, as they provide uncertainty statements on the linkage procedure that can be naturally propagated to subsequent inferences, such as population-size estimation \citep{Tancredi2011}.
Despite recent advances in Bayesian modeling for record linkage problems (see e.g.\ \citet{Zanella2016NIPS}), exploring the posterior distribution over the space of possible linkage configurations is still a computationally challenging task which is limiting the applicability of Bayesian record linkage methodologies.
In this section we use locally-balanced proposals to derive efficient samplers for Bayesian record linkage.

We consider bipartite record linkage tasks, where one seeks to merge two databases with duplicates occurring across databases but not within.
This is the most commonly considered case in the record linkage literature (see \cite{Sadinle2017} and references therein).
Denote by $\bx=(x_1,\dots,x_{n_1})$ and $\by=(y_1,\dots,y_{n_2})$ the two databases under consideration.
In this context, the parameter of interest is a partial matching between $\{1,\dots,n_1\}$ and $\{1,\dots,n_2\}$, where $i$ is matched with $j$ if and only if $x_i$ and $y_j$  represent the same entity.
We represent such a matching with a $n_1$-dimensional vector $\textbf{M}=(M_1,\dots,M_{n_1})$, where $M_i=j$ if $x_i$ is matched with $y_j$ and $M_i=0$ if $x_i$ is not matched with any $y_j$.
In Supplement \ref{appendix:record_linkage} we specify a Bayesian model for bipartite record linkage, assuming the number of entities and the number of duplicates to follow Poisson distributions a priori, and assuming the joint distribution of $(\bx,\by)$ given $\textbf{M}$ to follow a spike-and-slab categorical distribution (often called hit-miss model in the Record Linkage literature \citep{CopasHilton1990}).
The unknown parameters of the model are the partial matching $\textbf{M}$ and two real-valued hyperparameters $\lambda$ and $p_{match}$, representing the expected number of entities and the probability of having a duplicate for each entity. 
We perform inferences in a Metropolis-within-Gibbs fashion, where we alternate sampling $(p_{match},\lambda)|\textbf{M}$ and  $\textbf{M}|(p_{match},\lambda)$, see Supplement \ref{appendix:Metropolis_within_Gibbs} for full details on the sampler.
While it is straightforward to sample from the two-dimensional distribution $(p_{match},\lambda)|\textbf{M}$, see \eqref{eq:full_cond_p}-\eqref{eq:full_cond_lambda} in Supplement \ref{appendix:record_linkage} for explicit full conditionals, providing an efficient way to update the high-dimensional discrete object $\textbf{M}$ is more challenging and this is where we exploit locally-balanced proposals.

We consider a dataset derived from the Italian Survey on Household and Wealth, which is a biennial survey conducted by the Bank of Italy.
The dataset is publicy available (e.g. through the $Italy$ R package) and consists of  two databases, the 2008 survey (covering 13,702 individuals) and the 2010 one (covering 13,733 individuals).
For each individual, the survey recorded 11 variables, such as year of birth, sex and working status.

First, following \cite{Steorts2015EB}, we perform record linkage for each Italian region separately.
This results in 20 separate record linkage tasks with roughly 1300 individuals each on average.
We ran four MCMC schemes for each record linkage task and compare their performances.
Following the notation of Section \ref{sec:simulations}, the four schemes are RW, GB, LB and HB, where LB refers to the locally-balanced proposals with Barker weights.
Figure \ref{fig:region_1}(a) shows the number of matches over MCMC iterations for region 1 (other regions show a similar qualitative behaviour).
We can see that LB and HB converge rapidly to the region of substantial posterior probability, while RW and GB exhibit an extremely slow convergence.
HB, however, converges and mixes significantly slower than LB (see e.g. the autocorrelation functions in \ref{fig:region_1}(b)).
\begin{figure}[h!]
\includegraphics[width=\linewidth]{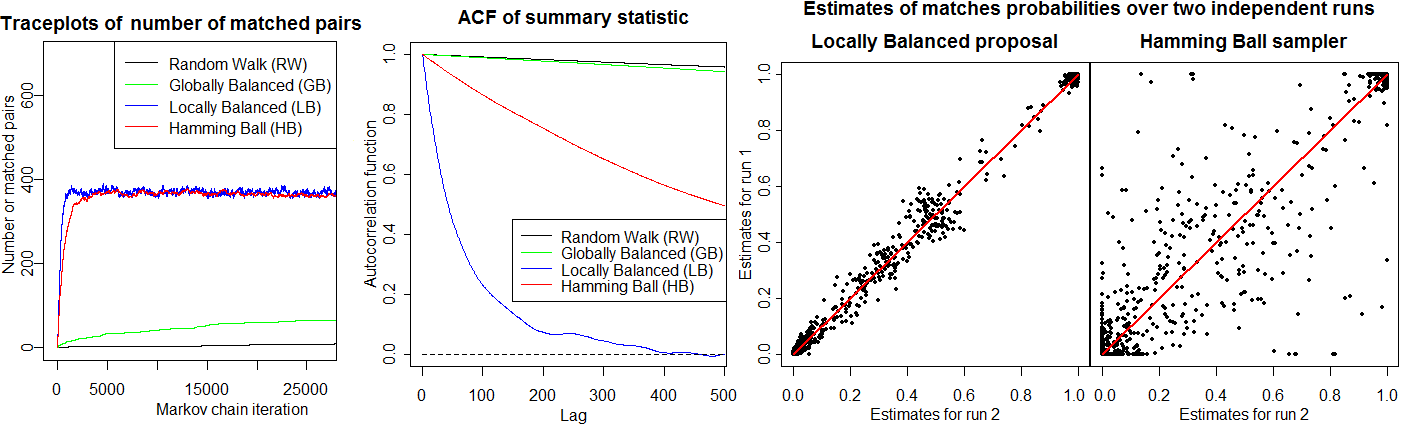}
\caption{Output analysis for the 4 MCMC schemes under comparison applied to the Italy dataset (region 1).}\label{fig:region_1}
\end{figure}
To provide a quantitative comparison between the performances of the four schemes, we consider as efficiency measure the 
effective sample sizes per unit of computation time relative to RW.
Effective sample sizes are computed using the coda R package \citep{Plummer2006} and averaged over 5 summary statistics (each summary statistics being the Hamming distance from a matching randomly drawn from the posterior).
From Table \ref{table:region_ess} we can see that LB provides roughly two orders of magnitude improvement in efficiency over RW and GB and one order of magnitude improvement over HB.
\begin{table}[h!]
\centering
\begin{tabular}{l|cccccccc|r}
Region &1&2& 3  &  4 & 5 & 6  & 7 & 8 &  Average \\\hline
GB vs RW &0.8&2.9&0.7&1.7&1.4&0.5&1.0&0.9&0.94\\\hline
LB vs RW &60.6&209
&36.2&127
&48.9&104
&172
&33.3&94.0\\\hline
HB vs RW &6.3&20.3&3.7&15.3&7.2&10.9&15.6&3.1&9.96\\
\end{tabular}
\caption{Relative efficiency (defined as ESS/time) of GB, LB and HB compared to RW . 
The table reports the value for the first 8 regions and the average over all 20 regions.
}
\label{table:region_ess}
\end{table}
Indeed from Figure \ref{fig:region_1}(c) we can see that, given the same computational budget, LB manages to provide much more reliable estimates of the posterior probabilities of each couple being matched compared to HB.
Computations were performed using the $R$ programming language.
For each region we ran LB for $35000$ iterations (enough to produce reliable estimates of posterior pairwise matching probabilities like the one in Figure \ref{fig:region_1}(c)), requiring on average around 120 seconds per region.

Next, we consider the task of performing recording linkage on the whole Italy dataset, without dividing it first into regions.
In fact the latter operation (typically referred to as deterministic blocking in the record linkage literature) is typically done for computational reasons but has the drawback of excluding a priori possible matches across groups (in this case regions).
We apply LB to the whole dataset using the block-wise implementation discussed in Section \ref{sec:blocking} (details in Supplement \ref{appendix:blocking}). 
Standard output analysis suggests that the LB chain is converging fast (Figure \ref{fig:record_linkage_all}, left) and mixing well in the region of substantial probability (Figure \ref{fig:record_linkage_all}, center).
\begin{figure}[h!]
\includegraphics[width=\linewidth]{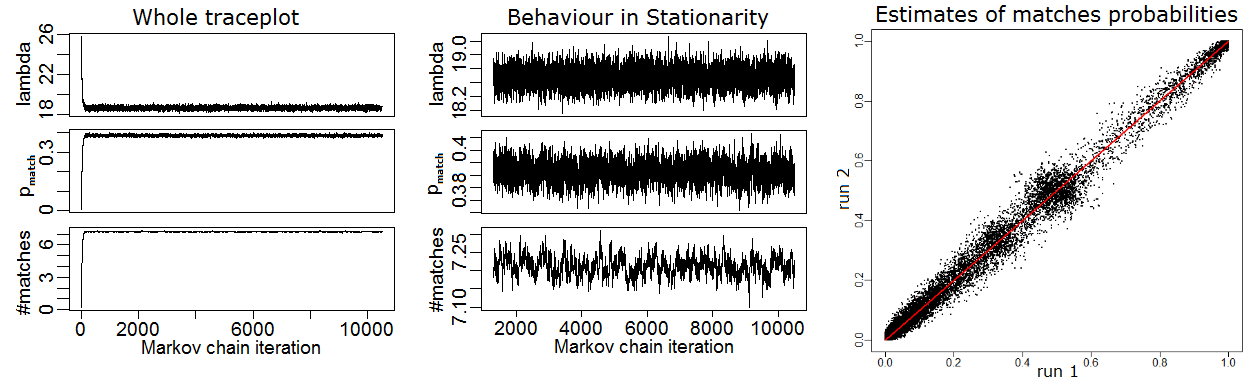}
\caption{Output analysis for locally-balanced MCMC applied to the full Italy dataset (see Section \ref{sec:record_linkage}).}\label{fig:record_linkage_all}
\end{figure}
Comparing independent runs of the algorithms suggests that we are obtaining reliable estimates of the nearly one billion posterior pairwise matching probabilities (Figure \ref{fig:record_linkage_all}, right).
The simulation needed to obtain the probability estimates in Figure \ref{fig:record_linkage_all}(right) took less than 40 minutes with a plain $R$ implementation 
on a single desktop computer.

\section{Discussion and future work}\label{sec:discussion}
In this work we discussed a fundamental and yet not satisfactorily answered question in the MCMC methodology literature, which is how to design ``informed'' Metropolis-Hastings proposals in discrete spaces.
We proposed a simple and original framework (pointwise informed proposals and locally-balanced criteria) that provides useful and easy-to-implement methodological guidance. 
Under regularity assumptions, we were able to prove the optimality of locally-balanced proposals in high-dimensional regimes 
and to identify the optimal elements within such class.
The theoretical results of Sections \ref{sec:bal_prop_definition}-\ref{sec:choice_of_g} are confirmed by simulations (Sections \ref{sec:simulations}-\ref{sec:record_linkage}), where we observe orders of magnitude improvements in efficiency over alternative schemes both for simulated and real data scenarios.
We envisage that locally-balanced proposals could be helpful in various contexts to be incorporated as a building block in more elaborate Monte Carlo algorithms.
The proposed methodology can be applied to arbitrary statistical models with discrete-valued parameters, such as Bayesian Nonparametric models or model selection problems.

The present work offers many directions for possible extentions and future work.
For example, the connection to MALA in Section \ref{sec:MALA} could be exploited to improve the robustness of gradient-based MCMC schemes and reduce the notoriously heavy burden associated with their tuning procedures.
Also, it would be interesting to extend our study to the popular context of Multiple-Try Metropolis (MTM) schemes \citep{Liu2000}.
In fact, the MTM weight function, used to select the proposed point among candidate ones, plays a role that is very similar to one of multiplicative biasing terms in pointwise informed proposals and we expect our results to translate quite naturally to that context.

In terms of implementation, it would be interesting to explore in more depth the trade-off between computational cost per iteration and statistical efficiency of the resulting Markov chain.
Beyond the use of block-wise implementations discussed in Section \ref{sec:blocking}, another approach to reduce the cost per iteration would be to replace the informed term $g(\frac{\pi(y)}{\pi(x)})$ in the proposal with some cheap-to-evaluate approximation, while still using the exact target in the MH accept/reject step. 
Also, the computations required to sample from locally-balanced proposals are trivially parallelizable and specific hardware for parallel computations, such as Graphics Processing Units (GPUs), could be used to reduce the computational overhead required by using informed proposals \citep{lee2010utility}.


From the theoretical point of view, it would be interesting to provide guidance for a regime which is intermediate between local and global, maybe by designing appropriate interpolations between locally-balanced and globally-balanced proposals.
This could be useful to design schemes that adaptively learn the appropriate level of interpolation needed.
Also, one could prove the sufficient condition for optimality \eqref{eq:asympt_smoothness} in much more general contexts.
Finally, throughout the paper we assumed the base kernel $K_\sigma$ to be symmetric and it would be interesting to extend the results to the case of general base kernels. 

\subsection*{Acknowledgements}
The author is grateful to Gareth Roberts, Omiros Papaspiliopoulos and Samuel Livingstone for useful suggestions and stimulating discussions.
This work was supported in part by an EPSRC Doctoral Prize fellowship and by the European Research Council (ERC) through StG ``N-BNP'' 306406.
\setlength{\bibsep}{5pt plus 0.3ex}
\bibliography{bibliography}

\begin{thebibliography}{44}
\providecommand{\natexlab}[1]{#1}
\providecommand{\url}[1]{\texttt{#1}}
\expandafter\ifx\csname urlstyle\endcsname\relax
  \providecommand{\doi}[1]{doi: #1}\else
  \providecommand{\doi}{doi: \begingroup \urlstyle{rm}\Url}\fi

\bibitem[Barker(1965)]{Barker1965}
A.~A. Barker.
\newblock Monte carlo calculations of the radial distribution functions for a
  proton-electron plasma.
\newblock \emph{Australian Journal of Physics}, 18\penalty0 (2):\penalty0
  119--134, 1965.

\bibitem[Barrera et~al.(2006)Barrera, Lachaud, and Ycart]{Barrera2006}
J.~Barrera, B.~Lachaud, and B.~Ycart.
\newblock Cut-off for n-tuples of exponentially converging processes.
\newblock \emph{Stochastic processes and their applications}, 116\penalty0
  (10):\penalty0 1433--1446, 2006.

\bibitem[Bon and P{\u{a}}lt{\u{a}}nea(2001)]{Bon2001}
J.~Bon and E.~P{\u{a}}lt{\u{a}}nea.
\newblock {Convergence of the number of failed components in a Markov system
  with nonidentical components}.
\newblock \emph{Journal of applied probability}, 38\penalty0 (4):\penalty0
  882--897, 2001.

\bibitem[Copas and Hilton(1990)]{CopasHilton1990}
J.~Copas and F.~Hilton.
\newblock Record linkage: statistical models for matching computer records.
\newblock \emph{Journal of the Royal Statistical Society. Series A (Statistics
  in Society)}, pages 287--320, 1990.

\bibitem[Dellaert et~al.(2003)Dellaert, Seitz, Thorpe, and Thrun]{Dellaert2003}
F.~Dellaert, S.~Seitz, C.~Thorpe, and S.~Thrun.
\newblock {EM, MCMC, and chain flipping for structure from motion with unknown
  correspondence}.
\newblock \emph{Machine Learning}, 50\penalty0 (1-2):\penalty0 45--71, 2003.

\bibitem[Diaconis and Saloff-Coste(1996)]{Diaconis1996}
P.~Diaconis and L.~Saloff-Coste.
\newblock {Logarithmic Sobolev inequalities for finite Markov chains}.
\newblock \emph{The Annals of Applied Probability}, 6\penalty0 (3):\penalty0
  695--750, 1996.

\bibitem[Durmus et~al.(2017)Durmus, Roberts, Vilmart, Zygalakis,
  et~al.]{Durmus2017}
A.~Durmus, G.~O. Roberts, G.~Vilmart, K.~C. Zygalakis, et~al.
\newblock Fast langevin based algorithm for mcmc in high dimensions.
\newblock \emph{The Annals of Applied Probability}, 27\penalty0 (4):\penalty0
  2195--2237, 2017.

\bibitem[Ethier and Kurtz(1986)]{Ethier1986}
S.~Ethier and T.~Kurtz.
\newblock Markov processes: Characterization and convergence.
\newblock 1986.

\bibitem[Fellegi and Sunter(1969)]{Fellegi1969}
I.~P. Fellegi and A.~B. Sunter.
\newblock A theory for record linkage.
\newblock \emph{Journal of the American Statistical Association}, 64\penalty0
  (328):\penalty0 1183--1210, 1969.

\bibitem[Girolami and Calderhead(2011)]{Girolami2011}
M.~Girolami and B.~Calderhead.
\newblock {Riemann manifold Langevin and Hamiltonian monte carlo methods}.
\newblock \emph{Journal of the Royal Statistical Society: Series B (Statistical
  Methodology)}, 73\penalty0 (2):\penalty0 123--214, 2011.

\bibitem[Hastings(1970)]{Hastings1970}
W.~K. Hastings.
\newblock {Monte Carlo sampling methods using Markov chains and their
  applications}.
\newblock \emph{Biometrika}, 57\penalty0 (1):\penalty0 97--109, 1970.

\bibitem[Hurn(1997)]{Hurn1997}
M.~Hurn.
\newblock {Difficulties in the use of auxiliary variables in Markov chain Monte
  Carlo methods}.
\newblock \emph{Statistics and Computing}, 7\penalty0 (1):\penalty0 35--44,
  1997.

\bibitem[Jerrum and Sinclair(1996)]{Jerrum1996}
M.~Jerrum and A.~Sinclair.
\newblock {The Markov chain Monte Carlo method: an approach to approximate
  counting and integration}.
\newblock \emph{Approximation algorithms for NP-hard problems}, pages 482--520,
  1996.

\bibitem[Johndrow et~al.(2017)Johndrow, Lum, and Dunson]{Johndrow2017}
J.~E. Johndrow, K.~Lum, and D.~B. Dunson.
\newblock {Theoretical Limits of Record Linkage and Microclustering}.
\newblock \emph{arXiv preprint arXiv:1703.04955}, 2017.

\bibitem[{\L}atuszy{\'n}ski and Roberts(2013)]{Latuszynski2013}
K.~{\L}atuszy{\'n}ski and G.~O. Roberts.
\newblock {CLTs and asymptotic variance of time-sampled Markov chains}.
\newblock \emph{Methodology and Computing in Applied Probability}, pages 1--11,
  2013.

\bibitem[Lee et~al.(2010)Lee, Yau, Giles, Doucet, and Holmes]{lee2010utility}
A.~Lee, C.~Yau, M.~B. Giles, A.~Doucet, and C.~C. Holmes.
\newblock {On the utility of graphics cards to perform massively parallel
  simulation of advanced Monte Carlo methods}.
\newblock \emph{Journal of computational and graphical statistics}, 19\penalty0
  (4):\penalty0 769--789, 2010.

\bibitem[Levin et~al.(2009)Levin, Peres, and Wilmer]{Levin2009}
D.~Levin, Y.~Peres, and E.~Wilmer.
\newblock \emph{Markov chains and mixing times}.
\newblock American Mathematical Soc., 2009.

\bibitem[Liu et~al.(2000)Liu, Liang, and Wong]{Liu2000}
J.~S. Liu, F.~Liang, and W.~H. Wong.
\newblock The multiple-try method and local optimization in metropolis
  sampling.
\newblock \emph{Journal of the American Statistical Association}, 95\penalty0
  (449):\penalty0 121--134, 2000.

\bibitem[Metropolis et~al.(1953)Metropolis, Rosenbluth, Rosenbluth, Teller, and
  Teller]{Metropolis1953}
N.~Metropolis, A.~W. Rosenbluth, M.~N. Rosenbluth, A.~H. Teller, and E.~Teller.
\newblock Equation of state calculations by fast computing machines.
\newblock \emph{The journal of chemical physics}, 21\penalty0 (6):\penalty0
  1087--1092, 1953.

\bibitem[Miller et~al.(2015)Miller, Betancourt, Zaidi, Wallach, and
  Steorts]{Miller2015}
J.~Miller, B.~Betancourt, A.~Zaidi, H.~Wallach, and R.~C. Steorts.
\newblock {Microclustering: When the cluster sizes grow sublinearly with the
  size of the data set}.
\newblock \emph{arXiv preprint arXiv:1512.00792}, 2015.

\bibitem[Mira(2001)]{Mira2001}
A.~Mira.
\newblock {Ordering and improving the performance of Monte Carlo Markov
  chains}.
\newblock \emph{Statistical Science}, pages 340--350, 2001.

\bibitem[Moores et~al.(2015{\natexlab{a}})Moores, Hargrave, Deegan, Poulsen,
  Harden, and Mengersen]{Moores2015application}
M.~T. Moores, C.~E. Hargrave, T.~Deegan, M.~Poulsen, F.~Harden, and
  K.~Mengersen.
\newblock {An external field prior for the hidden Potts model with application
  to cone-beam computed tomography}.
\newblock \emph{Computational Statistics \& Data Analysis}, 86:\penalty0
  27--41, 2015{\natexlab{a}}.

\bibitem[Moores et~al.(2015{\natexlab{b}})Moores, Pettitt, and
  Mengersen]{Moores2015}
M.~T. Moores, A.~N. Pettitt, and K.~Mengersen.
\newblock {Scalable Bayesian inference for the inverse temperature of a hidden
  Potts model}.
\newblock \emph{arXiv preprint arXiv:1503.08066}, 2015{\natexlab{b}}.

\bibitem[Neal(2011)]{Neal2011}
R.~Neal.
\newblock {MCMC using Hamiltonian dynamics}.
\newblock In S.~Brooks, A.~Gelman, G.~Jones, and X.~Meng, editors,
  \emph{{Handbook of Markov Chain Monte Carlo}}. CRC press, 2011.

\bibitem[Nishimura et~al.(2017)Nishimura, Dunson, and Lu]{Nishimura2017}
A.~Nishimura, D.~Dunson, and J.~Lu.
\newblock {Discontinuous Hamiltonian Monte Carlo for sampling discrete
  parameters}.
\newblock \emph{arXiv preprint arXiv:1705.08510}, 2017.

\bibitem[Oh et~al.(2009)Oh, Russell, and Sastry]{Oh2009}
S.~Oh, S.~Russell, and S.~Sastry.
\newblock {Markov chain Monte Carlo data association for multi-target
  tracking}.
\newblock \emph{Automatic Control, IEEE Transactions on}, 54\penalty0
  (3):\penalty0 481--497, 2009.

\bibitem[Pakman and Paninski(2013)]{Pakman2013}
A.~Pakman and L.~Paninski.
\newblock {Auxiliary-variable exact Hamiltonian Monte Carlo samplers for binary
  distributions}.
\newblock In \emph{Advances in neural information processing systems}, pages
  2490--2498, 2013.

\bibitem[Peskun(1973)]{Peskun1973}
P.~Peskun.
\newblock {Optimum monte-carlo sampling using Markov chains}.
\newblock \emph{Biometrika}, 60\penalty0 (3):\penalty0 607--612, 1973.

\bibitem[Plummer et~al.(2006)Plummer, Best, Cowles, and Vines]{Plummer2006}
M.~Plummer, N.~Best, K.~Cowles, and K.~Vines.
\newblock Coda: Convergence diagnosis and output analysis for mcmc.
\newblock \emph{R News}, 6\penalty0 (1):\penalty0 7--11, 2006.
\newblock URL \url{https://journal.r-project.org/archive/}.

\bibitem[Robert and Casella(2005)]{Robert2004}
C.~Robert and G.~Casella.
\newblock {Monte Carlo Statistical Methods}.
\newblock 2005.

\bibitem[Roberts and Rosenthal(1998)]{Roberts1998MALA}
G.~Roberts and J.~Rosenthal.
\newblock {Optimal scaling of discrete approximations to Langevin diffusions}.
\newblock \emph{Journal of the Royal Statistical Society. Series B, Statistical
  Methodology}, pages 255--268, 1998.

\bibitem[Roberts et~al.(1997)Roberts, Gelman, and Gilks]{Roberts1997}
G.~Roberts, A.~Gelman, and W.~Gilks.
\newblock {Weak convergence and optimal scaling of random walk Metropolis
  algorithms}.
\newblock \emph{The Annals of Applied Probability}, 7\penalty0 (1):\penalty0
  110--120, 1997.

\bibitem[Sadinle(2017)]{Sadinle2017}
M.~Sadinle.
\newblock Bayesian estimation of bipartite matchings for record linkage.
\newblock \emph{Journal of the American Statistical Association}, pages 1--13,
  2017.

\bibitem[Steorts(2015)]{Steorts2015EB}
R.~C. Steorts.
\newblock Entity resolution with empirically motivated priors.
\newblock \emph{Bayesian Analysis}, 10\penalty0 (4):\penalty0 849--875, 2015.

\bibitem[Steorts et~al.(2016)Steorts, Hall, and
  Fienberg]{SteortsHallFienberg2016}
R.~C. Steorts, R.~Hall, and S.~E. Fienberg.
\newblock A bayesian approach to graphical record linkage and deduplication.
\newblock \emph{Journal of the American Statistical Association}, 111\penalty0
  (516):\penalty0 1660--1672, 2016.

\bibitem[Swendsen and Wang(1987)]{Swendsen1987}
R.~H. Swendsen and J.-S. Wang.
\newblock {Nonuniversal critical dynamics in Monte Carlo simulations}.
\newblock \emph{Physical review letters}, 58\penalty0 (2):\penalty0 86, 1987.

\bibitem[Tancredi et~al.(2011)Tancredi, Liseo, et~al.]{Tancredi2011}
A.~Tancredi, B.~Liseo, et~al.
\newblock {A hierarchical Bayesian approach to record linkage and population
  size problems}.
\newblock \emph{The Annals of Applied Statistics}, 5\penalty0 (2B):\penalty0
  1553--1585, 2011.

\bibitem[Tierney(1998)]{Tierney1998}
L.~Tierney.
\newblock {A note on Metropolis-Hastings kernels for general state spaces}.
\newblock \emph{Annals of Applied Probability}, pages 1--9, 1998.

\bibitem[Titsias and Papaspiliopoulos(2016)]{Titsias2016}
M.~K. Titsias and O.~Papaspiliopoulos.
\newblock Auxiliary gradient-based sampling algorithms.
\newblock \emph{arXiv preprint arXiv:1610.09641}, 2016.

\bibitem[Titsias and Yau(2017)]{Titsias2017}
M.~K. Titsias and C.~Yau.
\newblock The hamming ball sampler.
\newblock \emph{Journal of the American Statistical Association}, pages 1--14,
  2017.

\bibitem[Welling and Teh(2011)]{Welling2011}
M.~Welling and Y.~Teh.
\newblock {Bayesian learning via stochastic gradient Langevin dynamics}.
\newblock In \emph{Proceedings of the 28th International Conference on Machine
  Learning (ICML-11)}, pages 681--688, 2011.

\bibitem[Zanella(2015)]{Zanella2015}
G.~Zanella.
\newblock {Random partition models and complementary clustering of Anglo-Saxon
  place-names}.
\newblock \emph{The Annals of Applied Statistics}, 9\penalty0 (4):\penalty0
  1792--1822, 2015.

\bibitem[Zanella et~al.(2016)Zanella, Betancourt, Miller, Wallach, Zaidi, and
  Steorts]{Zanella2016NIPS}
G.~Zanella, B.~Betancourt, J.~W. Miller, H.~Wallach, A.~Zaidi, and R.~C.
  Steorts.
\newblock Flexible models for microclustering with application to entity
  resolution.
\newblock In \emph{Advances in Neural Information Processing Systems}, pages
  1417--1425, 2016.

\bibitem[Zhang et~al.(2012)Zhang, Ghahramani, Storkey, and Sutton]{Zhang2012}
Y.~Zhang, Z.~Ghahramani, A.~J. Storkey, and C.~A. Sutton.
\newblock {Continuous Relaxations for Discrete Hamiltonian Monte Carlo}.
\newblock In \emph{Advances in Neural Information Processing Systems}, pages
  3194--3202, 2012.

\end{thebibliography}


\appendix
\newpage
\section{Additional calculations and proofs}\label{appendix1}
\subsection{Proof of Theorem \ref{thm:loc_bal_iff}}\label{appendix:loc_bal_iff}
To prove Theorem \ref{thm:loc_bal_iff} we first need a technical Lemma to show that a linear bound on $g$ is sufficient to ensure that the integral $\int_{\sX}\pi(x)\Zgs(x)dx$ is well behaved.
\begin{lemma}\label{lemma:convergence_integral}
Let $g:[0,\infty)\rightarrow [0,\infty)$ with $g(1)=1$ and $g(t)\leq a+bt$ for some $a,b\geq 0$ and all $t\geq 0$.
Given a bounded density function $\pi:\sX\rightarrow (0,\infty)$, a family of symmetric kernels $K_\sigma(x,dy)$ such that $K_\sigma(x,dy)\Rightarrow \delta_x(dy)$ for all $x\in\sX$ as $\sigma\downarrow 0$ and $\Zgs(x)$ defined as in \eqref{eq:Z}, it holds
$$
\lim_{\sigma\to 0}
\int_{\sX}\pi(x)\Zgs(x)dx=1\,.
$$
\end{lemma}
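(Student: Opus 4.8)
The plan is to show that the integrand $\pi(x)\Zgs(x)$ converges pointwise to $\pi(x)$ as $\sigma\downarrow 0$, and then to justify passing the limit inside the integral; since a fixed integrable majorant is not available, the key will be to exploit the symmetry of $K_\sigma$ to control the whole integral uniformly in $\sigma$ and invoke a generalized dominated convergence argument.

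First I would establish the pointwise limit. Fix $x\in\sX$ and observe that the map $z\mapsto g\!\left(\pi(z)/\pi(x)\right)$ is bounded, since $g(t)\le a+bt$ together with boundedness of $\pi$ and $\pi(x)>0$ gives the $z$-uniform bound $a+b\,\|\pi\|_\infty/\pi(x)$, and it is continuous at $x$ (using continuity of $g$ together with continuity of $\pi$, which holds automatically when $\sX$ is discrete). Hence the weak convergence $K_\sigma(x,\cdot)\Rightarrow\delta_x$ yields, via the portmanteau theorem, $\Zgs(x)=\int_\sX g\!\left(\pi(z)/\pi(x)\right)K_\sigma(x,dz)\to g(1)=1$, so that $\pi(x)\Zgs(x)\to\pi(x)$ for each (or $\pi$-almost every) $x$.

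The main obstacle is that the standard dominated convergence theorem does not apply directly: the natural $\sigma$-dependent upper bound $h_\sigma(x):=a\,\pi(x)+b\int_\sX\pi(z)K_\sigma(x,dz)$, obtained from $g(t)\le a+bt$, need not admit a single integrable majorant valid for all small $\sigma$. I would therefore use the generalized dominated convergence theorem (Pratt's lemma): one has $0\le\pi(x)\Zgs(x)\le h_\sigma(x)$, the bounds converge pointwise to $0$ and to $(a+b)\pi(x)$ respectively (the latter by applying the same portmanteau argument to the bounded function $z\mapsto\pi(z)$, so that $\int_\sX\pi(z)K_\sigma(x,dz)\to\pi(x)$), and it then suffices to verify that $\int_\sX h_\sigma\,dx\to\int_\sX(a+b)\pi\,dx=a+b$.

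This final verification is where the symmetry of $K_\sigma$ enters, and I expect it to be the crux. Using the detailed-balance identity $K_\sigma(x,dz)\,dx=K_\sigma(z,dx)\,dz$ together with Tonelli, I would compute
$$
\int_\sX\!\int_\sX\pi(z)\,K_\sigma(x,dz)\,dx
=\int_\sX\pi(z)\Big(\int_\sX K_\sigma(z,dx)\Big)dz
=\int_\sX\pi(z)\,dz=1,
$$
so that $\int_\sX h_\sigma\,dx=a+b$ for every $\sigma$; in particular the required convergence is immediate, the integral being constant in $\sigma$. Pratt's lemma then gives $\int_\sX\pi(x)\Zgs(x)\,dx\to\int_\sX\pi(x)\,dx=1$, which is the claim.
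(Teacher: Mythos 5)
Your proof is correct, and it takes a genuinely different route from the paper's for the harder half of the argument. The pointwise step ($\pi(x)Z_{g,\sigma}(x)\to\pi(x)$ via portmanteau, with the same implicit continuity requirement on $\pi$ that the paper uses, automatic in discrete spaces) is identical. Where you diverge is in passing the limit through the integral: the paper first applies Fatou's lemma to get $\liminf_{\sigma\to 0}\int_{\sX}\pi(x)Z_{g,\sigma}(x)\,dx\geq 1$, and then bounds the $\limsup$ by $1+(a+b)\epsilon$ by splitting $\sX$ into an open set $A$ with $\int_A dx<\infty$ and $\Pi(A)>1-\epsilon$ (bounded convergence on $A$; on $A^c$, the linear bound on $g$ plus the symmetry of $K_\sigma$ to show the mass $\int_{A}\pi(y)K_\sigma(y,A^c)\,dy$ vanishes). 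You instead invoke Pratt's generalized dominated convergence theorem with the envelope $h_\sigma(x)=a\,\pi(x)+b\,(\pi*K_\sigma)(x)$, and the crux is the exact identity $\int_{\sX}(\pi*K_\sigma)(x)\,dx=1$, which follows from symmetry and Tonelli, so that $\int_{\sX}h_\sigma\,dx=a+b$ for every $\sigma$. Both proofs thus lean on the symmetry of $K_\sigma$, but in different ways: the paper uses it to control leakage into $A^c$, you use it to show the dominating envelope has constant total integral. Your route is arguably cleaner and slightly more general, since it never needs the existence of the exceptional set $A$ (an implicit regularity assumption on $(\sX,dx)$ in the paper's proof, namely that most of the $\Pi$-mass can be captured by an open set of finite reference measure); what the paper's argument buys in exchange is reliance only on Fatou and bounded convergence rather than on the (still standard, but less commonly cited) Pratt lemma.
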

\begin{proof}[Proof of Lemma \ref{lemma:convergence_integral}]
For every $x\in\sX$ it holds $\Zgs(x)\pi(x)\rightarrow \pi(x)$ as $\sigma\downarrow 0$, see e.g. \eqref{eq:pointwise_Z}.
Therefore, Fatou's Lemma guarantees 
\begin{equation}\label{eq:liminf}
\liminf_{\sigma\rightarrow 0}\int_{\sX}\pi(x)\Zgs(x)dx
\geq
\int_{\sX}\pi(x)dx
=
1\,.
\end{equation}

We now show that the corresponding $\limsup$ is upper bounded by 1.
To do so, fix $\epsilon>0$ and let $A$ be an open subset of $\sX$ such that $\int_Adx<\infty$ and $\Pi(A)>1-\epsilon$.
Using the Bounded Convergence Theorem it can be easily seen that 
\begin{equation}\label{eq:integral_A}
\lim_{\sigma\to 0}\int_{A}\pi(x)\Zgs(x)dx= \int_{A}\pi(x)dx=\Pi(A)\,.
\end{equation}
In fact $\int_Adx<\infty$ by assumption and $\pi(x)\Zgs(x)$ is bounded over $x$ because 
\begin{multline*}
\pi(x)\int_{\sX}
g\left(\frac{\pi(y)}{\pi(x)}\right)K_{\sigma}(x,dy)
\leq
\pi(x)\int_{\sX}
\left(a+b\frac{\pi(y)}{\pi(x)}\right)K_{\sigma}(x,dy)=
\\
a\pi(x)+
b\int_{\sX}
\pi(y)K_{\sigma}(x,dy)
\leq (a+b)\sup_{z\in\sX}\pi(z)\,.
\end{multline*}
Consider then the integral over $A^c$. 
From $g(t)\leq a+b\,t$ we have
\begin{multline}\label{eq:integral_A_c_1}
\int_{A^c}\pi(x)\Zgs(x)dx
\leq
\int_{A^c}\pi(x)\int_{\sX}\left(a+b\frac{\pi(y)}{\pi(x)}\right)K_\sigma(x,dy)dx
=\\
a\Pi(A^c)
+b\int_{x\in A^c,\,y\in\sX}\pi(y)K_\sigma(x,dy)dx\,.
\end{multline}
Using the reversibility of $K_\sigma$ w.r.t.\ $dx$ we can bound the latter integral as follows
\begin{multline}\label{eq:integral_A_c_2}
\int_{x\in A^c,\,y\in\sX}\pi(y)K_\sigma(x,dy)dx
=
\int_{y\in\sX}\pi(y)\int_{x\in A^c}K_\sigma(y,dx)dy
=\\
\int_{y\in\sX}\pi(y)K_\sigma(y,A^c)dy
\leq
\Pi(A^c)
+\int_{y\in A}\pi(y)K_\sigma(y,A^c)dy\,.
\end{multline}
The term $K_\sigma(y,A^c)$ is upper bounded by $1$ and converges to $0$ for every $y\in A$ because $A^c$ is a closed set and $K_\sigma(y,\cdot)\Rightarrow \delta_{y}(\cdot)$ as $\sigma\downarrow 0$.
Therefore by the Bounded Convergence Theorem $\int_{y\in A}\pi(y)K_\sigma(y,A^c)dy\to 0$ as $\sigma\downarrow 0$.
Combining \eqref{eq:integral_A}, \eqref{eq:integral_A_c_1} and \eqref{eq:integral_A_c_2} we obtain
\begin{equation*}
\limsup_{\sigma\rightarrow 0}\int_{\sX}\pi(x)\Zgs(x)dx
\leq
\Pi(A)
+a\Pi(A^c)+b\Pi(A^c)\leq 1+(a+b)\epsilon
\,.
\end{equation*}
From the arbitrariness of $\epsilon$ it follows 
$\limsup_{\sigma\rightarrow 0}\int_{\sX}\pi(x)\Zgs(x)dx=1$ as desired.
\end{proof}

\begin{proof}[Proof of Theorem \ref{thm:loc_bal_iff}]
\emph{Sufficiency.} Suppose that \eqref{eq:balancing_equation} holds and, without loss of generality, $g(1)=1$.
Using the definition of $\Qgs$ and the symmetry of $\Ks$ it holds
\begin{equation}\label{eq:symm_used}
\frac{\Zgs(x)\Qgs(x,dy)dx}{g\left(\frac{\pi(y)}{\pi(x)}\right)}
\,=\,
\Ks(x,dy)dx
\,=\,
\Ks(y,dx)dy
\,=\,
\frac{\Zgs(y)\Qgs(y,dx)dy}{g\left(\frac{\pi(x)}{\pi(y)}\right)}\,.
\end{equation}
From \eqref{eq:balancing_equation} it follows $g\left(\frac{\pi(x)}{\pi(y)}\right)=\frac{\pi(x)}{\pi(y)}g\left(\frac{\pi(y)}{\pi(x)}\right)$. The latter, together with \eqref{eq:symm_used} implies
\begin{equation*}
\pi(x)\Zgs(x)\Qgs(x,dy)dx
\,=\,
\pi(y)\Zgs(y)\Qgs(y,dx)dy\,.
\end{equation*}
Therefore $\Qgs$ is reversible w.r.t.\ $\frac{\pi(x)\Zgs(x)dx}{\int_{\sX}\pi(z)\Zgs(z)dz}$, where $\int_{\sX}\pi(z)\Zgs(z)dz$ is finite by Lemma \ref{lemma:convergence_integral}.
Since $\Ks(x,\cdot)\Rightarrow\delta_x(\cdot)$ by assumption (see Section \ref{sec:bal_prop_definition}) and since $z\rightarrow g\left(\frac{\pi(z)}{\pi(x)}\right)$ is a bounded and continuous function, it follows that
\begin{equation}\label{eq:pointwise_Z}
\Zgs(x)=\int_{\sX}g\left(\frac{\pi(z)}{\pi(x)}\right)K_\sigma(x,dz)
\quad\stackrel{\sigma\downarrow 0}\longrightarrow\quad
\int_{\sX}g\left(\frac{\pi(z)}{\pi(x)}\right)\delta_x(dz)=g(1)
\,,
\end{equation}
for every $ x\in\sX$.
Combining \eqref{eq:pointwise_Z} and Lemma \ref{lemma:convergence_integral} we deduce that the probability density function $\frac{\pi(x)\Zgs(x)}{\int_{\sX}\pi(z)\Zgs(z)dz}$ converges pointwise to $\pi(x)$ for every $x\in\sX$.
It follows by Scheff\'{e}'s Lemma that $\frac{\pi(x)\Zgs(x)dx}{\int_{\sX}\pi(z)\Zgs(z)dz}$ converges weakly to $\pi(x)dx$.

\emph{Necessity.}
Suppose \eqref{eq:balancing_equation} does not hold and therefore $g(\tzero)\neq \tzero g(\frac{1}{\tzero})$ for some $\tzero>0$.
To prove that \eqref{eq:balancing_equation} is a necessary condition for $\{\Qgs\}_{\sigma>0}$ to be locally-balanced with respect to a general $\Pi(dx)=\pi(x)dx$ with bounded and continuous density $\pi$, it is enough to provide a counterexample.
Consider then $\sX=\{0,1\}$, $dx$ being the counting measure, 
 $\pi(0)=\frac{1}{1+\tzero}$ and $\pi(0)=\frac{\tzero}{1+\tzero}$.
From \eqref{eq:symm_used} it follows that $\Qgs$ is reversible with respect to $\Pi_\sigma(dx)=\pi_\sigma(x)dx$, where $\pi_\sigma=(\pi_\sigma(1),\pi_\sigma(2))$ is proportional to 
$(\frac{\pi(0)\Zgs(0)}{g(\tzero)},\frac{\pi(1)\Zgs(1)}{\tzero\,g(\frac{1}{\tzero})})$.
From $\Ks(x,\cdot)\Rightarrow\delta_x(\cdot)$ it follows that $\Pi_\sigma(dx)\Rightarrow\Pi_0(dx)=\pi_0(x)dx$ with $\pi_0=(\pi_0(1),\pi_0(2))$ proportional to 
$(\frac{\pi(0)}{g(\tzero)},\frac{\pi(1)}{\tzero\,g(\frac{1}{\tzero})})$.
Finally, $g(\tzero)\neq \tzero g(\frac{1}{\tzero})$ implies that $\Pi_0\neq\Pi$ and thus $\Pi_\sigma\nRightarrow\Pi$.
\end{proof}

\subsection{Proof of Theorem \ref{thm:peskun_constant}}
To prove part (a) of Theorem \ref{thm:peskun_constant} we need the following Lemma (see also \citet[Corollary 1]{Latuszynski2013} for a similar result in the context of general state spaces $\sX$).
\begin{lemma}\label{lemma:Peskun_ext}
Let $P$ be a $\Pi$-reversible Markov transition kernels on a finite space $\sX$.
Let $\tilde{P}=c\,P+(1-c)\Id$, where $\Id$ is the identity kernel and $c\in(0,1]$.
Then it holds
\begin{equation*}
\var_{\pi}(h,\tilde{P})
=
\frac{\var_{\pi}(h,P)}{c}+\frac{1-c}{c}\,\var_\pi(h)
\qquad \forall h\in L^2(\Pi)\,.
\end{equation*}
\end{lemma}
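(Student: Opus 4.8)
The plan is to compute the asymptotic variance directly using its spectral representation. Since $P$ is $\Pi$-reversible on a finite space, it is self-adjoint in $L^2(\Pi)$ and admits a spectral decomposition with real eigenvalues $\lambda_1 = 1 \geq \lambda_2 \geq \dots \geq \lambda_{|\sX|} \geq -1$ and an orthonormal (in $L^2(\Pi)$) eigenbasis. The key observation is that $\tilde{P} = c\,P + (1-c)\Id$ shares exactly the same eigenvectors as $P$, with each eigenvalue $\lambda$ mapped to $\tilde\lambda = c\lambda + (1-c)$. So the entire computation reduces to tracking how a single scalar transformation of the eigenvalues propagates through the asymptotic variance formula.

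First I would recall the standard spectral formula for the asymptotic variance of a reversible kernel: for $h$ with $\E_\pi[h] = 0$ (which we may assume WLOG since both sides are invariant under adding a constant to $h$), writing $h = \sum_{j\geq 2} a_j u_j$ in the eigenbasis, one has
\begin{equation*}
\var_\pi(h,P) = \sum_{j\geq 2} a_j^2 \frac{1+\lambda_j}{1-\lambda_j}
\end{equation*}
and also $\var_\pi(h) = \sum_{j\geq 2} a_j^2$. I would state this representation (it is classical, e.g.\ via the expansion of $\var_\pi(h,P) = \var_\pi(h) + 2\sum_{t\geq 1}\textrm{Cov}_\pi(h(X_1),h(X_{1+t}))$ and $\textrm{Cov}_\pi(h(X_1),h(X_{1+t})) = \sum_j a_j^2 \lambda_j^t$, summing the geometric series). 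One caveat: this requires $\lambda_j < 1$ for all $j\geq 2$, i.e.\ the variance is finite; I would note that the claimed identity holds as an equality in the extended sense $[0,\infty]$, so it is valid regardless.

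Next I would substitute $\tilde\lambda_j = c\lambda_j + (1-c)$ into the formula for $\tilde{P}$ and simplify the ratio. The algebra is the crux: one computes
\begin{equation*}
\frac{1+\tilde\lambda_j}{1-\tilde\lambda_j} = \frac{1 + c\lambda_j + (1-c)}{1 - c\lambda_j - (1-c)} = \frac{(2-c) + c\lambda_j}{c(1-\lambda_j)} = \frac{1}{c}\cdot\frac{1+\lambda_j}{1-\lambda_j} + \frac{1-c}{c}\,,
\end{equation*}
where the last step follows by writing $(2-c)+c\lambda_j = (1+\lambda_j) + (1-c)(1-\lambda_j)$. Summing against $a_j^2$ over $j\geq 2$ then gives $\var_\pi(h,\tilde{P}) = \frac{1}{c}\var_\pi(h,P) + \frac{1-c}{c}\var_\pi(h)$, which is exactly the claim.

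I expect the main obstacle to be not the computation but the justification of the spectral variance formula itself, and the handling of the endpoint cases. One must check that $\tilde{P}$ is genuinely $\Pi$-reversible (immediate, since $\Id$ and $P$ are both reversible and reversibility is preserved under convex combinations), that $c\in(0,1]$ keeps $\tilde\lambda_j \in [-1,1]$ with $\tilde\lambda_j < 1$ whenever $\lambda_j < 1$ (so no new unit eigenvalue is created off the constant eigenspace), and that the $\lambda_j = -1$ case — where $\frac{1+\lambda_j}{1-\lambda_j} = 0$ and the partial-sum geometric series does not converge absolutely for $P$ but does for $\tilde{P}$ — is correctly absorbed by the Cesàro/Abel summation implicit in the definition of $\var_\pi$. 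These are routine but deserve a remark; everything else is the one-line eigenvalue algebra above.
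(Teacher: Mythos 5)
Your proof is correct and follows essentially the same route as the paper's: both arguments assume $\E_\pi[h]=0$, use the spectral representation $\var_\pi(h,P)=\sum_{j\geq 2}\frac{1+\lambda_j}{1-\lambda_j}\E_\pi[h\,f_j]^2$, observe that $\tilde{P}$ has the same eigenfunctions with eigenvalues $\tilde\lambda_j=c\lambda_j+(1-c)$, and conclude via the identity $\frac{1+\tilde\lambda_j}{1-\tilde\lambda_j}=\frac{1}{c}\frac{1+\lambda_j}{1-\lambda_j}+\frac{1-c}{c}$ together with $\sum_{j\geq2}\E_\pi[h\,f_j]^2=\var_\pi(h)$. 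Your extra remarks on the $\lambda_j=-1$ endpoint and on degenerate cases correspond to the paper's brief dismissal of the reducible case and its citation of standard spectral results.
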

\begin{proof}[Proof of Lemma \ref{lemma:Peskun_ext}]\label{appendix:proof_lemma}
Suppose $\E_{\pi}[h]=0$ (otherwise consider $h-\E_\pi[h]$).
Also, if $P$ is reducible then the statement is trivially true, so suppose $P$ to be irreducible.
Let $\{(\lambda_i,f_i)\}_{i=1}^n$ and $\{(\tilde{\lambda}_i,\tilde{f}_i)\}_{i=1}^n$ be eigenvalues and eigenfunctions of $P$ and $\tilde{P}$ respectively.
We can take $\{f_i\}_{i=1}^n$ and $\{\tilde{f}_i\}_{i=1}^n$ to be an orthonormal basis of $L^2(\R^\sX,\pi)$ with $\lambda_1=\tilde{\lambda}_1=1$, $f_1=\tilde{f}_1=(1,\dots,1)^T$ and $-1\leq \lambda_i,\tilde{\lambda}_i< 1$ for $i\geq 2$ \cite[Lemmas 12.1,12.2]{Levin2009}.
Also, from the definition of $\tilde{P}$, it follows that we can take $\tilde{f}_i=f_i$ and $\tilde{\lambda}_i=c\,\lambda_i+(1-c)$.
The asymptotic variances can be written as
\begin{equation}\label{eq:av_representation}
\var_{\pi}(h,P)=\sum_{i=2}^n\frac{1+\lambda_i}{1-\lambda_i}\E_{\pi}[h\,f_i]^2
\quad\hbox{ and }\quad
\var_{\pi}(h,\tilde{P})=\sum_{i=2}^n\frac{1+\tilde{\lambda}_i}{1-\tilde{\lambda}_i}\E_{\pi}[h\,\tilde{f}_i]^2\,.
\end{equation}
For \eqref{eq:av_representation} see for example the proofs of \citet[Theorem 1]{Mira2001} and \citet[Lemmas 12.20]{Levin2009}.
Rearranging $\tilde{\lambda}_i=c\,\lambda_i+(1-c)$ we obtain
$\frac{1+\tilde{\lambda}_i}{1-\tilde{\lambda}_i}=\frac{1}{c}\frac{1+\lambda_i}{1-\lambda_i}+\frac{1-c}{c}$ for $i\geq 2$.
Thus 
\begin{equation}\label{eq:peskun_ext_1}
\var_{\pi}(h,\tilde{P})
=
\sum_{i=2}^n\frac{1+\tilde{\lambda}_i}{1-\tilde{\lambda}_i}\E_{\pi}[h\,f_i]^2
=
\frac{1}{c}\sum_{i=2}^n\frac{1+\lambda_i}{1-\lambda_i}\E_{\pi}[h\,f_i]^2
+\frac{1-c}{c}\sum_{i=2}^n\E_{\pi}[h\,f_i]^2\,.
\end{equation}
Since $\{f_i\}_{i=1}^n$ form an orthonormal basis of $L^2(\R^\sX,\pi)$ and $\E_{\pi}[h\,f_1]=\E_{\pi}[h]=0$, then $\sum_{i=2}^n\E_{\pi}[h\,f_i]^2=\E_{\pi}[h^2]=\var_\pi(h)$.
Therefore \eqref{eq:peskun_ext_1} becomes
$\var_{\pi}(h,\tilde{P}_1)
=
\frac{1}{c}\cdot\var_{\pi}(h,P_1)+\frac{1-c}{c}\var_\pi(h)\,.$
\end{proof}

\begin{proof}[Proof of Theorem \ref{thm:peskun_constant}]
Part (a), case $c>1$: define $\tilde{P}_1=\frac{1}{c}P_1+(1-\frac{1}{c})\I_n$.
From Lemma \ref{lemma:Peskun_ext} it follows
$\var_{\pi}(h,\tilde{P}_1)
=
c\cdot\var_{\pi}(h,P_1)+(c-1)\,\var_\pi(h)$
or, equivalently,
$\var_{\pi}(h,P_1)
=
\frac{1}{c}\var_{\pi}(h,\tilde{P}_1)+\frac{1-c}{c}\,\var_\pi(h)$.
Since $\tilde{P}_1(x,y)\geq P_2(x,y)$ for $x\neq y$, by Theorem \citet[Thm.2.1.1]{Peskun1973} it holds $\var_\pi(h,\tilde{P}_1)\leq\var_\pi(h,P_2)$.
Therefore
$$
\var_{\pi}(h,P_1)
\;=\;
\frac{\var_{\pi}(h,\tilde{P}_1)}{c}+\frac{1-c}{c}\,\var_\pi(h)
\;\leq\;
\frac{\var_{\pi}(h,P_2)}{c}+\frac{1-c}{c}\,\var_\pi(h)\,.
$$
Part (a), case $c\leq1$:
define $\tilde{P}_2=c\,P_2+(1-c)\I_n$.
From \citet[Thm.2.1.1]{Peskun1973} and $P_1(x,y)\geq \tilde{P}_2(x,y)$ for $x\neq y$ it follows
$\var_\pi(h,P_1)\leq\var_\pi(h,\tilde{P}_2)$.
From Lemma \ref{lemma:Peskun_ext} it follows
$\var_{\pi}(h,\tilde{P}_2)
=
\frac{1}{c}\var_{\pi}(h,P_2)+\frac{1-c}{c}\var_\pi(h)\,.$
The latter equality and $\var_\pi(h,P_1)\leq\var_\pi(h,\tilde{P}_2)$ provide us with part (a).\\
Part (b) follows from the definition of $Gap(P)$.
\end{proof}

\subsection{Proof of Theorem \ref{thm:peskun_result}}
In this section we state and prove a slightly more general result than Theorem \ref{thm:peskun_result}, namely Theorem \ref{thm:peskun_lower_bound}.
The latter applies also to continuous spaces and include both upper and lower bounds.
Theorem \ref{thm:peskun_result} is kept in the main body of the paper for simplicity of exposition.

As in the beginning of Section \ref{sec:bal_prop_definition}, let $\Pi(dx)=\pi(x)dx$ be a target probability distribution on some topological space $\sX$ with bounded density $\pi$ with respect to some reference measure $dx$ and let $K(x,dy)$ be a symmetric Markov kernel on $\sX$.
Define $\RR\subseteq\sX\times\sX$ as the set where the two measures $\Pi(dx)K(x,dy)$ and $\Pi(dy)K(y,dx)$ are mutually absolutely continuous (\citet[Prop.1]{Tierney1998} shows that $\RR$ is unique up to sets of zero mass for both measures) and for any $(x,y)$ in $\RR$ denote by $t_{xy}$ the Radon-Nikodym derivative $\frac{\Pi(dy)K(y,dx)}{\Pi(dx)K(x,dy)}$.
Note that $\RR$ is also the set where $\Pi(dx)Q_g(x,dy)$ and $\Pi(dy)Q_g(y,dx)$ are mutually absolutely continuous.
Define
\begin{equation}\label{eq:smoothness0}
\bal{g}
\quad=\quad
\sup
_{(x,y)\in\RR}
\frac{g(t_{xy})}{t_{xy}g(t_{yx})}
\;\geq\;1
\;,
\end{equation}
and
\begin{equation}\label{eq:smoothness_cont}
\smooth{g} 
\quad=\quad
\sup
_{(x,y)\in\RR}
\frac{Z_g(y)}{Z_g(x)}
\;\geq\;1
\;.
\end{equation}
The suprema in \eqref{eq:smoothness0} and \eqref{eq:smoothness_cont} have to be intended $\Pi(dx)K(x,dy)$-almost everywhere.
Both $\bal{g}$ and $\smooth{g} $ are greater or equal than one because, by inverting $x$ and $y$ both the fraction in \eqref{eq:smoothness0} and \eqref{eq:smoothness_cont} get inverted.
The constant $\bal{g}$ represents how ``unbalanced'' the function $g$ is: the bigger $\bal{g}$ the less balanced $g$ is according to \eqref{eq:balancing_equation} (if $\bal{g}=1$ then $g$ satisfies \eqref{eq:balancing_equation}).
It is easy to see that the following result implies Theorem \ref{thm:peskun_result} as a special case.
\begin{theorem}\label{thm:peskun_lower_bound}
Let $g:(0,\infty)\rightarrow(0,\infty)$.
Define $\tilde{g}(t)=\min\{g(t),t\,g(1/t)\}$ and let $P_g$ and $P_{\tilde{g}}$ be the MH kernels obtained from the proposals $Q_g$ and $Q_{\tilde{g}}$ respectively (see \eqref{eq:informed_proposal} for definition).
For $\Pi$-almost every $x$ it holds
\begin{equation}\label{eq:peskun_lower_bound}
\frac{P_{\tilde{g}}(x,A)}{\smooth{g} \smooth{\tilde{g}}\bal{g}}
\quad\leq\quad 
P_{g}(x,A)
\quad\leq\quad 
(\smooth{g} \smooth{\tilde{g}})P_{\tilde{g}}(x,A)
\qquad
\forall A\not\owns x\,.
\end{equation}
\end{theorem}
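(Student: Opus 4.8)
The plan is to reduce both Metropolis--Hastings kernels to explicit off-diagonal formulas and then compare them pointwise on $\RR$. First I would compute the acceptance ratio of the proposal $Q_g$. Symmetry of $K$ makes the Radon--Nikodym derivative collapse to $\txy=\pi(y)/\pi(x)$, so from \eqref{eq:informed_proposal} a direct calculation gives, for $y\neq x$,
$$
P_g(x,dy)=K(x,dy)\,\min\!\left\{\frac{g(\txy)}{\Zg(x)},\,\frac{\txy\,g(\tyx)}{\Zg(y)}\right\},
$$
and the analogous identity holds for $\tilde{g}$. The point is that $\tilde{g}$ is a balancing function, so $\tilde{g}(\txy)=\txy\,\tilde{g}(\tyx)$; the two entries of its minimum then share a common numerator and the kernel collapses to the clean form
$$
P_{\tilde{g}}(x,dy)=K(x,dy)\,\frac{\tilde{g}(\txy)}{\max\{Z_{\tilde{g}}(x),\,Z_{\tilde{g}}(y)\}}.
$$
This simplification for the balanced kernel is what makes the whole comparison tractable.

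Next I would isolate the two elementary facts that quantify the gap between $g$ and $\tilde{g}$, as these are the only place where $\bal{g}$ enters. Since $\bal{g}=\sup_{\RR}\frac{g(\txy)}{\txy\,g(\tyx)}$ measures, along reachable pairs, how far $g$ is from balanced, one checks directly that both $g(\txy)$ and $\txy\,g(\tyx)$ lie in the band $[\tilde{g}(\txy),\,\bal{g}\,\tilde{g}(\txy)]$, i.e. $\frac{1}{\bal{g}}\,g(\txy)\le\tilde{g}(\txy)\le g(\txy)$. Integrating this pointwise bound against $K(x,\cdot)$ transfers it to the normalizing constants, giving $\frac{1}{\bal{g}}\Zg(x)\le Z_{\tilde{g}}(x)\le\Zg(x)$ for every $x$.

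With these in hand the result is min/max bookkeeping combined with the smoothness constants. For the upper bound I would apply $\min\{a/u,\,b/v\}\le\min\{a,b\}/\min\{u,v\}$ with $a=g(\txy)$, $b=\txy\,g(\tyx)$, $u=\Zg(x)$, $v=\Zg(y)$, which yields $P_g(x,dy)\le K(x,dy)\,\tilde{g}(\txy)/\min\{\Zg(x),\Zg(y)\}$; dividing by the displayed form of $P_{\tilde{g}}$ leaves a ratio of normalizing constants that is bounded by $\smooth{g}\,\smooth{\tilde{g}}$ using the defining suprema in \eqref{eq:smoothness0}--\eqref{eq:smoothness_cont} together with $Z_{\tilde{g}}\le\Zg$. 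For the lower bound I would instead use that both numerators exceed their minimum $\tilde{g}(\txy)$ to obtain $P_g(x,dy)\ge K(x,dy)\,\tilde{g}(\txy)/\max\{\Zg(x),\Zg(y)\}$, and then the inequalities $Z_{\tilde{g}}\ge\bal{g}^{-1}\Zg$ and the $\smooth{}$-constants produce the factor $(\smooth{g}\,\smooth{\tilde{g}}\,\bal{g})^{-1}$. Specializing to a finite $\sX$ with $K$ a symmetric matrix and the counting measure turns the suprema into maxima over $\RR$ and recovers Theorem \ref{thm:peskun_result}.

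I expect the genuinely delicate part to be measure-theoretic rather than algebraic: justifying that $\txy=\pi(y)/\pi(x)$ holds $\Pi(dx)K(x,dy)$-almost everywhere on $\RR$, that the kernel identities above are valid as Radon--Nikodym statements, and that the suprema defining $\bal{g}$ and $\smooth{g}$ may legitimately be used as almost-everywhere pointwise bounds. Once this is set up, the two min/max inequalities and the band $[\tilde{g},\bal{g}\tilde{g}]$ do all the remaining work.
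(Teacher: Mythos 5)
Your proof is correct, and at bottom it is the same kind of argument as the paper's: both start from the off-diagonal identity $P_g(x,dy)/K(x,dy)=\min\{g(\txy)/\Zg(x),\,\txy\,g(\tyx)/\Zg(y)\}$ and compare the two kernels pointwise on $\RR$ via the constants $\smooth{g}$, $\smooth{\tilde{g}}$, $\bal{g}$. The assembly differs enough to be worth recording, though. The paper works asymmetrically: it uses $\smooth{g}$ to pin both normalizing constants to $\Zg(x)$ (the control inequality \eqref{eq:control_density}), routes both kernels through the common quantity $\int_A\tilde{g}(\txy)/Z_{\tilde{g}}(x)\,K(x,dy)$, and invokes only the one-sided bound $Z_{\tilde{g}}\le\Zg$. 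You work symmetrically: you collapse $P_{\tilde{g}}$ to $\tilde{g}(\txy)/\max\{Z_{\tilde{g}}(x),Z_{\tilde{g}}(y)\}$, trap $P_g$ between $\tilde{g}(\txy)/\max\{\Zg(x),\Zg(y)\}$ and $\tilde{g}(\txy)/\min\{\Zg(x),\Zg(y)\}$, and add the integrated band $\bal{g}^{-1}\Zg\le Z_{\tilde{g}}\le\Zg$. This buys two things. First, your intermediate inequalities are strictly stronger than \eqref{eq:peskun_lower_bound}: they yield $P_g\le\smooth{g}\,P_{\tilde{g}}$ and $P_g\ge\bal{g}^{-1}P_{\tilde{g}}$, from which the stated constants follow since $\smooth{g},\smooth{\tilde{g}}\ge 1$. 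Second, the bridge $Z_{\tilde{g}}\ge\bal{g}^{-1}\Zg$ is exactly the ingredient the paper's treatment of the lower bound leaves implicit: after \eqref{eq:peskun_proof_3}, the paper's final display asserts $P_{\tilde{g}}(x,A)\le\smooth{\tilde{g}}^{-1}\int_A\tilde{g}(\txy)/Z_{\tilde{g}}(x)\,K(x,dy)$, where the constant should be $1$ rather than $\smooth{\tilde{g}}^{-1}$ (as written the inequality is false in general, and the conclusion is never explicitly assembled), so your version is actually the more self-contained of the two on that half of the statement. The measure-theoretic caveats you flag (reading $\txy$ and the suprema $\Pi(dx)K(x,dy)$-almost everywhere, and the symmetry of $\RR$ needed for the two-sided band in $\bal{g}$) are handled in the paper exactly as you propose, by stipulating the almost-everywhere interpretation.
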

\begin{proof}[Proof of theorem \ref{thm:peskun_lower_bound}]
Let $x\notin A\subseteq \sX$.
By construction, $P_g(x,\cdot)$ is absolutely continuous with respect to $K(x,\cdot)$ on $\sX\backslash\{x\}$.
In particular, given $y\in\RR$ with $y\neq x$
, the Radon-Nikodym derivative between $P_g(x,\cdot)$ and $K(x,\cdot)$ at $y$ satisfies
\begin{equation}\label{eq:P_g_over_K}
\frac{P_g(x,dy)}{K(x,dy)}\;=\;
\frac{g(\txy)}{Z_g(x)}\,
\min\left\{1\,,\,
\txy\,
\frac{g(\tyx)}{Z_g(y)}\frac{Z_g(x)}{g(\txy)}
\right\}
\;=\;
\min\left\{
\frac{g(\txy)}{Z_g(x)}\,,\,
\frac{\txy g(\tyx)}{Z_g(y)}
\right\}\,.
\end{equation}
Using \eqref{eq:P_g_over_K} and the definition of $\smooth{g}$, it follows that for $\Pi$-almost every $x$
\begin{equation}\label{eq:control_density}
\frac{1}{\smooth{g}}\frac{\min\left\{
g(\txy),\txy g(\tyx)\right\}
}{Z_g(x)}
\quad\leq\quad
\frac{P_g(x,dy)}{K(x,dy)}
\quad\leq\quad
\smooth{g}\frac{\min\left\{
g(\txy),\txy g(\tyx)\right\}
}{Z_g(x)}\,.
\end{equation}

Using \eqref{eq:control_density}, the definition of $\tilde{g}$ and the inequality $Z_{g}(x)\geq Z_{\tilde{g}}(x)$ (which follows from $g(t)\geq\tilde{g}(t)$) we can deduce that for $\Pi$-almost every $x$
\begin{multline}\label{eq:peskun_proof_1}
P_g(x,A)
\;\leq\;
\smooth{g} \int_{A}
\frac{\tilde{g}(\txy)}{Z_g(x)}K(x,dy)
\;\leq\;
\smooth{g} \int_{A}
\frac{\tilde{g}(\txy)}{Z_{\tilde{g}}(x)}K(x,dy)\,,
\end{multline}
Using the analogous of \eqref{eq:control_density} for $\tilde{g}$ rather than $g$, we have
\begin{multline}\label{eq:peskun_proof_2}
P_{\tilde{g}}(x,A)
\;\geq\;
\frac{1}{\smooth{\tilde{g}}}\int_{A}
\frac{\min\left\{
{\tilde{g}}(\txy)\,,\,\txy {\tilde{g}}(\tyx)
\right\}}{Z_{\tilde{g}}(x)}
K(x,dy)
\;=\;
\frac{1}{\smooth{\tilde{g}}} \int_{A}
\frac{\tilde{g}(\txy)}{Z_{\tilde{g}}(x)}K(x,dy)\,.
\end{multline}
The upper bound in \eqref{eq:peskun_lower_bound} follows from \eqref{eq:peskun_proof_1} and \eqref{eq:peskun_proof_2}.
To obtain the lower bound in \eqref{eq:peskun_lower_bound} first note that using the lower bound in \eqref{eq:control_density} and the definition of $\bal{g}$ for $\Pi$-almost every $x$ it holds
\begin{multline}\label{eq:peskun_proof_3}
P_g(x,A)
\;\geq\;
\frac{1}{\smooth{g}} \int_{A}
\frac{\min\left\{
g(\txy),\txy g(\tyx)\right\}}{Z_g(x)}K(x,dy)
\;\geq\;
\frac{1}{\bal{g}\smooth{g}} \int_{A}
\frac{g(\txy)}{Z_g(x)}K(x,dy)
\,.
\end{multline}
Then note that 
\begin{multline}
P_{\tilde{g}}(x,A)\;=\;
\int_{A}\min\left\{
\frac{{\tilde{g}}(\txy)}{Z_{\tilde{g}}(x)}\,,\,
\frac{\txy {\tilde{g}}(\tyx)}{Z_{\tilde{g}}(y)}
\right\}
K(x,dy)
\;\leq\;
\frac{1}{\smooth{\tilde{g}}} \int_{A}
\frac{\tilde{g}(\txy)}{Z_{\tilde{g}}(x)}K(x,dy)\,.
\end{multline}
\end{proof}

\subsection{Proof of Proposition \ref{thm:norm_const}}
\begin{proof}[Proof of Proposition \ref{thm:norm_const}]
Consider Example \ref{ex:perfect_matchings}. 
Fix $\rho\in\Sn$ and $\rho'=\rho\circ(i_0,j_0)$ for some $i_0,j_0\in\{1,\dots,n\}$, with $i_0<j_0$.
Denoting $g\left(\frac{\pi^{(n)}(\rho\circ(i,j))}{\pi^{(n)}(\rho)}\right)=g\left(\frac{w_{i\rho(j)}w_{j\rho(i)}}{w_{i\rho(i)}w_{j\rho(j)}}\right)$ by $g^\rho_{ij}$ it holds
$$
Z^{(n)}_g(\rho)
\;=\;
\sum_{i,j=1,\,i<j}^n g^\rho_{ij}
\;=\;
\sum_{\substack{i,j=1,\,i<j\\ \{i,j\}\cap\{i_0,j_0\}=\emptyset}}^n g^\rho_{ij}+
\sum_{\substack{i,j=1,\,i<j\\ \{i,j\}\cap\{i_0,j_0\}\neq\emptyset}}^n g^\rho_{ij}\,.
$$
Given 
$I=\left[\frac{\inf_{i,j}w_{ij}^2}{\sup_{i,j}w_{ij}^2},\frac{\sup_{i,j}w_{ij}^2}{\inf_{i,j}w_{ij}^2}\right]$, $\underline{g}=\inf_{t\in I} g(t)$ and $\overline{g}=\sup_{t\in I} g(t)$ it holds
$
\underline{g}\leq
g^\rho_{ij}
\leq
\overline{g}$.
Note that $\underline{g}>0$ and $\overline{g}<\infty$ because $g$ and $1/g$ are locally bounded and $I$ is compact.
Therefore
$$
\sum_{\substack{i,j=1,\,i<j\\ \{i,j\}\cap\{i_0,j_0\}=\emptyset}}^n g^\rho_{ij}
\;\geq\;
\sum_{\substack{i,j=1,\,i<j\\ \{i,j\}\cap\{i_0,j_0\}=\emptyset}}^n \underline{g}
\;=\;
\left(\frac{n(n-1)}{2}-(2n-3)\right)\underline{g}
=O(n^2)
$$
and
$$
\sum_{\substack{i,j=1,\,i<j\\ \{i,j\}\cap\{i_0,j_0\}\neq\emptyset}}^n g^\rho_{ij}
\;\leq\;
\sum_{\substack{i,j=1,\,i<j\\ \{i,j\}\cap\{i_0,j_0\}\neq\emptyset}}^n \overline{g}
\;=\;
(2n-3)\overline{g}
=O(n)\,.
$$
If follows that 
$$
\lim_{n\rightarrow\infty}\frac{Z^{(n)}_g(\rho')}{Z^{(n)}_g(\rho)}
\;=\;
\lim_{n\rightarrow\infty}\frac{\sum_{\{i,j\}\cap\{i_0,j_0\}=\emptyset} g^\rho_{ij}}{\sum_{\{i,j\}\cap\{i_0,j_0\}=\emptyset} g^{\rho'}_{ij}}
\;=\;
\lim_{n\rightarrow\infty}\frac{\sum_{\{i,j\}\cap\{i_0,j_0\}=\emptyset} g^\rho_{ij}}{\sum_{\{i,j\}\cap\{i_0,j_0\}=\emptyset} g^\rho_{ij}}
\;=\;1\,,
$$
where $g^\rho_{ij}=g^{\rho'}_{ij}$ for $\{i,j\}\cap\{i_0,j_0\}=\emptyset$ because $\rho'(i)=\rho(i)$ for $i\in\{1,\dots,n\}\backslash\{i_0,j_0\}$.

The proofs for Example \ref{ex:binary} and \ref{ex:ising} are analogous.
\end{proof}

\subsection{Proof of Theorem \ref{theorem:limit_hypercube}}\label{appendix:hypercube}
To prove Theorem \ref{theorem:limit_hypercube} we first need the following Lemma.
\begin{lemma}\label{lemma:core}
For every positive integers $k<n$, let
\begin{multline}\label{eq:def_R_n}
R^{(k)}_{n}=\Big\{
(x_{k+1},\dots,x_{n})\in\{0,1\}^{n-k}\,:\\
\left|\frac{Z^{(n)}_g(\bx_{1:n})}{n}-\E\left[\frac{Z^{(n)}_g(\textbf{X}_{1:n})}{n}\right]\right|\leq \frac{1}{n^{1/4}}
\;\;\forall\,(x_{1},\dots,x_k)\in\Omega^{(k)}
\Big\}\,.
\end{multline}
Then it holds $\lim_{n\rightarrow\infty}\pi^{(n)}(\{0,1\}^k\times R^{(k)}_n)= 1$.
\end{lemma}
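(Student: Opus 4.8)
The plan is to exploit the product structure of $\pi^{(n)}$, which turns $Z^{(n)}_g$ into a sum of independent, uniformly bounded contributions. First I would record that, by \eqref{eq:proposal0} and the single-bit neighbourhood of Example \ref{ex:binary}, the normalising constant splits across coordinates as
$$
Z^{(n)}_g(\bx_{1:n})=\sum_{i=1}^n W_i(x_i),\qquad W_i(x_i)=\1_{\{x_i=0\}}\,g\!\left(\tfrac{p_i}{1-p_i}\right)+\1_{\{x_i=1\}}\,g\!\left(\tfrac{1-p_i}{p_i}\right),
$$
so that each $W_i$ depends on $\bx_{1:n}$ only through the single bit $x_i$. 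Since $\inf_i p_i>0$ and $\sup_i p_i<1$, all the arguments $\tfrac{p_i}{1-p_i}$ and $\tfrac{1-p_i}{p_i}$ lie in a fixed compact subset $I\subset(0,\infty)$; by continuity and positivity of $g$ there are constants $0<\underline{g}\le\overline{g}<\infty$ with $W_i(x_i)\in[\underline{g},\overline{g}]$ for all $i$. Under $\pi^{(n)}$ the coordinates $X_i$ are independent, hence $W_1(X_1),\dots,W_n(X_n)$ are independent and bounded.

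Next I would isolate the dependence on the first $k$ bits. Writing the centred average
$$
\frac{Z^{(n)}_g(\bx_{1:n})}{n}-\E\!\left[\frac{Z^{(n)}_g(\X_{1:n})}{n}\right]=\frac1n\sum_{i=1}^n\bigl(W_i(x_i)-\E[W_i]\bigr),
$$
(all expectations under $\pi^{(n)}$), I split the sum into $i\le k$ and $i>k$. Because $|W_i(x_i)-\E[W_i]|\le\overline{g}$, the head block $i\le k$ contributes at most $k\overline{g}/n$ in absolute value, \emph{uniformly} over $(x_1,\dots,x_k)\in\Omega^{(k)}=\{0,1\}^k$. Consequently the defining event of $R^{(k)}_n$ is implied by the single tail condition $\bigl|\tfrac1n\sum_{i=k+1}^n(W_i(x_i)-\E[W_i])\bigr|\le n^{-1/4}-k\overline{g}/n$, which depends only on $(x_{k+1},\dots,x_n)$; this is how the universal quantifier over the first $k$ coordinates is dispatched, with no union bound. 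For $n$ large the slack $k\overline{g}/n$ is dominated by $\tfrac12 n^{-1/4}$ (since $k$ is fixed), so it suffices to control the tail average within $\tfrac12 n^{-1/4}$.

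Finally I would invoke concentration of the tail average. The variables $W_i(X_i)$, $i>k$, are independent and bounded, so $\var\bigl(\tfrac1n\sum_{i=k+1}^n W_i\bigr)=\tfrac1{n^2}\sum_{i=k+1}^n\var(W_i)\le \overline{g}^2/(4n)$, and Chebyshev's inequality gives
$$
\pi^{(n)}\!\left(\Bigl|\tfrac1n\sum_{i=k+1}^n(W_i-\E[W_i])\Bigr|>\tfrac12 n^{-1/4}\right)\le\frac{\overline{g}^2/(4n)}{\tfrac14 n^{-1/2}}=\overline{g}^2\,n^{-1/2}\xrightarrow[n\to\infty]{}0.
$$
Since the complementary event (as a set of tail configurations) is contained in $R^{(k)}_n$ for $n$ large, and $\{0,1\}^k\times R^{(k)}_n$ has $\pi^{(n)}$-probability equal to $\pi^{(n)}(\X_{k+1:n}\in R^{(k)}_n)$ by independence, this yields $\pi^{(n)}(\{0,1\}^k\times R^{(k)}_n)\to1$, as claimed.

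The only genuinely delicate point is the matching of scales: the threshold $n^{-1/4}$ must be large relative to the $O(n^{-1/2})$ fluctuations of the tail average, yet small enough for the statement to be meaningful, while the uniform $O(k/n)$ slack coming from the first $k$ coordinates must be absorbable into it — all of which hold precisely because the exponent $1/4$ lies strictly between $0$ and $1/2$. Everything else is a routine law-of-large-numbers estimate, made available by the independence of coordinates under the product target.
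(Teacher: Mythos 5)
Your proof is correct and follows essentially the same route as the paper's: both decompose $Z^{(n)}_g$ as a sum of per-coordinate bounded terms (your $W_i(x_i)$ is exactly the paper's $v_i\bigl(c_i(1-p_i)(1-x_i)+(1-c_i)p_ix_i\bigr)$ in the original $g$-parametrization), bound the head block $i\le k$ deterministically by $O(k/n)$ uniformly over the first $k$ bits so it is absorbed into half the threshold $n^{-1/4}$, and then apply a Chebyshev/Markov bound to the independent tail average, whose $O(n^{-1/2})$ fluctuations are beaten by the $\tfrac12 n^{-1/4}$ slack. The only cosmetic difference is that the paper derives boundedness via $\sup_i v_i<\infty$ while you derive it via boundedness of $g$ on a compact set; these are equivalent under the standing assumptions $\inf_i p_i>0$, $\sup_i p_i<1$.
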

\begin{proof}[Proof of Lemma \ref{lemma:core}]
From \eqref{eq:hypercube_proposal_general} it follows
\begin{equation*}
Z^{(n)}_g(\bx_{1:n})
\;=\;
\sum_{i=1}^nv_i\,\big(c_i(1-p_i)(1-x_i)+(1-c_i)p_ix_i\big)\,.
\end{equation*}
Thus, for $\X_{1:n}\sim\pi^{(n)}$, it holds $\E\left[Z^{(n)}_g(\X_{1:n})\right]=\sum_{i=1}^nv_i\,p_i(1-p_i)$.
Using the triangular inequality we can split 
\begin{equation}\label{eq:split}
\left|\frac{Z^{(n)}_g(\bx_{1:n})}{n}-\E\left[\frac{Z^{(n)}_g(\textbf{X}_{1:n})}{n}\right]\right|
\leq
\frac{\left|\sum_{i=1}^{k} m_i\right|}{n}
+
\frac{\left|\sum_{i=k+1}^{n} m_i\right|}{n}\,,
\end{equation}
where $m_i=v_i\,\big(c_i(1-p_i)(1-x_i)+(1-c_i)p_ix_i-p_i(1-p_i)\big)$.
Note that each $m_i$ is upper bounded by $2 \sup_{i\in\N}v_i$ because $p_i$, $x_i$ and $c_i$ belong to $[0,1]$, while the finiteness of $\sup_{i\in\N}v_i$ can be deduced by the definition of $v_i$'s and the assumptions $\sup_{i\in \N}p_i<1$ and $\inf_{i\in \N}p_i>0$.
Therefore $\frac{\left|\sum_{i=1}^{k} m_i\right|}{n}\leq \frac{2 k \sup_{i\in\N}v_i}{n}$ which will eventually be smaller than $\frac{1}{2n^{1/4}}$ as $n$ increases.
Consider now last sum in \eqref{eq:split}, $\frac{\left|\sum_{i=k+1}^{n} m_i\right|}{n}$, which depends on $\bx_{1:n}$ only through the $x_i$'s with $i>k$.
For $\X_{1:n}\sim\pi^{(n)}$ and $M_i=v_i\,\big(c_i(1-p_i)(1-X_i)+(1-c_i)p_iX_i-p_i(1-p_i)\big)$, the random variable $\frac{\left|\sum_{i=k+1}^{n} M_i\right|}{n}$ has mean zero and variance upper bounded by $\frac{\sup_{i\in \N} v_i^2}{n-k}$.
Therefore, using for example the Markov inequality, we have $P(\frac{\left|\sum_{i=k+1}^{n} M_i\right|}{n}\geq \frac{1}{2n^{1/4}})\to 0$ as $n$ increases.
It follows the thesis.
\end{proof}

\begin{proof}[Proof of Theorem \ref{theorem:limit_hypercube}]
Let $k$ be fixed and denote by $A^{(n)}(\bx_{1:k},\by_{1:k})$ the jumping rates of $S^{(n)}_{1:k}$ from $\bx_{1:k}$ to $\by_{1:k}$.
By construction, for any $\by_{1:k}\neq \bx_{1:k}$ and $\by_{1:k}\notin N\big(\bx_{1:k}\big)$, it holds $A^{(n)}(\bx_{1:k},\by_{1:k})=0$.
Also, it is easy to see that, if $x_i=0$, it holds
\begin{equation*}
A^{(n)}(\bx_{1:k},\bx_{1:k}+\e^{(i)}_{1:k})
\;=\;
\frac{v_i(1-p_i)}{\frac{Z^{(n)}_g(\bx_{1:n})}{n}}
\left(c_i\wedge\left((1-c_i)
\frac{Z^{(n)}_g(\bx_{1:n})}{Z^{(n)}_g(\bx_{1:n}+\e^{(i)}_{1:n})}
\right)\right)\,,
\end{equation*}
while if $x_i=1$
\begin{equation*}
A^{(n)}(\bx_{1:k},\bx_{1:k}-\e^{(i)}_{1:k})
\;=\;
\frac{v_ip_i}{\frac{Z^{(n)}_g(\bx_{1:n})}{n}}
\left((1-c_i)\wedge\left(c_i
\frac{Z^{(n)}_g(\bx_{1:n})}{Z^{(n)}_g(\bx_{1:n}-\e^{(i)}_{1:n})}
\right)\right)\,.
\end{equation*}
Note that $S^{(n)}_{1:k}$ is not a Markov process and indeed the jumping rates 
depend also on the last $(n-k)$ components $(x_{k+1},\dots,x_{n})$.
We now show that, given $R_n^{(k)}$ as in Lemma \ref{lemma:core}, it holds
\begin{equation}\label{eq:sup_core}
\sup_{\bx_{1:n}\in\{0,1\}^k\times R_n^{(k)}\,,\,\by_{1:k}\in\{0,1\}^k}|A^{(n)}(\bx_{1:k},\by_{1:k})-A(\bx_{1:k},\by_{1:k})|
 \quad\stackrel{n\rightarrow\infty}\longrightarrow\quad 0\,.
\end{equation}
Equation \eqref{eq:sup_core}, together with $\lim_{n\rightarrow\infty}\pi^{(n)}(\{0,1\}^k\times R^{(k)}_n)= 1$ from Lemma \ref{lemma:core}, implies that that $S^{(n)}_{1:k} \stackrel{n\rightarrow\infty}\Longrightarrow S_{1:k}$, using Corollary 8.7 from \citealp[Ch.4]{Ethier1986}.
Suppose first $x_i=0$. In this case it holds
\begin{multline*}
|A^{(n)}(\bx_{1:k},\bx_{1:k}+\e^{(i)}_{1:k})-A(\bx_{1:k},\bx_{1:k}+\e^{(i)}_{1:k})|
\;=\\
v_i(1-p_i)\left|
\frac{1}{\frac{Z^{(n)}_g(\bx_{1:n})}{n}}
\left(c_i\wedge\left((1-c_i)
\frac{Z^{(n)}_g(\bx_{1:n})}{Z^{(n)}_g(\bx_{1:n}+\e^{(i)}_{1:n})}
\right)\right)
-
\frac{c_i\wedge(1-c_i)}{\bar{Z}}
\right|\,.
\end{multline*}
Adding and subtracting $\frac{(c_i\wedge(1-c_i))}{Z^{(n)}_g(\bx_{1:n})/n}$, and using $(1-p_i)\leq 1$, the latter expression is upper bounded by 
\begin{multline}\label{eq:A_bound}
v_i\Bigg(
\frac{1}{\frac{Z^{(n)}_g(\bx_{1:n})}{n}}
\left|
\left(c_i\wedge\left((1-c_i)
\frac{Z^{(n)}_g(\bx_{1:n})}{Z^{(n)}_g(\bx_{1:n}+\e^{(i)}_{1:n})}
\right)\right)
-
\left(c_i\wedge(1-c_i)\right)
\right|
+\\
\left|\frac{(c_i\wedge(1-c_i))}{\frac{Z^{(n)}_g(\bx_{1:n})}{n}}-\frac{(c_i\wedge(1-c_i))}{\bar{Z}}\right|
\Bigg)\,.
\end{multline}
Defining $\alpha_n=\frac{1}{n^{1/4}}+\left|\E\left[\frac{Z^{(n)}_g(\textbf{X}_{1:n})}{n}\right]-\bar{Z}\right|$, so that $\left|\frac{Z^{(n)}_g(\bx_{1:n})}{n}-\bar{Z}\right|
\;\leq\;
\alpha_n$, the expression in \eqref{eq:A_bound} is in turn bounded by
\begin{equation}\label{eq:bound_for_A}
(\sup_{i\in\N}v_i)\left(
\frac{1}{\bar{Z}-\alpha_n}
\left|\frac{2\alpha_n}{\bar{Z}-\alpha_n}\right|
+
\left|\frac{\alpha_n}{\bar{Z}\,(\bar{Z}-\alpha_n|)}\right|
\right)\,.
\end{equation}
The case $x_i=1$ is analogous.
The expression in \eqref{eq:bound_for_A} does not depend on $\x_{1:n}$ and converges to 0 as $n$ increases because $\lim_{n\to \infty}\alpha_n= 0$. Thus \eqref{eq:sup_core} holds as desired and $S^{(n)}_{1:k} \stackrel{n\rightarrow\infty}\Longrightarrow S_{1:k}$.
\end{proof}

\subsubsection{Some references for mixing time of product chains}\label{appendix:product_chains}
For each $k\in\N$, $S_{1:k}$ is a continuous time Markov chain with independent components. 
Such chains are often called product chains and received considerable attention, for example in the context of the \emph{cutoff} phenomenon (see e.g. \citet[Thm.2.9]{Diaconis1996} or \citet[Ch.20.4]{Levin2009} and references therein). 
In particular \citet[Thm.20.7]{Levin2009}, \citet[Prop. 7]{Barrera2006} and \citet[Cor. 4.3]{Bon2001} provide results concerning the mixing time of $\{S_{1:k}\}_{k=1}^{\infty}$.
Such results tell us that, in the case of a sequence of independent binary Markov processes like $\{S_{1:k}\}_{k=1}^{\infty}$, the asymptotic mixing time depends on the flipping rates of the worst components (provided they are a non-negligible quantity), and in particular in our case the mixing time is minimized by maximizing the quantity $\bar{Z}(\textbf{v})^{-1}\liminf_{i\rightarrow\infty} v_i$ (see Theorem \ref{theorem:limit_hypercube} for the definition of $\bar{Z}(\textbf{v})$ and \citet{Barrera2006} and \citet{Bon2001} for the precise assumptions on the flipping rates).
It can be seen that the latter quantity is maximized by choosing $v_i$ to be constant over $i$, meaning $v_i=\bar{v}$  for any $i\in\N$ for some $\bar{v}>0$.

\newpage
\section{Supplement for the simulation studies}\label{appendix:simulations}
\subsection{Supplement to Section \ref{sec:sampling_matchings}}\label{supp:matchings}
Figure \ref{fig:traceplots_iid_all} provides additional tracepots, acceptance rates and effective sample sizes related to the simulation study in Section \ref{sec:sampling_matchings}.
\begin{figure}[h!]
\includegraphics[width=0.98\linewidth]{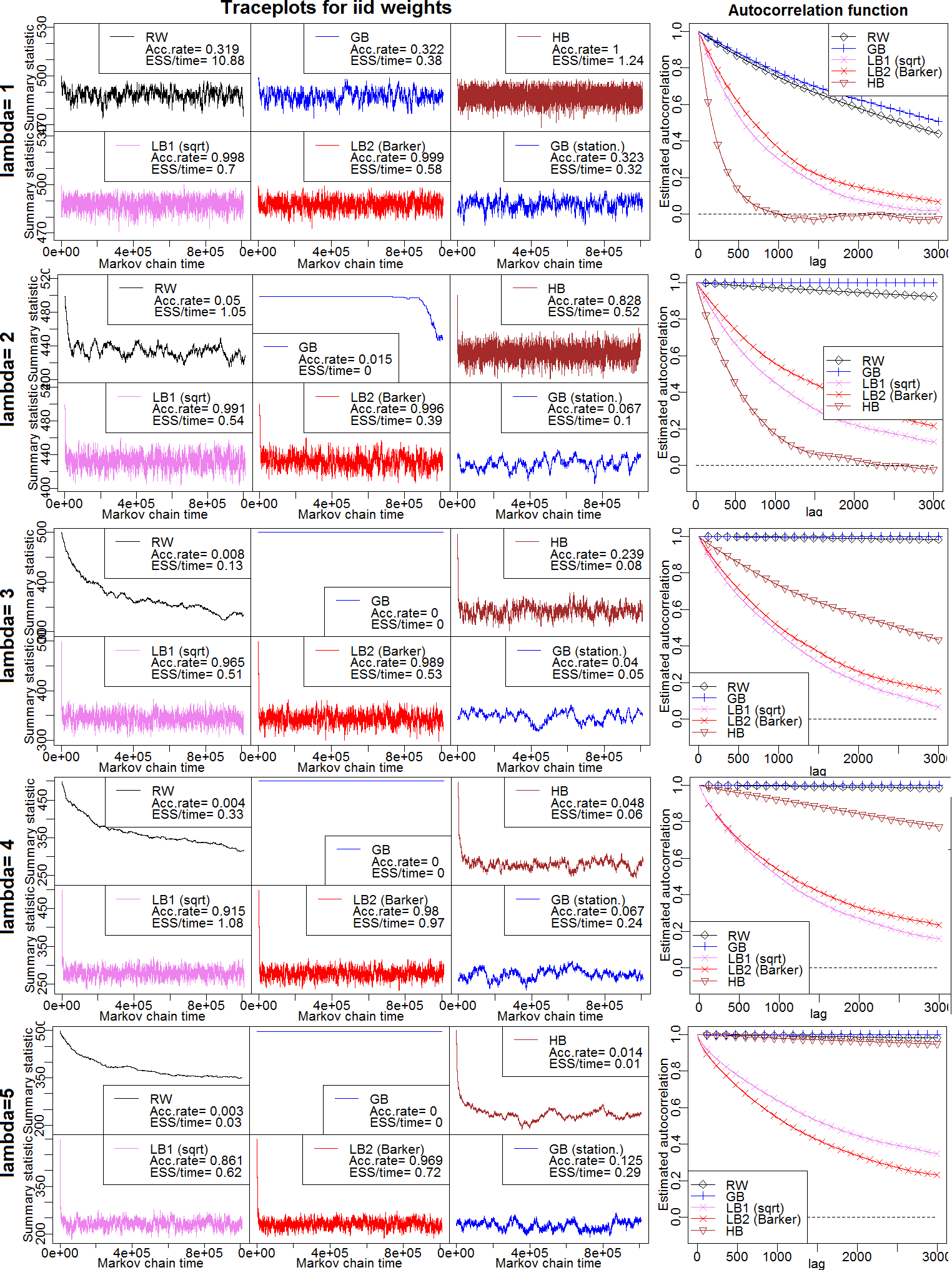}
\caption{Performances for the five MCMC of Section \ref{sec:sampling_matchings} with targets as in Example \ref{ex:perfect_matchings}, $n=500$ and $\log (w_{ij})\stackrel{iid}\sim N(0,\lambda^2)$.
Left: traceplots of a summary statistic (Hamming distance from fixed permutation). 
Right: estimated autocorrelation functions.}\label{fig:traceplots_iid_all}
\end{figure}
Considering effective sample sizes per unit of computation time as measure of efficiency, it can be seen that the uninformed RW proposal performs best for flatter targets (i.e.\ small values of $\lambda$), while as the roughness of the target increases (i.e.\ larger values of $\lambda$) locally-balanced schemes strongly outperforms other schemes under consideration. Note that HB performs slightly better than locally-balanced schemes for flatter targets, but collapses as the target becomes more challenging.

\subsection{Supplement to Section \ref{sec:sampling_ising}}\label{supp:Ising}
In Section \ref{sec:sampling_ising} we tested the MCMC schemes under consideration using Ising model distributions as a benchmark.
In particular we considered distributions motivated by Bayesian image analysis.
In this context the variables $x_i$ represent the classification of pixel $i$ as ``object'' ($x_i=1$) or ``background'' ($x_i=-1$). In a more general multiple-objects context one would use the Potts model, which is the direct generalization of the Ising model to variables $x_i$'s taking values in $\{0,1,\dots,k\}$ for general $k$.
Given the allocation variables $\{x_i\}_{i\in V_n}$, the observed image's pixels $\{y_i\}_{i\in V_n}$ are conditionaly independent and follow a mixture model
$p(y_i|\{x_j\}_{j\in V_n},\{y_j\}_{j\in V_n,j\neq i})=p(y_i|x_i)$ where $p(y_i|x_i=-1)$ is a distribution specific to background pixels and $p(y_i|x_i=1)$ is specific to object's ones.
For example $p(y_i|x_i=-1)$ and $p(y_i|x_i=1)$ could be Gaussian distributions with parameters $(\mu_{-1},\sigma^2_{-1})$ and $(\mu_{1},\sigma^2_{1})$ respectively (see e.g. \cite{Moores2015application}).
The allocation variables $\{x_i\}_{i\in V_n}$ are given an Ising-type prior $p(\{x_i\}_{i\in V_n})\propto\exp(\lambda\sum_{(i,j)\in E_n}x_ix_j)$ to induce positive correlation among neighboring $x_i$'s.
It is easy to see that the resulting posterior distribution $p(\{x_i\}_{i\in V_n}|\{y_i\}_{i\in V_n})$ follows an Ising model as in \eqref{eq:ising_model} with biasing terms $\alpha_i$ given by $\frac{1}{2}\log\left(\frac{p(y_i|x_i=1)}{p(y_i|x_i=-1)}\right)$.

In Section \ref{sec:sampling_ising} we considered $n\times n$ grids, with $n$ ranging from $20$ to $1000$, and five levels of ``concentration'' for the target distribution.
The latter are obtained by considering an increasing spatial correlation ($\lambda=0,0.5,1,1,1$) and an increasingly informative external field $\{\alpha_i\}_{i\in V_n}$.
\begin{figure}[h!]
\includegraphics[width=\linewidth]{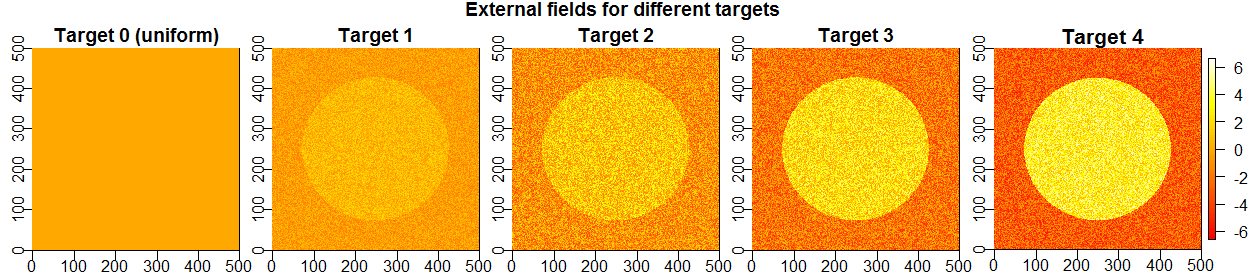}
\caption{External fileds $\{\alpha_i\}_{i\in V_n}$ for the 5 different targets considered in Section \ref{sec:sampling_ising}. Here $n=500$.}\label{fig:ising_fields}
\end{figure}
The external fields are illustrated in Figure \ref{fig:ising_fields}.
The object is at the center of the grid with a circular shape.
If $i$ is an object pixel we set $\alpha_i=\mu+Z_i$, while if it a background pixel we set $\alpha_i=-\mu+Z_i$, where $Z_i$ are iid $Unif(-\sigma,\sigma)$ noise terms.
For Target 0 up to Target 4 we considered $\mu=0,0.5,1,2,3$ and $\sigma=0,1.5,3,3,3$.
Note that Target 0 coincides with a uniform distribution on $\{-1,1\}^{n^2}$, i.e.\ $n^2$ i.i.d.\ Bernoulli random variables, where all schemes collapse to the same tranistion kernel.
Figures \ref{fig:traceplots_Ising_all} report traceplots, acceptance rates, effective sample sizes per unit time and autocorrelation functions for the five MCMC schemes for Targets 1-4. 
\begin{figure}[h!]
\includegraphics[width=\linewidth]{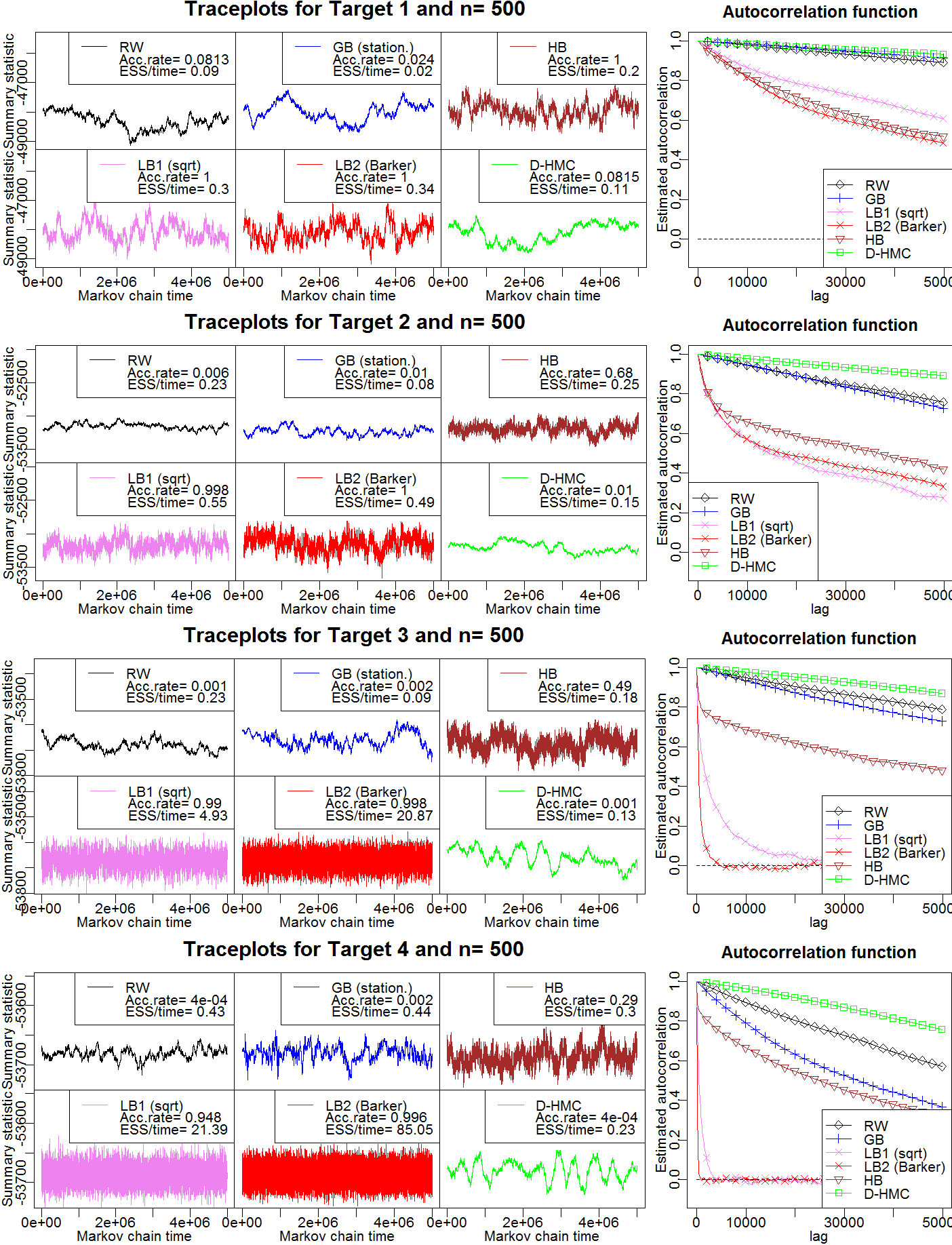}
\caption{Performances for the six MCMC under comparison in Section \ref{sec:sampling_ising} with target distributions described in Section \ref{supp:Ising}.
For D-HMC we are plotting the whole trajectory, including the path during the integration period.
Left: traceplots of a summary statistic (Hamming distance from fixed permutation). 
Right: estimated autocorrelation functions.
}\label{fig:traceplots_Ising_all}
\end{figure}

\newpage
\section{Supplement for the Record Linkage application}\label{appendix:record_linkage}
In this Supplement we describe in more details the Bayesian model for record linkage used in Section \ref{sec:record_linkage}.
The prior distribution follows mostly \cite{Miller2015} and \cite{Zanella2016NIPS}, while the likelihood distribution is similar to, e.g.\, \cite{CopasHilton1990} or \cite{SteortsHallFienberg2016}.
The parameter of interest is the matching $\textbf{M}=(M_1,\dots,M_{n_1})$, while the data are $\x=(x_1,\dots,x_{n_1})$ and $\y=(y_1,\dots,y_{n_2})$.
\subsection{Prior distribution on the matching structure}
We first specify a prior distribution for $\textbf{M}$, which is the unknown matching between $\{1,\dots,n_1\}$ and $\{1,\dots,n_2\}$.
First, we assume the prior distribution of $\textbf{M}$ to be invariant with respect to permutations of $\{1,\dots,n_1\}$ and $\{1,\dots,n_2\}$.
This is a standard a priori assumption which is appropriate when the ordering of $\x$ and $\y$ is expected to carry no specific meaning.
We then assume that the total number of entities in $\textbf{M}$ 
(i.e.\ number of matched pairs plus number of singletons) follows a Poisson distribution a priori with unknown mean $\lambda$.
Given the number of entities, each of them either generates two records (one in $\x$ and one in $\y$) with probability $p_{match}$ or otherwise generates a single record (which is assigned either to $\x$ or to $\y$ with equal probability).
Both $\lambda$ and $p_{match}$ are treated as unknown hyperparameters.
Such prior distribution for $\textbf{M}$ can be seen as a simple, bipartite version of the more general priors for Bayesian record linkage discussed in \cite{Zanella2016NIPS}.
Note that also the sizes of $\x$ and $\y$, which we denote by $N_x$ and $N_y$ respectively, are random objects a priori to be conditioned on the observed values $n_x$ and $n_y$.
The resulting prior distribution for $\textbf{M}$ given $\lambda$ and $p_{match}$ is
\begin{align}\label{eq:prior_for_M}
P(\textbf{M}|\lambda,p_{match})&=
\frac{e^{-\lambda}\lambda^{N_x+N_y-N_m}}{N_x!N_y!}\left(\frac{1-p_{match}}{2}\right)^{N_x+N_y-2N_m}p_{match}^{N_m}\,,
\end{align}
where  $N_m$ denotes the number of matches in $\textbf{M}$.
\begin{proof}[Derivation of \eqref{eq:prior_for_M}]
First decompose the distribution of $\textbf{M}$ as
\begin{align*}
P(\textbf{M}|\lambda,p_{match})&=
P(\textbf{M}|N_m,N_x,N_y)
P(N_m,N_x,N_y|\lambda,p_{match})\,.
\end{align*}
By the exchangeability assumption, $P(\textbf{M}|N_m,N_x,N_y)$ is simply one over the number or partial matchings between $N_x$ and $N_y$ indices with $N_m$ matched pairs, i.e. 
$$
P(\textbf{M}|N_m,N_x,N_y)
=
{N_x\choose N_m}^{-1}{N_y\choose N_m}^{-1}\frac{1}{N_m!}
=
\frac{N_m!(N_x-N_m)!(N_y-N_m)!}{N_x!N_y!}\,.
$$ 
The expression for $P(N_m,N_x,N_y|\lambda,p_{match})$ can be easily computed as
$$
P(N_m,N_x,N_y|\lambda,p_{match})
=
\frac{e^{-\lambda}\lambda^{N_x+N_y-N_m} \left(\frac{1-p_{match}}{2}\right)^{N_x+N_y-2N_m}p_{match}^{N_m}}{(N_x-N_m)!(N_y-N_m)!N_m!}\,.
$$ 
noting that, given $\lambda$ and $p_{match}$, the three random variables $N_m$, $N_x-N_m$ and $N_y-N_m$ follow independent Poisson distributions with intensities $\lambda p_{match}$ for the first and $\lambda \left(\frac{1-p_{match}}{2}\right)$ for the last two.
Combining the previous expressions we obtain \eqref{eq:prior_for_M}.
\end{proof}
For the purpose of the simulation studies in Section \ref{sec:record_linkage} 
we put independent, weakly informative priors on $p_{match}$ and $\lambda$, such as $\lambda\sim Unif([N_1\vee N_2,N_1+N_2])$ and $p_{match}\sim Unif([0,1])$.

\subsection{Likelihood distribution}
The distribution of $(\x,\y)|\textbf{M}$ follows a discrete spike and slab model, also known as hit and miss model in the record linkage literature \citep{CopasHilton1990}.
The observed data consist of two lists $\x=(x_i)_{i=1}^{n_x}$ and $\y=(y_j)_{j=1}^{n_y}$, where each element $x_i$ and $y_j$ contains $\ell$ fields, i.e.\ $x_i=(x_{is})_{s=1}^\ell$ and $y_j=(y_{js})_{s=1}^\ell$.
Each field $s\in\{1,\dots,\ell\}$ has a specific number of categories $m_s$ and density vector $\btheta_s=(\theta_{sp})_{p=1}^{m_s}$.
In real data applications, we assume $\btheta_s$ to be known and estimate it with the empirical distribution of the observed values of the $s$-th field.
This is a standard empirical bayes procedure typically used for this kind of models (see e.g.\ \cite{Zanella2016NIPS}).

%

Entities are assumed to be conditionally independent given the matching $\textbf{M}$.
If a datapoint, say $x_i$, is a singleton, then each field value $x_{is}$ is drawn from $\btheta_{s}$ 
\begin{align*}
\qquad\qquad
&&x_{is}\sim\btheta_{s}&&s\in\{1,\dots,\ell\}\,,
\end{align*}
and similarly for singletons $y_j$.
If instead $x_i$ and  $y_j$ are matched to each other, first a common value $v_s$ is drawn from $\btheta_{s}$ for each field $s$ and then, given $v_s$, $x_{is}$ and $y_{js}$ are independently sampled from a mixture of  $\btheta_{s}$ and a spike in $v_s$
\begin{align*}
&&x_{is},y_{js}|v_s\stackrel{iid}\sim \beta \btheta_{s}+(1-\beta)\delta_{v_s}&&s\in\{1,\dots,\ell\}\,,
\end{align*}
where $\beta$ is a distortion probability in $(0,1)$.
In principle one would like to include the distortion probability $\beta$ in the Bayesian model as an unknown hyperparameter.
However, the latter is a difficult parameter to estimate in this contexts and previous Bayesian works in this context required very strong prior information (see e.g.\ the discussion in \cite{Steorts2015EB}).
For simplicity, in our context we assume $\beta$ to be known and we set it to $0.001$ which is a realistic value for the Italy dataset according to the discussion in \cite{Steorts2015EB}.
We ran simulations for different values of $\beta$ (e.g. $0.01$, $0.005$, $0.002$) and the results regarding relative efficiency of the algorithms under consideration were consistent with the ones reported in Section \ref{sec:record_linkage}. 
Finally, we obtain the following expression for the likelihood function
 \begin{multline}\label{eq:record_linkage_likelihood}
P(\x,\y|\textbf{M})
=
\left(\prod_{s=1}^\ell
\left(\prod_{i=1}^{n_x}\theta_{s x_{is}}\right)
\left(\prod_{j=1}^{n_y}\theta_{s y_{js}}\right)\right)\\
\prod_{s=1}^\ell
\prod_{i\,:\,M_i>0}
\left(
\beta\left(2-\beta\right)
+
\frac{(1-\beta)^2}{\theta_{s x_{is}}}
\mathbbm{1}(x_{is}=y_{M_is})
\right)
\,,
\end{multline}
where $\{i\,:\,M_i>0\}$ is the set of indices $i\in\{1,\dots,n_x\}$ that are matched to some $j\in\{1,\dots,n_y\}$.
\begin{proof}[Derivation of \eqref{eq:record_linkage_likelihood}]
The conditional independence over fields and datapoints that are not matched implies that $P(\x,\y|\textbf{M})$ can be factorized as
\begin{align*}
\prod_{s=1}^\ell\left(
\left(
\prod_{i=1}^{n_x}\theta_{sx_{is}}
\right)
\left(
\prod_{j=1}^{n_y}\theta_{sy_{js}}
\right)
\left(\prod_{i\,:\,M_i>0}\frac{P(x_{js},y_{M_is}|\textbf{M})}{\theta_{sx_{is}}\theta_{sy_{M_is}}}\right)
\right)
\,.
\end{align*}
The term $\frac{P(x_{js},y_{M_is}|\textbf{M})}{\theta_{sx_{is}}\theta_{sy_{M_is}}}$ equals
\begin{align*}
\sum_{p=1}^{m_s}
\theta_{s p}
\frac{\left(\beta\theta_{s x_{is}}+(1-\beta)\mathbbm{1}(x_{is}=p)\right)}{\theta_{sx_{is}}}
\frac{\left(\beta\theta_{s y_{js}}+(1-\beta)\mathbbm{1}(y_{js}=p)\right)}{\theta_{sy_{M_is}}}\,.
\end{align*}
With simple calculations one can see that, if $x_{is}=y_{M_is}$ the expression above equals
$
\beta(2-\beta)+
\frac{(1-\beta)^2}{\theta_{s x_{is}}}
$,
while if $x_{is}\neq y_{M_is}$ it equals
$\beta\left(2-\beta\right)$.
Combining the latter equations we obtain \eqref{eq:record_linkage_likelihood}.
\end{proof}

\subsection{Metropolis-within-Gibbs sampler}\label{appendix:Metropolis_within_Gibbs}
We use a Metropolis-within-Gibbs scheme to sample from the posterior distribution of $(\textbf{M},\lambda,p_{match})|(\x,\y)$.
The two hyperparameters $\lambda$ and $p_{match}$ are conditionally independent given $(\x,\y,\textbf{M})$, with full conditionals given by
\begin{align}
p_{match}|\,\x,\y,\textbf{M}\sim&\;\hbox{Beta}(1+n_x+n_y-2N_m,1+N_m)\,,\label{eq:full_cond_p}\\
\lambda|\,\x,\y,\textbf{M}\sim&\; \hbox{Gamma}_{[\min\{n_x,n_y\},n_x+n_y]}(1+n_x+n_y-N_m,1)\,,\label{eq:full_cond_lambda}
\end{align}
where $N_m$ is the numer of matched pairs in $\textbf{M}$ and $\hbox{Gamma}_{[a,b]}$ denotes a Gamma distribution truncated on the interval $[a,b]$. 
The full conditional distribution of $\textbf{M}$ is proportional to
\begin{multline}\label{eq:full_cond_M}
P(\textbf{M}|\x,\y,\lambda,p_{match})\propto\\
\prod_{i\,:\,M_i>0}
\left(\frac{4p_{match}}{\lambda(1-p_{match})^2}
\prod_{s=1}^\ell
\left(
\beta\left(2-\beta\right)
+
\frac{(1-\beta)^2}{\theta_{s x_{is}}}
\mathbbm{1}(x_{is}=y_{M_is})
\right)\right)
\,.
\end{multline}
The Metropolis-within-Gibbs scheme alternates Gibbs updates for $p_{match}$ and $\lambda$ according to \eqref{eq:full_cond_p} and \eqref{eq:full_cond_lambda} with Metropolis updates for $\textbf{M}$ according to \eqref{eq:full_cond_M}.
The challenging and computationally intense step is updating $\textbf{M}$ in an effective way and that's where we compare different sampling schemes (see Section \ref{sec:record_linkage}).
The four sampling schemes we compare 
 (denoted by RW, GB, LB and HB) are all based on the same base kernel $K(\textbf{M},\textbf{M}^\prime)$, which proceed as follows.
First a couple $(i,j)$ is sampled uniformly at random from $\{1,\dots,n_x\}\times\{1,\dots,n_y\}$.
Then, given $\textbf{M}$ and $(i,j)$, one of the following moves is performed:
\begin{itemize}[topsep=0pt,itemsep=-1ex,noitemsep]
\item Add move: if $M_i=0$ and $M^{-1}_j=0$, set $M_i=j$;
\item Delete move: if $M_i=j$, set $M_i=0$;
\item Single switch move (I): if $M_i=0$ and $M^{-1}_j=i'$ for some $i'\in\{1,\dots, n_x\}$, set $M_i=j$ and $M_{i'}=0$;
\item Single switch move (II): if $M_i=j'$ for some $j'\in\{1,\dots, n_y\}$ and $M^{-1}_j=0$, set $M_i=j$;
\item Double switch move: if $M_i=j'$ for some $j'\in\{1,\dots, n_y\}$ and $M^{-1}_j=i$ for some $i'\in\{1,\dots, n_x\}$, set $M_i=j$ and $M_{i'}=j'$.
\end{itemize}
Here $M^{-1}_j$ is defined as $M^{-1}_j=i$ if $i$ is currently matched with $j$ and $M^{-1}_j=0$ if no index is matched to $j$.

\subsection{Blocking implementation}\label{appendix:blocking}
In this context, sampling from GB, LB and HB has a computational cost that grows with $n_x$ and $n_y$.
When the latter terms are too large, it is more efficient to apply informed schemes to sub-blocks of indices (see discussion in Section \ref{sec:blocking}).
In Section \ref{sec:record_linkage} we used the following block-wise implementation:
\begin{enumerate}[topsep=0pt,itemsep=-1ex,noitemsep]
\item choose a subset of indices $I\subset\{1,\dots,n_x\}$ and a subset $J\subset\{1,\dots,n_y\}$ (details below)
\item remove the indices $i\in I$ that are matched with some $j\notin J$ and, similarly, remove the indices $j\in J$ that are matched with some $i\notin I$ 
\item update the components $(M_i)_{i\in I}$ by performing $k$ iterations of the MCMC kernel under consideration restricting the proposal of couples $(i,j)$ to $I\times J\subseteq \{1,\dots,n_x\}\times\{1,\dots,n_y\}$.
\end{enumerate}
Step 2 is needed to avoid moves that would involve indices outside $I\times J$. It is easy to check that, if the MCMC used on the $I\times J$ subspace is invariant with respect to $\prod_{i\in I}w_{iM_i}$, then the whole chain is invariant with respect to  $\prod_{i=1}^{n_x}w_{iM_i}$.
There are many valid ways to choose the subsets $I$ and $J$ in Step 1 (both randomly and deterministically).
In our implementation in Section \ref{sec:record_linkage} we set the size of $I$ and $J$ to 300.
In order to favour couples having a non-negligible probability of being matched we alternated a uniformly at random selection of $I$ and $J$ (which ensures irreducibility) with the following procedure:
sample an index $i_0\in\{1,\dots,n_x\}$ and three features $\{s_1,s_2,s_3\}\subseteq\{1,\dots,\ell\}$ uniformly at random; define $I$ and $J$ as the set of $i$'s and $j$'s such that $x_i$ and $y_j$ respectively agree with $x_{i_0}$ on the three selected features;
if the size of $I$ or $J$ exceeds 300, reduce it to 300 with uniform at random subsampling.
We claim no optimality for the latter procedure and defer the detailed study of optimal block-wise implementations to further work (see Section \ref{sec:discussion}).
\end{document}